\newtheorem{theorem}{Theorem}
\newtheorem{lemma}{Lemma}
\newtheorem{definition}{Definition}
\newtheorem{corollary}{Corollary}
\newcommand{\R}{\mathbb{R}}
\def \R    {\mathbb{R}}    
\def \P    {\mathbb{P}}    
\def \SO   {\mbox{SO}}     
\def \Rot  {\mathscr{R}}   
\def \Cl   {\mathrm{C}\ell}
\def \e    {\mathbf{e}}
\DeclareMathOperator{\Pin}{Pin}
\DeclareMathOperator{\Spin}{Spin}
\def \S    {\mathbf{S}}    
\def \GG   {\mathcal{G}}   
\begin{document}


\title[Parameterization of configuration space obstacles]{Parameterization of configuration space obstacles in three-dimensional rotational motion planning}

\author{Przemysław Dobrowolski}
\address{Faculty of Mathematics and Information Science \\ Warsaw University of Technology, Poland}

\begin{abstract}
This study investigates the exact geometry of the configuration space in three-dimensional rotational motion planning. A parameterization of configuration space obstacles is derived for a given triangulated or ball-approximated scene with an object rotating around a given point. The results obtained are an important step towards a complete description of the rich structure of a configuration space in three-dimensional rotational motion planning.
\end{abstract}



\maketitle


\section{Introduction}

There are two major approaches for motion planning: heuristic methods and combinatorial methods, which are also called exact methods. The first are considered practical and fast, whereas the second are usually slower and much more difficult to design and implement. Due to ongoing performance improvements, especially in the area of parallel computation on graphics cards (CUDA and OpenCL), it is becoming more feasible to implement exact motion planners. These motion planners have clear advantages because they are complete, they can handle narrow passages correctly in a configuration space, and they can handle queries about the non-existence of a motion path. Heuristic methods have none of these advantages. Exact motion planners are known for translation-only movements and they have been implemented successfully. In two dimensions, examples include algorithms for convex polygons (\cite{agarwal1999motion} and \cite{journals/dcg/LevenS87a}), simple segments (\cite{LevenSharirEfficientSimpleMotion}), L-shaped objects (\cite{journals/siamcomp/HalperinOS92}), and general polygons (\cite{Guibas1989} and \cite{journals/dcg/HalperinS96}). For three-dimensional motion planning, the Minkowski sum can be used to compute the configuration space, as described by Varadhan et al. \cite{Varadhan3D}. In a more recent study by Asano et al. \cite{AsanoHN02} the authors consider the problem of calculating effecitely the Minkonwski sum for convex polyhedra. A complete survey on the Minkowski sum computation for polyhedra was presented by Weibel \cite{WeibelPhD} in 2007 in his PhD thesis. Exact motion planners with a rotational component are rare. The most important is the algorithm presented by Avnaim et al. \cite{avnaim1988practical} for planning the movement of an arbitrary polygon amidst polygonal obstacles, including both translations and rotations.  Avnaim et al. \cite{avnaim1988practical} also described the parameterization of configuration space obstacles in the case of two-dimensional movement with translations and rotations. Few algorithms have been presented for three dimensions. A translational and rotational motion planning method in three dimensions was presented by Koltun \cite{koltun2005pianos} for a simple segment or a cigar-like object, and it was suggested that this algorithm may be generalized to circles, cylinders, and other general bodies of revolution. Unfortunately, no studies have continued these ideas. General motion planning algorithms also exist such as the Canny algorithm \cite{Can93} and Collins' algorithm \cite{Col75}, but these are only theoretical considerations because they are impractical and they have never been implemented.

The present study is based on the initial theory for three-dimensional rotational motion planners introduced by \cite{phd_dobrowolski}. Assumptions regarding the possible scenes are taken from this previous study. The possible scenes include a triangulated rotating mesh among triangulated obstacle meshes and a ball approximated object in a scene comprising ball-approximated obstacles. It is assumed that an object is always rotating around zero.

The parameterization derived in this study is exact in the selected arithmetic. It does not include an iterative method or approximations. The results are approximate if a floating-point based arithmetic is used. Alternatively, we may use exact numbers with arbitrary precision to obtain mathematically correct results. In this case, it is assumed that $R$ is a unital ring used as a base number format to represent both the scene and configuration space. In practical implementations, $R$ is a ring of integer numbers $Z$, or a field of rational numbers $Q$, depending on which of the two is more effective for describing a given scene. Note that in rotational motion planning, every scene described in $Q$ can be scaled to a scene described in $Z$ but the magnitude of the co-ordinates may grow significantly. When an exact arithmetic is considered, $Z$ and $Q$ are usually implemented with the MPZ and MPQ types from the GMP library. All of the input scalars, vectors, and spinors used in this study are defined in the ring $R$; in particular, all the coordinates of the base vectors $K$, $L$, $A$, $B$ and scalar $c$ are defined in $R$.

All of the results presented in this study have been implemented and they are publicly available from the library \emph{libcs2} \cite{libcs2}. Using the parameterization, it is possible to create an accurate visualization of the configuration space contents and to analyse topological aspects of the configuration space in rotational motion planning. Other possible uses are described in the conclusions of this study.

In this section, we provide a number of definitions and we describe existing results in rotational motion planning. In \cite{phd_dobrowolski}, the author introduced a theory that provides a complete description of the configuration space of a rotating body, where it was shown that both triangulated and ball-only scenes with rotating bodies can be described by one type of predicate: a general (rotational) predicate. The main idea presented in \cite{phd_dobrowolski} is as follows. We describe a triangulated or a ball-approximated scene with a set of collision predicates together with a boolean formula which evaluates true or false if the rotating object collides with the scene or not. The formula consists of all pairs of geometric objects allowed to collide: one from the rotating object and one from the stationary scene.
In the next step, the geometric predicates are identified with surfaces representing them in a configuration space. A single ball-ball collision intorduces 1 surface while a triangle-triangle collision introduces 9 surfaces in the configuration spasce. The set of the surfaces in the configuration space defines an arrangement which is usually quite complicated. The analysis of such arrangements has began with the study presented in \cite{phd_dobrowolski}.

It is assumed that $\Rot_s(v) = s v s^{-1}$ for $s \in \Spin(3)$ is a rotation formula in Clifford algebra, which transforms an input vector $v$ (written in terms of Clifford algebra) into a rotated vector $s v s^{-1}$, according to a rotation represented by a spinor $s$. This is a close equivalent to the quaternion rotation formula $q v q^*$ in quaternion algebra. An isomorphism exists between the spinors and quaternions, which allows all of the results to be transformed into quaternion algebra if necessary.

In \cite{phd_dobrowolski}, it was shown that a triangulated or ball-only scene can be described by a set of general predicates, as follows.
\begin{definition}[A general predicate $\GG$]
Assume that $K$ and $L$ are the ends of a stationary segment, $A$ and $B$ are the ends of a rotating segment, and $c$ is a scalar. The formula:
\begin{equation*}
\GG_s(K, L, A, B, c) = (K \times L) \cdot \Rot_s(A - B) + (K - L) \cdot \Rot_s(A \times B) + c
\end{equation*}
is called \textbf{a general predicate}.
\end{definition}
A major result given in \cite{phd_dobrowolski} was a proof that the problem of computing a configuration space in the case of three-dimensional rotational motions is equivalent to computing an exact arrangement of quadratic surfaces in projective space $\P^3$. Although this provides a tool for computing the configuration space, the only existing algorithm for computing an arrangement of quadrics, which was proposed by Hemmer et al. \cite{journals/jsc/HemmerDPS11}, is computationally expensive and difficult to implement (although it is possible, unlike the Canny \cite{Can93} and Collins' \cite{Col75} algorithms). The arrangement derived in \cite{journals/jsc/HemmerDPS11} includes the parameterization of quadrics intersections and the exact coordinates of intersection graph vertices, which are sufficient for planning a three-dimensional rotational motion. A configuration space is known to be complicated in three-dimensional rotational motion planning in both topological and computational terms. Figure \ref{fig_typical_tt} illustrates an example of a configuration space for a simple scene. The scene comprises a stationary triangle and a triangle allowed to rotate freely in three-dimensional space around the origin. The surfaces represent constraints induced by the collision of the triangles.
\begin{figure}[ht]
\centering
\begin{subfigure}{0.5\textwidth}
  \centering
  \includegraphics[width=0.95\textwidth]{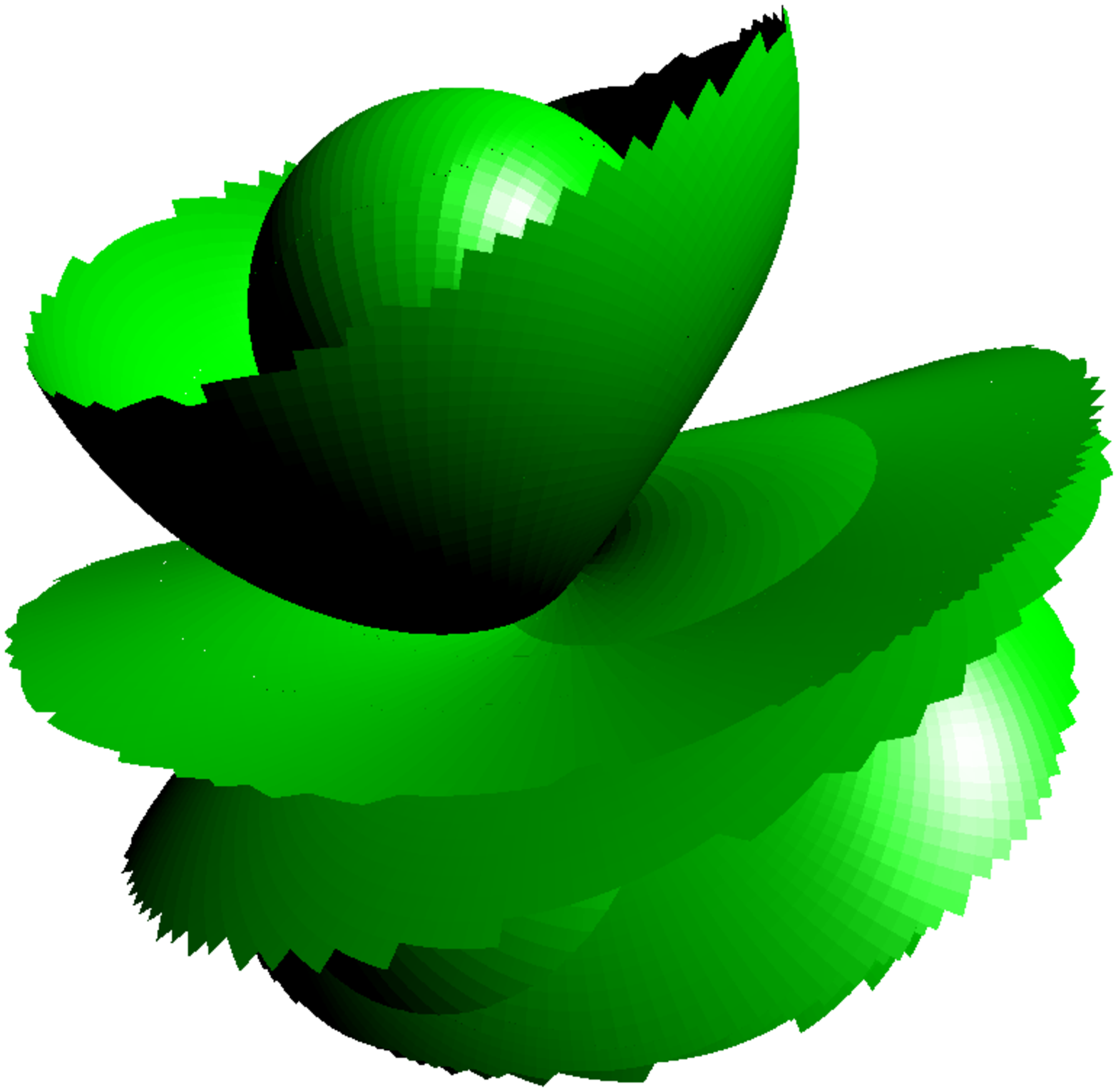}
\end{subfigure}%
\begin{subfigure}{0.5\textwidth}
  \centering
  \includegraphics[width=0.95\textwidth]{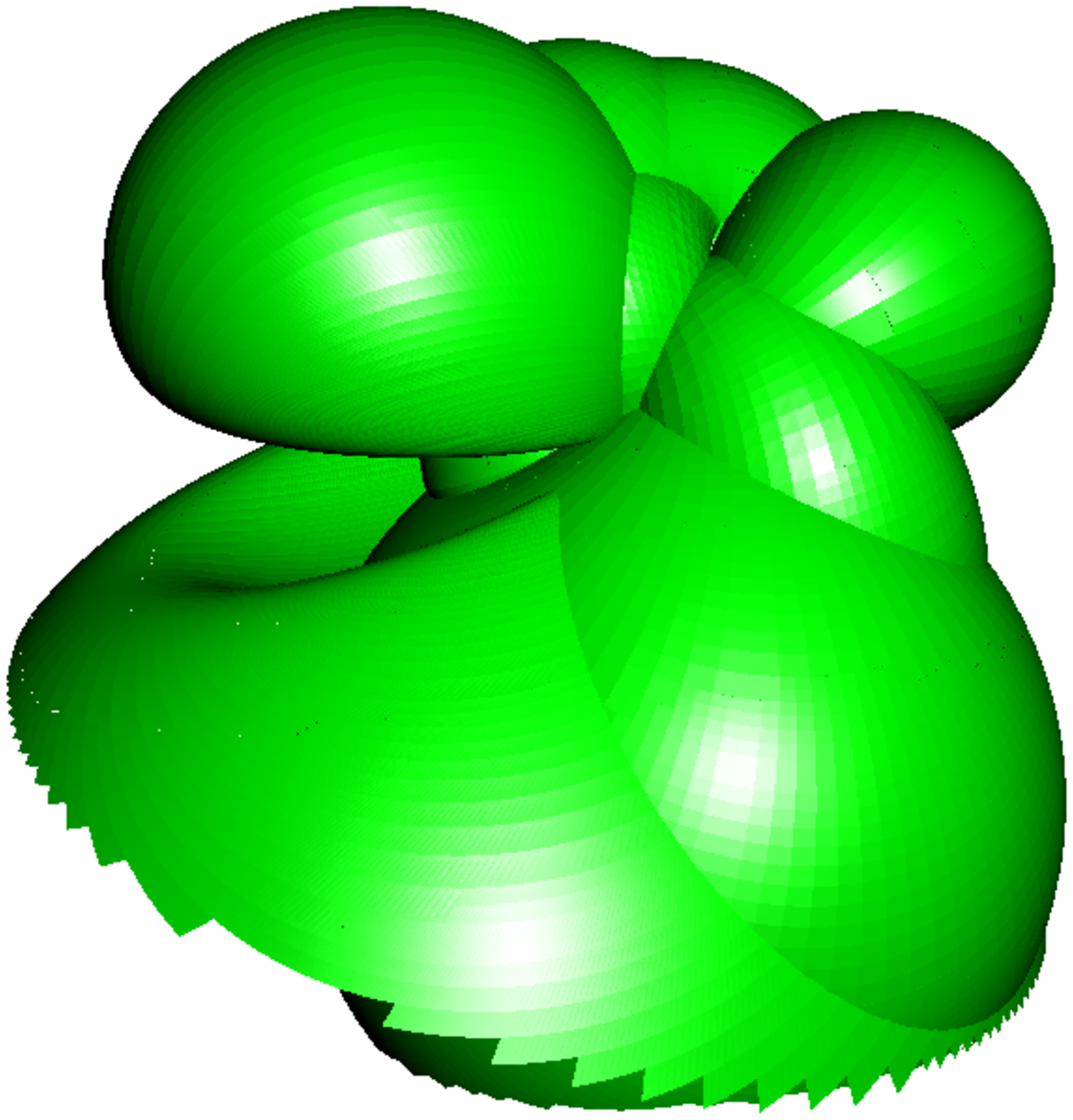}
\end{subfigure}%
\caption{Two examples of the configuration spaces for a simple scene}
\label{fig_typical_tt}
\end{figure}
In general, solutions such as \cite{journals/jsc/HemmerDPS11} are usually slower than the solutions for a specific sub-case of a given problem. Thus, in this study, an alternative method is proposed for deriving an exact parameterization of the configuration space for three-dimensional rotational motion. The idea is to derive a parameterization directly for the surface represented by a general predicate, unlike the algorithm \cite{journals/jsc/HemmerDPS11}, which gives a parameterization for any surface.

In this study, it is assumed that $\GG_s(K, L, A, B, c)$ is a general predicate over a given ring $R$. A general predicate $\GG_s$ in $\Spin(3)$ configuration space induces a quadratic surface $S_\GG(s)$, called \emph{a spin-quadric}. It was proved that a spin-quadric $S_\GG(s)$ is a quadratic form in $\Spin(3)$. Assuming that $s = s_{12} \e_{12} + s_{23} \e_{23} + s_{31} \e_{31} + s_0$ is a spinor with coefficients $s_{12}$, $s_{23}$, $s_{31}$, $s_0$, and $\mathbf{s} = [s_{12}, s_{23}, s_{31}, s_0]^T$ is a related vector-like form of spinor $s$, then we have the following.
\begin{theorem}[Dobrowolski, \cite{phd_dobrowolski}] \label{thm_spin_matrix}
A general predicate $\GG(K, L, A, B, c)$ can be reduced to a quadratic form in $\Spin(3)$, which represents a spin-quadric $S_\GG(s)$ that can be expressed by:
\begin{equation*}
S_\GG(s) = \mathbf{s}^T M_\GG \mathbf{s} = 0
\end{equation*}
with the matrix
\begin{equation*}
M_\GG =
\left[ \begin{smallmatrix}
a_{11} & a_{12} & a_{13} & a_{14} \\
a_{12} & a_{22} & a_{23} & a_{24} \\
a_{13} & a_{23} & a_{33} & a_{34} \\
a_{14} & a_{24} & a_{34} & a_{44}
\end{smallmatrix} \right]
\end{equation*}
or equivalently:
\begin{align*}
S_\GG(s) & = a_{11} s_{12}^2 + a_{22} s_{23}^2 + a_{33} s_{31}^2 + a_{44} s_0^2 + \\
& 2 (a_{12} s_{12} s_{23} + a_{13} s_{12} s_{31} + a_{14} s_{12} s_0 + a_{23} s_{23} s_{31} + a_{24} s_{23} s_0 + a_{34} s_{31} s_0),
\end{align*}
where $\mathbf{s} = [s_{12}, s_{23}, s_{31}, s_0]^T$, and $M_\GG$'s elements are linear expressions of $\GG$'s parameters $K$, $L$, $A$, $B$, and $c$:
\begin{align*}
& a_{11} = -p_{xyz} - p_{yzx} + p_{zxy} - q_{xyz} - q_{yzx} + q_{zxy} + c \\
& a_{22} = +p_{xyz} - p_{yzx} - p_{zxy} + q_{xyz} - q_{yzx} - q_{zxy} + c \\
& a_{33} = -p_{xyz} + p_{yzx} - p_{zxy} - q_{xyz} + q_{yzx} - q_{zxy} + c \\
& a_{44} = +p_{xyz} + p_{yzx} + p_{zxy} + q_{xyz} + q_{yzx} + q_{zxy} + c \\
& a_{12} = +p_{xxy} + p_{zyz} + q_{xxy} + q_{zyz} \\
& a_{13} = +p_{yxy} + p_{zzx} + q_{yxy} + q_{zzx} \\
& a_{14} = +p_{xzx} - p_{yyz} - q_{xzx} + q_{yyz} \\
& a_{23} = +p_{xzx} + p_{yyz} + q_{xzx} + q_{yyz} \\
& a_{24} = +p_{yxy} - p_{zzx} - q_{yxy} + q_{zzx} \\
& a_{34} = -p_{xxy} + p_{zyz} + q_{xxy} - q_{zyz},
\end{align*}
where $p$ and $q$ are linear combinations:
\begin{align*}
p_{\alpha \beta \gamma} = (K_\alpha - L_\alpha)(A_\beta B_\gamma - A_\gamma B_\beta) \\
q_{\alpha \beta \gamma} = (A_\alpha - B_\alpha)(K_\beta L_\gamma - K_\gamma L_\beta).
\end{align*}
\end{theorem}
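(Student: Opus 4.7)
The plan is to reduce everything to an explicit coordinate expansion and then match coefficients. The starting point is to write down the $3\times 3$ Euler--Rodrigues style matrix $M(s)$ such that $\Rot_s(v) = M(s)\,v$ for every vector $v$; this is obtained directly from $s v s^{-1}$ using the multiplication rules $\e_i \e_j = -\e_j \e_i$ and $\e_i^2 = 1$ in $\Cl(3)$. The entries of $M(s)$ are homogeneous quadratic polynomials in the coordinates $(s_{12}, s_{23}, s_{31}, s_0)$ of $\mathbf{s}$, of the familiar form $s_0^2 \pm s_{12}^2 \pm s_{23}^2 \pm s_{31}^2$ on the diagonal and $2(s_i s_j \pm s_0 s_k)$ off the diagonal.

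With this in hand, both summands of $\GG_s$ become bilinear forms $u^{T} M(s)\, v$, where $u,v \in \R^3$ are linear in the input data $K,L,A,B$: for the first summand, $u = K\times L$ and $v = A - B$; for the second, $u = K - L$ and $v = A \times B$. Since each entry of $M(s)$ is quadratic in $\mathbf{s}$, each bilinear form is a quadratic form in $\mathbf{s}$ whose coefficients are bilinear in $u$ and $v$, hence linear in the parameters. Adding the constant $c$, the whole predicate is polynomial of degree at most two in $\mathbf{s}$, so it is a quadratic form $\mathbf{s}^T M_\GG \mathbf{s}$.

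The matching step is where the claimed formulas for $a_{ij}$ emerge. The constant $c$ is homogenised using $s_0^2 + s_{12}^2 + s_{23}^2 + s_{31}^2 = 1$ on $\Spin(3)$, contributing $+c$ to each diagonal $a_{ii}$ exactly as stated. For the remaining contributions I would expand each of the two bilinear forms $u^{T} M(s)\, v$, collect the coefficient of each monomial $s_i s_j$, and observe that after grouping, the coefficients become linear combinations of the triple products $(K_\alpha - L_\alpha)(A_\beta B_\gamma - A_\gamma B_\beta)$ coming from the first summand and $(A_\alpha - B_\alpha)(K_\beta L_\gamma - K_\gamma L_\beta)$ coming from the second. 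These are precisely $p_{\alpha\beta\gamma}$ and $q_{\alpha\beta\gamma}$, so the matching completes the identification.

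The main obstacle is not conceptual but bookkeeping: ten independent entries have to be checked, and every sign depends on the sign conventions for $\Rot_s$, on which basis bivector plays which role, and on the ordering of $\e_{12}, \e_{23}, \e_{31}$. I would organise the verification in two blocks. First, the four diagonal entries $a_{11}, a_{22}, a_{33}, a_{44}$ are read off from the diagonal part of $M(s)$; the symmetry between the two summands explains why each $a_{ii}$ contains both $p$ and $q$ terms with the same sign pattern, modulated only by which coordinate axis is privileged. Second, the six off-diagonal entries $a_{ij}$ come from the cross terms $2(s_i s_j \pm s_0 s_k)$ in $M(s)$; the relative sign between the $p$- and $q$-contributions in each $a_{ij}$ tracks whether the two summands of $\GG_s$ reinforce or cancel for that monomial, which is exactly the alternation of signs seen in the statement (for example, $+p + q$ in $a_{12}$ versus $-p + q$ in $a_{14}$). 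Once the conventions are fixed once and for all, this becomes a routine though lengthy verification.
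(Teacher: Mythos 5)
The paper does not prove Theorem~\ref{thm_spin_matrix} here; it is imported from \cite{phd_dobrowolski} and stated without derivation, so there is no in-paper argument to compare against. Judged on its own merits, your plan is the standard and correct one: write $\Rot_s(v) = M(s)\,v$ with the Euler--Rodrigues matrix whose entries are homogeneous quadratics in $\mathbf{s}$; view each summand of $\GG_s$ as a bilinear form $u^T M(s)\, v$, which is then a quadratic form in $\mathbf{s}$ with coefficients bilinear in $(u,v)$; homogenise $c$ via $s_0^2 + s_{12}^2 + s_{23}^2 + s_{31}^2 = 1$; and match coefficients monomial by monomial. This does yield a symmetric $4\times4$ quadratic form and the entries are linear in the stated triple products.

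One bookkeeping slip is worth flagging. You attribute $(K_\alpha - L_\alpha)(A_\beta B_\gamma - A_\gamma B_\beta)$, i.e.\ $p_{\alpha\beta\gamma}$, to the first summand and $(A_\alpha - B_\alpha)(K_\beta L_\gamma - K_\gamma L_\beta)$, i.e.\ $q_{\alpha\beta\gamma}$, to the second, but the $u,v$ assignments you yourself chose imply the reverse: the first summand $(K\times L)\cdot \Rot_s(A-B)$ has $u = K\times L$ and $v = A - B$, so its coefficients are products of components of $K\times L$ with components of $A-B$, which is $q$, while the second summand $(K-L)\cdot \Rot_s(A\times B)$ produces $p$. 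This is a labelling error, not a flaw in the method, and you would catch it when carrying out the matching; the remaining verification of the ten entries and their signs is, as you say, routine though lengthy.
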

In this study, the following substitution is used
\begin{align*}
& P = K \times L \\
& Q = A - B \\
& U = K - L \\
& V = A \times B ,
\end{align*}
which leads to the corresponding matrix $M_\GG$
\begin{equation*}
\left|
\begin{smallmatrix}
    2 (P_3 Q_3 + U_3 V_3) - T + c & 2 (P_1 Q_3 + U_1 V_3) + R_2 & 2 (P_3 Q_2 + U_3 V_2) + R_1 & R_3 \\
    2 (P_1 Q_3 + U_1 V_3) + R_2 & 2 (P_1 Q_1 + U_1 V_1) - T + c & 2 (P_2 Q_1 + U_2 V_1) + R_3 & R_1 \\
    2 (P_3 Q_2 + U_3 V_2) + R_1 & 2 (P_2 Q_1 + U_2 V_1) + R_3 & 2 (P_2 Q_2 + U_2 V_2) - T + c & R_2 \\
    R_3 & R_1 & R_2 & T + c
\end{smallmatrix}
\right| ,
\end{equation*}
where $R$ and $T$ are defined as
\begin{align} \label{eqn_r_and_t}
R & = P \times Q + U \times V \\
T & = P \cdot Q + U \cdot V \notag .
\end{align}
The proposed theory can avoid using variables the $K$, $L$, $A$, and $B$ completely. Using the notations of $P$, $Q$, $U$, $V$, a general predicate $\GG_s = \GG_s(K, L, A, B, c)$ can have a much simpler form
\begin{equation}
\GG_s = P \cdot \Rot_s(Q) + U \cdot \Rot_s(V) + c .
\end{equation}
After changing to the new notation, the following obvious fact is essential in all reductions.
\begin{align}
P \cdot U & = 0 \label{eqn_p_dot_u_is_zero} \\
Q \cdot V & = 0 \label{eqn_q_dot_v_is_zero}
\end{align}
Following a common convention, a normalized vector $X \in \R^3$ is denoted as $\hat{X}$.
\begin{definition}
Let $X$ be a non-zero vector in $\R^3$. A unit-length vector $\hat{X}$
\begin{equation*}
\hat{X} = \frac{X}{\|X\|}
\end{equation*}
is called \emph{a normalized vector $X$}.
\end{definition}
In \cite{phd_dobrowolski}, the following general theorem was proved regarding spin-surfaces.
\begin{theorem}[Dobrowolski, \cite{phd_dobrowolski}] \label{thm_spin_surface_ellipsoid_hyperboloid}
A spin-quadric is equivalent to the intersection of a rotated three-dimensional ellipsoid or a hyperboloid centred at zero with a unit sphere in the ambient space $\R^4$.
\end{theorem}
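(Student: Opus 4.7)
The plan is to combine the spin-quadric equation with the normalization constraint $\|\mathbf{s}\|=1$ via a one-parameter shift, then diagonalize the resulting symmetric matrix by the spectral theorem.

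First I would observe that on the unit sphere $S^3 \subset \R^4$ the defining equation $\mathbf{s}^T M_\GG \mathbf{s} = 0$ is equivalent, for every $\mu \in \R$, to
\begin{equation*}
\mathbf{s}^T (M_\GG + \mu I)\,\mathbf{s} \;=\; \mu\,\|\mathbf{s}\|^2 \;=\; \mu,
\end{equation*}
because $\mathbf{s}^T (\mu I)\mathbf{s} = \mu \|\mathbf{s}\|^2$. Thus the spin-quadric coincides, as a set, with the intersection of $S^3$ and the level surface $\{\mathbf{s}\in\R^4 : \mathbf{s}^T(M_\GG+\mu I)\mathbf{s}=\mu\}$. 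This reduces the theorem to showing that the latter surface can be made, for an appropriate choice of $\mu$, an ellipsoid or hyperboloid centred at the origin.

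Second, I would apply the spectral theorem to $M_\GG$, which is real and symmetric: there is an orthogonal $Q\in O(4)$ and a diagonal $D=\mathrm{diag}(\lambda_1,\lambda_2,\lambda_3,\lambda_4)$ with $M_\GG = Q D Q^T$. Substituting $\mathbf{t}=Q^T\mathbf{s}$ (which preserves the unit sphere since $Q$ is orthogonal), the shifted equation becomes
\begin{equation*}
\sum_{i=1}^{4}(\lambda_i+\mu)\,t_i^2 \;=\; \mu,
\end{equation*}
which is the equation of a central quadric in the $\mathbf{t}$-frame whose principal axes are aligned with the coordinate axes. Pulling back by $Q$ gives the same quadric rotated in $\R^4$, exactly as the theorem's wording demands.

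Third, I would carry out the signature case analysis to identify when the quadric is an ellipsoid versus a hyperboloid. If $M_\GG$ is definite, then the spin-quadric is empty (the only solution of $\mathbf{s}^T M_\GG \mathbf{s}=0$ is $\mathbf{s}=0$, which does not lie on $S^3$), and the statement is vacuous; this is the degenerate case to watch. Otherwise $M_\GG$ has eigenvalues of both signs, and one can pick $\mu$ so that either (i) all $\lambda_i+\mu$ share the sign of $\mu$, producing an ellipsoid, or (ii) the $\lambda_i+\mu$ take both signs with $\mu\neq 0$, producing a hyperboloid. The main obstacle is precisely this case distinction: one must exhibit, for every indefinite signature of $M_\GG$, an explicit $\mu$ that makes $M_\GG+\mu I$ definite (and $\mu$ of matching sign), thereby realising the ellipsoid description, while any $\mu$ strictly between the smallest and largest eigenvalue realises the hyperboloid description. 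Once this is in place, the theorem follows immediately from the chain of equivalences above.
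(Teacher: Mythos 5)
The paper itself does not prove this theorem; it cites it directly from \cite{phd_dobrowolski}, so there is no in-paper proof to compare against. Your argument is, however, sound and matches the machinery the paper later uses in the parameterization section: after diagonalizing $M_\GG = Q\Lambda Q^T$ and substituting $\mathbf{t}=Q^T\mathbf{s}$, the paper converts $\sum_i \lambda_i t_i^2 = 0$ on $\sum_i t_i^2 = 1$ into $\sum_i \gamma_i t_i^2 = c$ with $\gamma_i = c-\lambda_i$ (equations \eqref{eqn_parameterization_cases} and \eqref{eqn_parameterization_cases_substituted}), which is exactly your level-set identity $\mathbf{s}^T(M_\GG+\mu I)\mathbf{s}=\mu$ specialized to $\mu=-c$. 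Two small points on your write-up. First, your signature trichotomy is not exhaustive: $M_\GG$ can be semi-definite without being definite (some zero eigenvalues, the rest of one sign), which falls into neither of your two bins; this does no harm, because the construction succeeds for every signature --- picking any $\mu > \max(0,\,-\lambda_{\min})$ makes $M_\GG+\mu I$ positive definite with $\mu>0$ and thus always produces an ellipsoid, so the case distinction you flag as ``the main obstacle'' is in fact trivially resolved. Second, the definite case is not a defect in the theorem: when $M_\GG$ is (say) positive definite, the spin-quadric is empty, and for any $\mu>0$ the ellipsoid $\sum_i(\lambda_i+\mu)t_i^2=\mu$ has all semi-axes $\sqrt{\mu/(\lambda_i+\mu)}<1$, so its intersection with $S^3$ is also empty; the equivalence therefore holds outright rather than failing or needing special handling.
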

This study increases our knowledge of the geometry of spin-quadrics, starting with theorem \ref{thm_spin_surface_ellipsoid_hyperboloid}. Further systematic studies of spin-surfaces are possible when the parameterization of a general predicate is available.

\section{Properties of a general predicate}

\subsection{Classification of general predicates}

In the remainder of this study, the following definition is used in order to omit degenerate predicates.
\begin{definition}[A proper general predicate]
A general predicate $\GG$ is called \textbf{proper} iff:
\begin{equation}
\|P\| \|Q\| \ne 0 \, \vee \, \|U\| \|V\| \ne 0 .
\end{equation}
A general predicate is called \textbf{improper} if it is not proper.
\end{definition}
We observe that when a predicate is improper $\|P\| \|Q\| = 0 \, \wedge \, \|U\| \|V\| = 0$, it is equal to:
\begin{equation*}
\GG_s = P \cdot \Rot_s(Q) + (U) \cdot \Rot_s(V) + c = c .
\end{equation*}
The allowed, forbidden, and border sets of rotations for such a predicate are empty or the whole space depending on the sign of $c$. In practical usages, improper predicates are not useful, and thus they are usually discarded.

Next, the following definition is introduced to distinguish two different types of a proper general predicate.
\begin{definition}[Toroidal and ellipsoidal general predicates]
Let $\GG$ be a \textbf{proper} general predicate. Iff
\begin{equation}
\|P\| \|Q\| = 0 \, \vee \, \|U\| \|V\| = 0,
\end{equation}
then $\GG$ is called \textbf{a toroidal general predicate}; otherwise, iff
\begin{equation}
\|P\| \|Q\| \ne 0 \, \wedge \, \|U\| \|V\| \ne 0,
\end{equation}
then $\GG$ is called \textbf{an ellipsoidal general predicate}.
\end{definition}
The motivation for the definitions given above is based on considerations given in one of the following sections. Note that every general predicate is either an improper or proper ellipsoidal, or a proper toroidal.

Next, we consider the eigenvalues and eigenvectors of $M_\GG$, which also help to determine the type of spin-surface and its exact parameterization.

\subsection{Calculation of eigenvalues}

The following lemma provides an insight into the spectrum of a matrix associated with a general predicate.

\begin{lemma} \label{lem_eigenvalues_of_general_predicate}
Let $\GG_s = P \cdot \Rot_s(Q) + U \cdot \Rot_s(V) + c$ be a general predicate. Assume that $S_\GG(s)$ is a spin-quadric representing $\GG_s$ and $M_\GG$ is the matrix of the associated quadratic form. The spectrum of $M_\GG$ comprises four real values $\lambda_{\alpha, \beta \in \{-1, 1\} }$, which are equal to:
\begin{equation*}
\lambda_{\alpha, \beta \in \{-1, 1\} } = c - \gamma_{\alpha, \beta},
\end{equation*}
where
\begin{equation*}
\gamma_{\alpha, \beta \in \{-1, 1\} } = \alpha \|P\| \|Q\| + \beta \|U\| \|V\| .
\end{equation*}
\end{lemma}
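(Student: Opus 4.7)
My plan is to diagonalize $M_\GG$ after reducing the parameters $(P, U, Q, V)$ to a canonical configuration via an $\SO(3) \times \SO(3)$ symmetry of $\GG_s$ that preserves the spectrum of $M_\GG$. The symmetry consists in rotating the pair $(K, L)$ by an $R_1 \in \SO(3)$ and the pair $(A, B)$ independently by an $R_2 \in \SO(3)$; this transforms the derived vectors as $P \mapsto R_1 P$, $U \mapsto R_1 U$, $Q \mapsto R_2 Q$, $V \mapsto R_2 V$. Letting $\tilde r_1, \tilde r_2 \in \Spin(3)$ be spinor lifts of $R_1, R_2$, the identity $R_i v = \tilde r_i v \tilde r_i^{-1}$ gives $(R_1 P) \cdot \Rot_s(R_2 Q) = P \cdot \Rot_{s'}(Q)$ with $s' = \tilde r_1^{-1} s \tilde r_2$, and likewise for the $U, V$ term, so the new predicate equals the original one evaluated at $s'$. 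Since left and right multiplication by unit spinors preserves the Euclidean norm on $\Spin(3) \cong \R^4$, the map $s \mapsto s'$ is orthogonal, so $M_{\GG'}$ is orthogonally similar to $M_\GG$ and they share the same spectrum.

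With this invariance in hand, I would use the orthogonality relations $P \cdot U = 0$ and $Q \cdot V = 0$ from \eqref{eqn_p_dot_u_is_zero} and \eqref{eqn_q_dot_v_is_zero} to normalize. In the ellipsoidal case, I choose $R_1$ sending the oriented orthonormal frame $(\hat P, \hat U, \hat P \times \hat U)$ to the standard basis $(e_1, e_2, e_3)$, and likewise $R_2$ for $(\hat Q, \hat V, \hat Q \times \hat V)$, giving the canonical form $P = \|P\| e_1$, $U = \|U\| e_2$, $Q = \|Q\| e_1$, $V = \|V\| e_2$. Substituting into \eqref{eqn_r_and_t} yields $R = 0$ and $T = \|P\|\|Q\| + \|U\|\|V\|$; plugging these into the explicit matrix of $M_\GG$ displayed just after Theorem \ref{thm_spin_matrix}, every off-diagonal entry vanishes and the diagonal entries become $c - \|P\|\|Q\| - \|U\|\|V\|$, $c + \|P\|\|Q\| - \|U\|\|V\|$, $c - \|P\|\|Q\| + \|U\|\|V\|$, and $c + \|P\|\|Q\| + \|U\|\|V\|$, which are precisely $c - \alpha \|P\|\|Q\| - \beta \|U\|\|V\|$ for $(\alpha, \beta) \in \{-1, 1\}^2$.

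The toroidal case, in which exactly one of $\|P\|\|Q\|$ or $\|U\|\|V\|$ vanishes, is handled in the same way: the rotation corresponding to the trivial pair is unconstrained, and the diagonal form collapses into two pairs of coincident eigenvalues that still match the claimed formula. The main technical obstacle is the equivariance step, i.e.\ pushing the independent $\SO(3)$ actions on $(P, U)$ and $(Q, V)$ through $\Rot_s$ into a single spinor substitution; once this is established, the canonical form and the eigenvalue extraction are direct computations.
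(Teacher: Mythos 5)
Your argument is correct and takes a genuinely different route from the paper. The paper computes the characteristic polynomial $h(\gamma) = \det(M_\GG - (c-\gamma)I)$ directly by Laplace expansion, observes that it is biquadratic in $\gamma$, and solves the resulting quadratic in $\gamma^2$. You instead exploit a two-sided $\SO(3) \times \SO(3)$ symmetry: rotating the stationary data by $R_1$ and the moving data by $R_2$ replaces $\GG_s$ by $\GG_{s'}$ with $s' = \tilde r_1^{-1} s \tilde r_2$, and since left and right multiplication by unit spinors are isometries of the ambient $\R^4$, this is an orthogonal change of coordinates that preserves the spectrum. Using $P \perp U$ and $Q \perp V$ (equations \eqref{eqn_p_dot_u_is_zero} and \eqref{eqn_q_dot_v_is_zero}) you then align $P, Q$ with $e_1$ and $U, V$ with $e_2$, which forces $R = 0$ in \eqref{eqn_r_and_t} and makes $M_\GG$ manifestly diagonal with entries $c \mp \|P\|\|Q\| \mp \|U\|\|V\|$. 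Your route is more structural: it explains at a glance why the spectrum depends only on the invariants $\|P\|\|Q\|$, $\|U\|\|V\|$, and $c$, and it simultaneously exhibits the eigenvectors as rotated standard basis vectors, anticipating what the paper's Lemma \ref{lem_eigenvectors_of_general_predicate} later extracts by a long Gaussian elimination; the paper's expansion is shorter to state but gives no hint of why the answer factors so cleanly. Two points to tighten. First, you should also cover the improper case $\|P\|\|Q\| = \|U\|\|V\| = 0$, which the lemma includes: there $R = T = 0$ and all products $P_i Q_j$, $U_i V_j$ vanish, so $M_\GG = cI$ with quadruple eigenvalue $c$, consistent with the formula. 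Second, the identity $\GG'_s = \GG_{s'}$ restricted to the unit sphere only determines $M_{\GG'} - A^T M_\GG A$ up to a scalar multiple of the identity; you should observe that the identity in fact holds at the level of the homogenized quadratics (the paper's $M_\GG$ encodes $c$ as $c\,\mathbf{s}^T\mathbf{s}$, making the whole expression a genuine homogeneous form in $\mathbf{s}$), or alternatively kill the ambiguity by specializing $R_1 = R_2 = I$, either of which yields the orthogonal similarity exactly.
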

\begin{proof}
The spectrum of $M_\GG$ comprises eigenvalues $\lambda$ that satisfy $|M_\GG - \lambda I| = 0$. The matrix $M_\GG - \lambda I$ is equal to the following.
\begin{align*}
\left[
\begin{smallmatrix}
    2 (P_3 Q_3 + U_3 V_3) - T + c - \lambda & 2 (P_1 Q_3 + U_1 V_3) + R_2 & 2 (P_3 Q_2 + U_3 V_2) + R_1 & R_3 \\
    2 (P_1 Q_3 + U_1 V_3) + R_2 & 2 (P_1 Q_1 + U_1 V_1) - T + c - \lambda & 2 (P_2 Q_1 + U_2 V_1) + R_3 & R_1 \\
    2 (P_3 Q_2 + U_3 V_2) + R_1 & 2 (P_2 Q_1 + U_2 V_1) + R_3 & 2 (P_2 Q_2 + U_2 V_2) - T + c - \lambda & R_2 \\
    R_3 & R_1 & R_2 & T + c - \lambda
\end{smallmatrix}
\right]
\end{align*}
By setting $\gamma = c - \lambda$, the determinant becomes a function of $\gamma$
\begin{equation*}
h(\gamma) =
\left|
\begin{smallmatrix}
    2 (P_3 Q_3 + U_3 V_3) - T + \gamma & 2 (P_1 Q_3 + U_1 V_3) + R_2 & 2 (P_3 Q_2 + U_3 V_2) + R_1 & R_3 \\
    2 (P_1 Q_3 + U_1 V_3) + R_2 & 2 (P_1 Q_1 + U_1 V_1) - T + \gamma & 2 (P_2 Q_1 + U_2 V_1) + R_3 & R_1 \\
    2 (P_3 Q_2 + U_3 V_2) + R_1 & 2 (P_2 Q_1 + U_2 V_1) + R_3 & 2 (P_2 Q_2 + U_2 V_2) - T + \gamma & R_2 \\
    R_3 & R_1 & R_2 & T + \gamma
\end{smallmatrix}
\right| .
\end{equation*}
The determinant can be calculated and simplified effectively by using Laplace expansion with respect to the fourth row or column. Thus, we find that
\begin{align*}
h(\gamma) = \gamma^4 - 2 \left( \|P\|^2 \|Q\|^2 + \|U\|^2 \|V\|^2 \right) \gamma^2 + \left( \|P\|^2 \|Q\|^2 - \|U\|^2 \|V\|^2 \right)^2 = 0.
\end{align*}
Equation $h(\gamma)$ is bi-quadratic. If we let $\gamma^2 = \psi \ge 0$, then the equation becomes
\begin{equation*}
\psi^2  - 2 \left( \|P\|^2 \|Q\|^2 + \|U\|^2 \|V\|^2 \right) \psi + \left( \|P\|^2 \|Q\|^2 - \|U\|^2 \|V\|^2 \right)^2 = 0.
\end{equation*}
The quadratic discriminant is equal to
\begin{align*}
\Delta & = 16 \|P\|^2 \|Q\|^2 \|U\|^2 \|V\|^2 ,
\end{align*}
which implies that there are two real solutions, $\psi^-$ and $\psi^+$, and they are equal to
\begin{align*}
\psi^- & = ( \|P\| \|Q\| - \|U\| \|V\| )^2 \\
\psi^+ & = ( \|P\| \|Q\| + \|U\| \|V\| )^2 .
\end{align*}
The four corresponding $\gamma = \pm \sqrt{\psi}$ values are
\begin{equation*}
\gamma_{i=1..4} = \pm \|P\| \|Q\| \pm \|U\| \|V\| ,
\end{equation*}
which correspond to the four eigenvalues $\lambda = c -\gamma$ of matrix $M_\GG$
\begin{equation*}
\lambda_{i=1..4} = c - \left( \pm \|P\| \|Q\| \pm \|U\| \|V\| \right),
\end{equation*}
as it was shown.
\end{proof}

The following two lemmas characterize the spectrum of $M_\GG$ depending on the type of a general predicate.

\begin{lemma} \label{lem_eigenvalues_of_ellipsoidal_general_predicate}
Let $\GG_s$ be an ellipsoidal general predicate. Assume that $S_\GG(s)$ is a spin-quadric representing $\GG_s$ and $M_\GG$ is the matrix of the associated quadratic form. Then, the spectrum of $M_\GG$ comprises four different real values.
\end{lemma}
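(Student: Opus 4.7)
The plan is to apply Lemma \ref{lem_eigenvalues_of_general_predicate} directly and then perform a short case analysis on the resulting formulas. By that lemma the four eigenvalues of $M_\GG$ are
\begin{equation*}
\lambda_{\alpha, \beta} = c - \alpha \|P\|\|Q\| - \beta \|U\|\|V\|, \qquad \alpha, \beta \in \{-1, +1\},
\end{equation*}
so writing $a = \|P\|\|Q\|$ and $b = \|U\|\|V\|$, the ellipsoidal hypothesis gives the strict positivity $a > 0$ and $b > 0$. All eigenvalues are already real by construction, so the only thing left to check is that the four values $c \pm a \pm b$ are pairwise distinct.

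To do this I would enumerate the six pairwise differences $\lambda_{\alpha,\beta} - \lambda_{\alpha',\beta'}$. Five of them collapse at once to $\pm 2a$, $\pm 2b$, or $\pm 2(a+b)$, and every one of these is nonzero by the strict positivity of $a$ and $b$. The only delicate case is the difference $\lambda_{+,-} - \lambda_{-,+} = \pm 2(a - b)$.

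The crux of the argument, and the step I expect to be the main obstacle, is therefore to rule out $\|P\|\|Q\| = \|U\|\|V\|$ under the ellipsoidal hypothesis. Strict positivity of $a$ and $b$ by itself does not separate them, so one must exploit further structure. My approach would be to argue by contradiction, assuming $a = b$, and to combine the orthogonality identities \eqref{eqn_p_dot_u_is_zero} and \eqref{eqn_q_dot_v_is_zero} with the definitions $P = K \times L$, $U = K - L$, $Q = A - B$, $V = A \times B$, choosing a frame adapted to the orthogonal pair $(P, U)$ and to $(Q, V)$ so that the equality $\|P\|\|Q\| = \|U\|\|V\|$ becomes an algebraic constraint in very few free parameters. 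One then tries to show that this constraint, together with properness, forces either $a = 0$ or $b = 0$, contradicting the ellipsoidal assumption.

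Alternatively, a cleaner line of attack would avoid Lemma \ref{lem_eigenvalues_of_general_predicate}'s closed form and instead return to the bi-quadratic $h(\gamma)$ computed in its proof: the two $\psi$-roots of $h$ are $(a+b)^2$ and $(a-b)^2$, and the four eigenvalues are distinct precisely when both roots are positive and different, i.e.\ when $a > 0$, $b > 0$, and $a \neq b$. Recasting the claim this way isolates the same obstacle in a single inequality, which can then be verified via the geometric relations above.
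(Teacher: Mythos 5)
Your enumeration of the pairwise differences is exactly right, and you have put your finger on the one genuine obstacle: the five differences $\pm 2a$, $\pm 2b$, $\pm 2(a+b)$ are handled by the strict positivity $a,b>0$, while the sixth, $\lambda_{+,-}-\lambda_{-,+}=\pm 2(a-b)$, requires $a\neq b$. This is not a cosmetic loose end. The paper's own proof silently skips it: it treats only the three pairs in which $\lambda_1 = c-(a+b)$ appears, and asserts that ``all other cases follow from these by swapping $\lambda_1$ and $\lambda_2$.'' That is wrong; the pair $\lambda_{+,-}=\lambda_{-,+}$ is not obtained from any of the three listed pairs by relabelling, and it yields the new condition $a=b$. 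So you have uncovered a real gap in the published argument.

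Unfortunately, the strategy you outline for closing it cannot succeed, because the claim $a\neq b$ for ellipsoidal predicates is simply false. Take $K=A$ and $L=B$ (a stationary and a rotating segment with the same endpoints). Then $P=K\times L=A\times B=V$ and $U=K-L=A-B=Q$, so $\|P\|\|Q\|=\|V\|\|U\|$, i.e.\ $a=b$, while both factors are nonzero whenever $A$, $B$ are not parallel; this is a perfectly good ellipsoidal general predicate. The orthogonality identities \eqref{eqn_p_dot_u_is_zero}--\eqref{eqn_q_dot_v_is_zero} put $P\perp U$ and $Q\perp V$ but impose no relation between the two magnitudes $\|P\|\|Q\|$ and $\|U\|\|V\|$. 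Indeed the paper itself concedes this implicitly: Table \ref{ell_param_domain} contains an explicit middle column ``$a=b=t/2$,'' and the parameterization type ``a pair of $yz$-crossed ellipsoids'' (case 5) arises precisely when $a=b$ and $c=0$. When $a=b$ the eigenvalues are $c-2a$, $c$, $c$, $c+2a$, only three distinct values and a double eigenvalue. So the lemma as stated is incorrect; either the notion of ellipsoidal predicate must be strengthened to require $a\neq b$, or the conclusion must be weakened to ``at most four, and at least three, distinct real values (with the double eigenvalue occurring exactly when $a=b$).'' Your diagnosis of the missing case is the valuable part of your proposal; the fix is to change the statement, not to produce a proof of $a\neq b$.
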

\begin{proof}
Due to Lemma \ref{lem_eigenvalues_of_general_predicate}, $M_\GG$ only contains real values. By contradiction, suppose that the spectrum of $M_\GG$ contains at least one pair of equal values $\lambda_1$, $\lambda_2$ such that $\lambda_1 = \lambda_2$. There are essentially three different cases and all other cases follow from these by swapping $\lambda_1$ and $\lambda_2$.
\begin{enumerate}
\item If $\lambda_1 = c - \left( + \|P\| \|Q\| + \|U\| \|V\| \right)$ and $\lambda_2 = c - \left( + \|P\| \|Q\| - \|U\| \|V\| \right)$, \\
      then $\|U\| \|V\| = 0$, which is a contradiction since $\GG_s$ is an ellipsoidal general predicate.
\item If $\lambda_1 = c - \left( + \|P\| \|Q\| + \|U\| \|V\| \right)$ and $\lambda_2 = c - \left( - \|P\| \|Q\| + \|U\| \|V\| \right)$, \\
      then $\|P\| \|Q\| = 0$, which is a contradiction since $\GG_s$ is an ellipsoidal general predicate.
\item If $\lambda_1 = c - \left( + \|P\| \|Q\| + \|U\| \|V\| \right)$ and $\lambda_2 = c - \left( - \|P\| \|Q\| - \|U\| \|V\| \right)$, \\
      then $\|P\| \|Q\| + \|U\| \|V\| = 0$, which is a contradiction since $\GG_s$ is a proper predicate.
\end{enumerate}
In all cases, there is a contradiction so the lemma must be true.
\end{proof}

\begin{lemma} \label{lem_eigenvalues_of_toroidal_general_predicate}
Let $\GG_s$ be a toroidal general predicate. Assume that $S_\GG(s)$ is a spin-quadric representing $\GG_s$ and $M_\GG$ is the matrix of the associated quadratic form. Then, the spectrum of $M_\GG$ comprises two pairs of different real values.
\end{lemma}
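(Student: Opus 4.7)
The plan is to reduce the statement directly to Lemma \ref{lem_eigenvalues_of_general_predicate}, exploit the defining dichotomy of a toroidal predicate, and then use properness to rule out total degeneracy. First I would invoke Lemma \ref{lem_eigenvalues_of_general_predicate} to write the four eigenvalues as $\lambda_{\alpha,\beta} = c - \alpha\|P\|\|Q\| - \beta\|U\|\|V\|$ for $\alpha,\beta \in \{-1,1\}$. The toroidal assumption $\|P\|\|Q\| = 0 \,\vee\, \|U\|\|V\| = 0$ combined with properness $\|P\|\|Q\| \ne 0 \,\vee\, \|U\|\|V\| \ne 0$ forces exactly one of the two products to vanish and the other to be strictly positive.

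Next I would split into the two symmetric cases. In the case $\|P\|\|Q\| = 0$ and $\|U\|\|V\| \ne 0$, the eigenvalues reduce to $c - \beta\|U\|\|V\|$, independent of $\alpha$; this immediately produces two pairs of equal eigenvalues, namely $\{c - \|U\|\|V\|, c - \|U\|\|V\|\}$ and $\{c + \|U\|\|V\|, c + \|U\|\|V\|\}$. The two pairs differ because their difference is $2\|U\|\|V\| \ne 0$. The case $\|U\|\|V\| = 0$ and $\|P\|\|Q\| \ne 0$ is handled identically, with the roles of the two products swapped.

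Since both cases yield exactly two distinct eigenvalues, each with multiplicity two, the spectrum of $M_\GG$ consists of two pairs of different real values, as claimed. There is no real obstacle here: the lemma is essentially a corollary of Lemma \ref{lem_eigenvalues_of_general_predicate}, and the only care needed is to keep properness and the toroidal condition separate so that one genuinely eliminates two of the four possible collision patterns of eigenvalues while the other guarantees that the remaining two are not themselves equal.
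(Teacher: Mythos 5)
Your proposal is correct and follows essentially the same route as the paper: reduce to Lemma \ref{lem_eigenvalues_of_general_predicate}, use the toroidal condition to make the eigenvalues collapse into two pairs, and invoke properness to show the two pairs are distinct. The only cosmetic difference is that you spell out both symmetric cases where the paper handles one and says ``the other case is analogous.''
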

\begin{proof}
Due to Lemma \ref{lem_eigenvalues_of_general_predicate}, $M_\GG$ only contains real values. Without loss of generality, it can be assumed that $\|P\| \|Q\| = 0$ (the other case is analogous). From Lemma \ref{lem_eigenvalues_of_general_predicate}, it follows that the eigenvalues are as follows.
\begin{align*}
\lambda_1 & = c - \left(+ \|P\| \|Q\| + \|U\| \|V\| \right) = c - \|U\| \|V\| \\
\lambda_2 & = c - \left(+ \|P\| \|Q\| - \|U\| \|V\| \right) = c + \|U\| \|V\| \\
\lambda_3 & = c - \left(-  \|P\| \|Q\| + \|U\| \|V\| \right) = c - \|U\| \|V\| \\
\lambda_4 & = c - \left(- \|P\| \|Q\| - \|U\| \|V\| \right) = c + \|U\| \|V\|
\end{align*}
It is obvious that $\lambda_1 = \lambda_3$ and $\lambda_2 = \lambda_4$. Moreover, it is impossible for $\lambda_1 = \lambda_2$ because this would imply that $\|U\| \|V\| = 0$ and that $\GG_s$ is improper, which contradicts the assumptions. Thus, the lemma is proved.
\end{proof}

The spectrum of the matrix associated with a proper general predicate can contain zeroes. However, the spectrum must contain at least one non-zero eigenvalue due to the following lemma.
\begin{lemma} \label{lem_non_zero_spectrum_of_general_predicate}
Let $\GG_s$ be a proper general predicate. Assume that $S_\GG(s)$ is a spin-quadric representing $\GG_s$ and $M_\GG$ is the matrix of the associated quadratic form. Then, the spectrum of $M_\GG$ contains at least one non-zero eigenvalue.
\end{lemma}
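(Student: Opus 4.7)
The plan is to argue by contradiction, leveraging the explicit formula for the four eigenvalues supplied by Lemma \ref{lem_eigenvalues_of_general_predicate}. Suppose, toward a contradiction, that every eigenvalue of $M_\GG$ vanishes. Since the four eigenvalues are indexed by $(\alpha, \beta) \in \{-1, 1\}^2$ with
\begin{equation*}
\lambda_{\alpha, \beta} = c - \alpha \|P\| \|Q\| - \beta \|U\| \|V\|,
\end{equation*}
the hypothesis becomes a system of four scalar equations in the three non-negative quantities $c$, $\|P\| \|Q\|$, and $\|U\| \|V\|$.

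Next I would extract information by taking suitable linear combinations of these equations. Summing all four equations over the sign choices $(\alpha,\beta)$ yields $4c = 0$, hence $c = 0$. Subtracting $\lambda_{+1,+1} = 0$ from $\lambda_{-1,-1} = 0$ gives $2(\|P\|\|Q\| + \|U\|\|V\|) = 0$. Because both summands are non-negative, this forces $\|P\|\|Q\| = 0$ and $\|U\|\|V\| = 0$ simultaneously, which directly contradicts the definition of a proper general predicate. Thus at least one eigenvalue must be non-zero.

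I do not foresee a real obstacle here: once Lemma \ref{lem_eigenvalues_of_general_predicate} is in hand, the entire argument is elementary sign-bookkeeping on the four values $\lambda_{\alpha,\beta}$, using only that $\|P\|\|Q\|$ and $\|U\|\|V\|$ are non-negative reals. The only subtlety worth flagging is that one should avoid the temptation to argue via $\det M_\GG \neq 0$, since the spectrum is allowed to contain zeros (as noted in the paragraph preceding the lemma); the correct statement is precisely that they cannot all vanish, and the contradiction argument above captures this cleanly.
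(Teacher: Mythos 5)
Your argument is correct and follows essentially the same route as the paper: assume all four eigenvalues vanish, use the explicit formula $\lambda_{\alpha,\beta} = c - \alpha\|P\|\|Q\| - \beta\|U\|\|V\|$ from Lemma \ref{lem_eigenvalues_of_general_predicate} to deduce $\|P\|\|Q\| = \|U\|\|V\| = 0$ (and $c = 0$), and contradict properness. The paper simply states that the four equations ``reduce to'' this conclusion, while you make the reduction explicit via summing and differencing; this is a presentational difference, not a mathematical one.
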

\begin{proof}
By contradiction, suppose that all the eigenvalues are zero. Then, it follows that
\begin{align*}
\lambda_1 & = c - \left(+ \|P\| \|Q\| + \|U\| \|V\| \right) = 0 \\
\lambda_2 & = c - \left(+ \|P\| \|Q\| - \|U\| \|V\| \right) = 0 \\
\lambda_3 & = c - \left(-  \|P\| \|Q\| + \|U\| \|V\| \right) = 0 \\
\lambda_4 & = c - \left(- \|P\| \|Q\| - \|U\| \|V\| \right) = 0,
\end{align*}
which can be reduced to
\begin{equation*}
\|P\| \|Q\| = 0 \, \wedge \, \|U\| \|V\| = 0 \, \wedge \, c = 0,
\end{equation*}
and this contradicts the assumption that the predicate is proper.
\end{proof}
We can conclude by combining Lemmas \ref{lem_eigenvalues_of_toroidal_general_predicate} and \ref{lem_eigenvalues_of_ellipsoidal_general_predicate} with \ref{lem_non_zero_spectrum_of_general_predicate}.
\begin{corollary}
The following two statements characterize the zeroes in the spectrum of a matrix associated with a proper general predicate.
\begin{enumerate}   
\item If the spectrum of the matrix associated with a toroidal general predicate contains zeroes, then it contains exactly two zeroes.
\item If the spectrum of the matrix associated with a toroidal general predicate contains zero, then there is only one zero in the spectrum.
\end{enumerate}
\end{corollary}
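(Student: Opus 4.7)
\medskip

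My plan is to treat this corollary as a direct multiplicity count, obtained by combining the spectral structures already established in Lemmas \ref{lem_eigenvalues_of_ellipsoidal_general_predicate} and \ref{lem_eigenvalues_of_toroidal_general_predicate} with the non-vanishing statement of Lemma \ref{lem_non_zero_spectrum_of_general_predicate}. (I read the corollary as asserting the toroidal case in item (1) and the ellipsoidal case in item (2); the two occurrences of ``toroidal'' appear to be a typo.) There is no new computation to do beyond what Lemma \ref{lem_eigenvalues_of_general_predicate} already provides.

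For the toroidal case, I would invoke Lemma \ref{lem_eigenvalues_of_toroidal_general_predicate}, which gives a spectrum of the form $\{\lambda_1,\lambda_2,\lambda_1,\lambda_2\}$ with $\lambda_1 \ne \lambda_2$. If some eigenvalue is zero, say $\lambda_1 = 0$, then its partner contributes a second zero, while $\lambda_2 \ne \lambda_1 = 0$ forbids any further zero; so exactly two zeros appear. The role of Lemma \ref{lem_non_zero_spectrum_of_general_predicate} here is only to rule out the degenerate situation $\lambda_1 = \lambda_2 = 0$, which would otherwise give four zeros.

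For the ellipsoidal case, I would invoke Lemma \ref{lem_eigenvalues_of_ellipsoidal_general_predicate} to conclude that the four eigenvalues are pairwise distinct, so at most one of them can equal zero. Combined with the hypothesis that a zero is present in the spectrum, this yields that there is exactly one zero. (Lemma \ref{lem_non_zero_spectrum_of_general_predicate} is not logically needed for this direction, but it certifies consistency: the remaining three distinct eigenvalues are automatically non-zero.)

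The main obstacle is essentially presentational rather than mathematical: the proof is a two-line case analysis, so I would keep it terse, cite the three preceding lemmas explicitly for each item, and flag the ``toroidal/ellipsoidal'' wording so the reader understands which regime each statement refers to.
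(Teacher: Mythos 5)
Your proposal is correct and matches the paper's intent: the paper supplies no explicit proof for this corollary, stating only that it follows by combining Lemmas \ref{lem_eigenvalues_of_ellipsoidal_general_predicate}, \ref{lem_eigenvalues_of_toroidal_general_predicate}, and \ref{lem_non_zero_spectrum_of_general_predicate}, and your case split is exactly that combination (with the second ``toroidal'' indeed reading as a typo for ``ellipsoidal''). The only stylistic note is that Lemma \ref{lem_non_zero_spectrum_of_general_predicate} is strictly redundant here, since the distinctness conclusions of Lemmas \ref{lem_eigenvalues_of_toroidal_general_predicate} and \ref{lem_eigenvalues_of_ellipsoidal_general_predicate} already preclude an all-zero spectrum, as you yourself observe.
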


An improper general predicate can also be characterized by its spectrum.
\begin{lemma} \label{lem_eigenvalues_of_improper_general_predicate}
Let $\GG_s$ be a general predicate. Assume that $S_\GG(s)$ is a spin-quadric representing $\GG_s$ and $M_\GG$ is the matrix of the associated quadratic form. The predicate $\GG_s$ is improper iff the spectrum of $M_\GG$ comprises four equal real values.
\end{lemma}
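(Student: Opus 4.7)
The plan is a direct two-direction argument built entirely on the eigenvalue formula from Lemma \ref{lem_eigenvalues_of_general_predicate}, namely
\begin{equation*}
\lambda_{\alpha,\beta} = c - \alpha \|P\|\|Q\| - \beta \|U\|\|V\|, \qquad \alpha,\beta \in \{-1,1\}.
\end{equation*}
Everything reduces to manipulating these four explicit scalars, so there is no real geometric obstacle; the proof is essentially bookkeeping.

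For the forward implication, I assume $\GG_s$ is improper, i.e.\ $\|P\|\|Q\| = 0$ and $\|U\|\|V\| = 0$. Substituting into the formula above, each of the four eigenvalues collapses to $c$, so the spectrum consists of the single value $c$ with multiplicity four.

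For the converse, I assume the spectrum consists of four equal real values. Equating any two eigenvalues that differ only in the sign of $\alpha$ (for instance $\lambda_{1,1} = \lambda_{-1,1}$) immediately forces $\|P\|\|Q\| = 0$; similarly, equating two eigenvalues that differ only in the sign of $\beta$ (for instance $\lambda_{1,1} = \lambda_{1,-1}$) forces $\|U\|\|V\| = 0$. Both vanishing conditions together are exactly the definition of an improper predicate, completing the equivalence.

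The only step that requires a moment of care is the backward direction, where one must pick the right pair of eigenvalues to compare in order to isolate $\|P\|\|Q\|$ and $\|U\|\|V\|$ independently; but this is straightforward given the symmetric structure of the indexing. No computation beyond substitution into the formula of Lemma \ref{lem_eigenvalues_of_general_predicate} is needed.
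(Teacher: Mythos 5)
Your proof is correct and follows essentially the same approach as the paper: substitute the improper conditions into the eigenvalue formula from Lemma \ref{lem_eigenvalues_of_general_predicate} for the forward direction, and for the converse equate pairs of eigenvalues to force $\|P\|\|Q\| = 0$ and $\|U\|\|V\| = 0$. The paper additionally records explicitly that these conditions together with $c = x$ are exactly the impropriety condition, but your argument is the same in substance.
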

\begin{proof}
Due to Lemma \ref{lem_eigenvalues_of_general_predicate}, $M_\GG$ only contains real values. If we consider the improper $\GG_s$ predicate, both $\|P\| \|Q\|$ and $\|U\| \|V\|$ must be zero. From Lemma \ref{lem_eigenvalues_of_general_predicate}, it follows that the eigenvalues
\begin{align*}
\lambda_1 & = c - \left(+ \|P\| \|Q\| + \|U\| \|V\| \right) = c \\
\lambda_2 & = c - \left(+ \|P\| \|Q\| - \|U\| \|V\| \right) = c \\
\lambda_3 & = c - \left(- \|P\| \|Q\| + \|U\| \|V\| \right) = c \\
\lambda_4 & = c - \left(- \|P\| \|Q\| - \|U\| \|V\| \right) = c,
\end{align*}
are all equal, as we showed. If we consider the opposite case, then we suppose that $\lambda_i = x$ is the spectrum of a general predicate. From Lemma \ref{lem_eigenvalues_of_general_predicate}, it follows that the eigenvalues must satisfy the following set of equations
\begin{align*}
x = \lambda_1 & = c - \left(+ \|P\| \|Q\| + \|U\| \|V\| \right) \\
x = \lambda_2 & = c - \left(+ \|P\| \|Q\| - \|U\| \|V\| \right) \\
x = \lambda_3 & = c - \left(-  \|P\| \|Q\| + \|U\| \|V\| \right) \\
x = \lambda_4 & = c - \left(- \|P\| \|Q\| - \|U\| \|V\| \right),
\end{align*}
which reduce to $\|P\| \|Q\| = 0  \wedge \|U\| \|V\| = 0 \wedge c = x$. For these parameters, a general predicate is improper, as shown.
\end{proof}

\subsection{Calculation of eigenvectors}

The symbolic computation of the eigenvectors of $M_\GG$ is much more involved and it requires a number of special cases, including different cases for a toroidal and ellipsoidal general predicates.

\begin{lemma} \label{lem_eigenvectors_of_general_predicate}
Let $\GG_s = P \cdot \Rot_s(Q) + U \cdot \Rot_s(V) + c$ be a general predicate. Assume that $S_\GG(s)$ is a spin-quadric representing $\GG_s$ and $M_\GG$ is the matrix of the associated quadratic form. For each eigenvalue $\lambda_{\alpha, \beta \in \{-1, 1\} } = c - \gamma_{\alpha, \beta}$ of $M_\GG$, the corresponding eigenvector $W = [\mathbf{w}_{12}, \mathbf{w}_{23}, \mathbf{w}_{31}, \mathbf{w}_0]^T$ depends on the type of $\GG_s$, as follows.
\begin{enumerate}
\item Case $\GG_s$ is an ellipsoidal general predicate: the coordinates of the eigenvector $W$ are related to the coordinates of \emph{pinor} $\mathbf{w}$
\begin{equation*}
\mathbf{w} = 1 - \alpha \beta \, \hat{P} \hat{U} \hat{Q} \hat{V} - \alpha \, \hat{P} \hat{Q} - \beta \, \hat{U} \hat{V}
\end{equation*}
where $\mathbf{w} = \mathbf{w}_{12} \e_{12} + \mathbf{w}_{23} \e_{23} + \mathbf{w}_{31} \e_{31} + \mathbf{w}_0 \in \Pin(3)$. The eigenvector $W$ is valid in all cases where $\mathbf{w}_0 \ne 0$.
\item Case $\GG_s$ is a toroidal general predicate: a pair of equal eigenvalues corresponds to a two-dimensional eigenplane. Each of the three following planes is an eigenplane if both of the vectors that form the plane are non-zero.
\begin{align*}
Z^{(1)} = & \left( \begin{smallmatrix}
        \hat{P}_2 \hat{P}_1 - \hat{Q}_2 \hat{Q}_1 - \alpha (\hat{P}_2 \hat{Q}_1 - \hat{P}_1 \hat{Q}_2) \\
        0 \\
        - \hat{P}_3 \hat{P}_1 + \hat{Q}_3 \hat{Q}_1 + \alpha (\hat{P}_3 \hat{Q}_1 - \hat{P}_1 \hat{Q}_3) \\
        - \hat{P}_1 \hat{P}_1 - \hat{Q}_1 \hat{Q}_1 + \alpha (\hat{P}_1 \hat{Q}_1 + \hat{P}_1 \hat{Q}_1)
    \end{smallmatrix} \right) u
    + \left( \begin{smallmatrix}
        0 \\
        \hat{P}_2 \hat{P}_1 - \hat{Q}_2 \hat{Q}_1 - \alpha (\hat{P}_2 \hat{Q}_1 - \hat{P}_1 \hat{Q}_2) \\
        - \hat{P}_1 \hat{P}_1 + \hat{Q}_1 \hat{Q}_1 + \alpha (\hat{P}_1 \hat{Q}_1 - \hat{P}_1 \hat{Q}_1) \\
        \hat{P}_3 \hat{P}_1 + \hat{Q}_3 \hat{Q}_1 - \alpha (\hat{P}_3 \hat{Q}_1 + \hat{P}_1 \hat{Q}_3)
    \end{smallmatrix} \right) v
\end{align*}
or
\begin{align*}
Z^{(2)} = & \left( \begin{smallmatrix}
        \hat{P}_2 \hat{P}_2 - \hat{Q}_2 \hat{Q}_2 - \alpha (\hat{P}_2 \hat{Q}_2 - \hat{P}_2 \hat{Q}_2) \\
        0 \\
        - \hat{P}_3 \hat{P}_2 + \hat{Q}_3 \hat{Q}_2 + \alpha (\hat{P}_3 \hat{Q}_2 - \hat{P}_2 \hat{Q}_3) \\
        - \hat{P}_1 \hat{P}_2 - \hat{Q}_1 \hat{Q}_2 + \alpha (\hat{P}_1 \hat{Q}_2 + \hat{P}_2 \hat{Q}_1)
    \end{smallmatrix} \right) u
    + \left( \begin{smallmatrix}
        0 \\
        \hat{P}_2 \hat{P}_2 - \hat{Q}_2 \hat{Q}_2 - \alpha (\hat{P}_2 \hat{Q}_2 - \hat{P}_2 \hat{Q}_2) \\
        - \hat{P}_1 \hat{P}_2 + \hat{Q}_1 \hat{Q}_2 + \alpha (\hat{P}_1 \hat{Q}_2 - \hat{P}_2 \hat{Q}_1) \\
        \hat{P}_3 \hat{P}_2 + \hat{Q}_3 \hat{Q}_2 - \alpha (\hat{P}_3 \hat{Q}_2 + \hat{P}_2 \hat{Q}_3)
    \end{smallmatrix} \right) v
\end{align*}
or
\begin{align*}
Z^{(3)} = & \left( \begin{smallmatrix}
        \hat{P}_2 \hat{P}_3 - \hat{Q}_2 \hat{Q}_3 - \alpha (\hat{P}_2 \hat{Q}_3 - \hat{P}_3 \hat{Q}_2) \\
        0 \\
        - \hat{P}_3 \hat{P}_3 + \hat{Q}_3 \hat{Q}_3 + \alpha (\hat{P}_3 \hat{Q}_3 - \hat{P}_3 \hat{Q}_3) \\
        - \hat{P}_1 \hat{P}_3 - \hat{Q}_1 \hat{Q}_3 + \alpha (\hat{P}_1 \hat{Q}_3 + \hat{P}_3 \hat{Q}_1)
    \end{smallmatrix} \right) u
    + \left( \begin{smallmatrix}
        0 \\
        \hat{P}_2 \hat{P}_3 - \hat{Q}_2 \hat{Q}_3 - \alpha (\hat{P}_2 \hat{Q}_3 - \hat{P}_3 \hat{Q}_2) \\
        - \hat{P}_1 \hat{P}_3 + \hat{Q}_1 \hat{Q}_3 + \alpha (\hat{P}_1 \hat{Q}_3 - \hat{P}_3 \hat{Q}_1) \\
        \hat{P}_3 \hat{P}_3 + \hat{Q}_3 \hat{Q}_3 - \alpha (\hat{P}_3 \hat{Q}_3 + \hat{P}_3 \hat{Q}_3)
    \end{smallmatrix} \right) v
\end{align*}
\end{enumerate}
\end{lemma}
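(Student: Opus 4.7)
The plan rests on a simple observation: from the explicit diagonal of $M_\GG$, on the unit sphere $|\mathbf{s}|^2 = 1$ in $\R^4$ the quadratic form $\mathbf{s}^T M_\GG \mathbf{s}$ reproduces exactly the predicate value $\GG_s$. Hence eigenvectors of $M_\GG$ are critical points of $\GG_s$ on $\Spin(3)$ under the unit-norm constraint, and the associated eigenvalues are precisely the critical values catalogued in Lemma \ref{lem_eigenvalues_of_general_predicate}. The task therefore reduces to (i) writing the critical unit pinors explicitly in Clifford-algebra form, and (ii) translating them into the component vectors claimed in the statement via the identification $\Cl^+(3) \cong \R^4$ along the basis $\{\e_{12}, \e_{23}, \e_{31}, 1\}$.

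For the ellipsoidal case, stationarity of $\GG_s = P \cdot \Rot_s(Q) + U \cdot \Rot_s(V) + c$ amounts to $\Rot_s(\hat Q) \in \{+\hat P, -\hat P\}$ and $\Rot_s(\hat V) \in \{+\hat U, -\hat U\}$ holding simultaneously. The orthogonalities $P \cdot U = 0$ and $Q \cdot V = 0$ from \eqref{eqn_p_dot_u_is_zero}--\eqref{eqn_q_dot_v_is_zero} make the four sign choices $(\alpha, \beta) \in \{\pm 1\}^2$ jointly realisable, yielding four discrete critical rotations in one-to-one correspondence with the four distinct eigenvalues of Lemma \ref{lem_eigenvalues_of_ellipsoidal_general_predicate}. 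I would then verify directly, by expanding the Clifford product and invoking $\hat P \cdot \hat U = \hat Q \cdot \hat V = 0$ whenever it simplifies, that the element $\mathbf{w} = 1 - \alpha \beta\, \hat P \hat U \hat Q \hat V - \alpha\, \hat P \hat Q - \beta\, \hat U \hat V$ lies in the even subalgebra $\Cl^+(3)$ and that conjugation by $\mathbf{w}$ realises the alignments $\hat Q \mapsto \alpha \hat P$ and $\hat V \mapsto \beta \hat U$. Reading off the grade-$0$ and grade-$2$ coefficients of $\mathbf{w}$ then yields the entries of $W$; the stated restriction $\mathbf{w}_0 \ne 0$ excludes precisely the half-turn, for which this particular rotor shape degenerates and a different base representation is required.

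For the toroidal case, I would assume without loss of generality $\|U\| \|V\| = 0$, the sub-case $\|P\| \|Q\| = 0$ following by the symmetry $P \leftrightarrow U$, $Q \leftrightarrow V$. The predicate collapses to $\GG_s = P \cdot \Rot_s(Q) + c$, and its extremal values $\pm \|P\| \|Q\| + c$ are now attained along the one-parameter family of rotations sending $\hat Q$ to $\alpha \hat P$, that is, a coset of the circle of rotations fixing $\hat P$ inside $\Spin(3)$. This coset is a $2$-plane in $\R^4$, matching the multiplicity pattern in Lemma \ref{lem_eigenvalues_of_toroidal_general_predicate}. To parameterise it I would take a base rotor aligning $\hat Q$ with $\alpha \hat P$ and right-multiply it by the family $u + v\, \hat P \e_k$ of rotors fixing $\hat P$, for each of the three basis directions $k = 1, 2, 3$. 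Each choice yields a pair of spanning vectors in $\R^4$, and expanding in the Clifford basis while carrying the orthogonality relations through produces precisely the three planes $Z^{(1)}$, $Z^{(2)}$, $Z^{(3)}$.

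The step I expect to be the main obstacle is the non-degeneracy argument in the toroidal case: showing that for every proper toroidal general predicate at least one $Z^{(k)}$ has both spanning vectors non-zero. I expect this to reduce to the elementary observation that $\hat P$ and $\hat Q$ cannot be simultaneously orthogonal to every coordinate axis, so at least one index $k$ survives; but organising the Clifford expansion cleanly enough for this reduction to be manifest, while correctly tracking all sign choices across the three candidate planes, is the delicate bookkeeping step. The ellipsoidal verification is conceptually straightforward but calculationally heavy, and the systematic use of $P \cdot U = Q \cdot V = 0$ throughout is what keeps the expressions tractable.
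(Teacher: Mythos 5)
Your proposal takes a genuinely different route from the paper. The paper's proof is a brute-force Gaussian elimination on the characteristic matrix $M_\GG - \lambda I$: it reduces the matrix row by row under genericity assumptions ($R_3 \ne 0$, $g_{23} \ne 0$, $h_{32} \ne 0$), reads off an eigenvector from the reduced matrix, simplifies by pulling out common factors with the aid of a symbolic calculator, and then closes the gaps from the omitted special cases by directly verifying $(M_\GG - \lambda I) W = 0$ for the final closed form; the toroidal case is obtained by specialising the same elimination and observing that the nullspace jumps to dimension two. You instead observe that on the unit sphere $\mathbf{s}^T M_\GG \mathbf{s} = \GG_s$, so eigenvectors of $M_\GG$ are exactly the critical unit spinors of the constrained Rayleigh quotient, and you propose to exhibit these as the rotors that simultaneously align $\hat{Q}$ with $\pm\hat{P}$ and $\hat{V}$ with $\pm\hat{U}$. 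This is more conceptual: it explains, rather than merely computes, why the spectrum is $c \mp \|P\|\|Q\| \mp \|U\|\|V\|$, and it makes the toroidal two-plane appear for free as a coset of the stabiliser circle inside $\Spin(3)$. What the paper's route buys is mechanical certainty: every step is a row operation that a computer algebra system can discharge, with no geometric argument left to check.

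Two points in your plan need repair before it is a proof. First, there is a sign discrepancy: the eigenvalue $\lambda_{\alpha,\beta} = c - \alpha\|P\|\|Q\| - \beta\|U\|\|V\|$ is the Rayleigh quotient at a spinor $s$ if and only if $\Rot_s(\hat{Q}) = -\alpha\hat{P}$ and $\Rot_s(\hat{V}) = -\beta\hat{U}$, not $+\alpha\hat{P}$ and $+\beta\hat{U}$ as you state. (Quick check: at $\alpha = \beta = -1$ the claimed pinor $\mathbf{w}$ reduces to the non-zero scalar $4$ when $\hat{P} = \hat{Q}$, $\hat{U} = \hat{V}$ — the identity rotation, whose Rayleigh quotient is $c + \|P\|\|Q\| + \|U\|\|V\| = \lambda_{-1,-1}$, as it should be.) Second, the step ``stationarity amounts to alignment'' is not a direct consequence of setting the derivative to zero: the Lagrange condition only gives $\Rot_s(Q)\times P + \Rot_s(V)\times U = 0$, which a priori admits solutions where neither cross product vanishes individually. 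To rule these out you must invoke Lemma \ref{lem_eigenvalues_of_ellipsoidal_general_predicate}: since the four eigenspaces are one-dimensional, there are exactly four antipodal pairs of critical unit spinors, and the four aligning rotations already account for these with four distinct critical values, so no others exist. Finally, your toroidal parameterisation by right-multiplying a base rotor with $u + v\,\hat{P}\e_k$ is not a family of rotors fixing $\hat{P}$ — the stabiliser of $\hat{P}$ is spanned by $1$ and $\e_{123}\hat{P}$, and preservation of the alignment $\hat{Q}\mapsto\alpha\hat{P}$ under right-multiplication requires fixing $\hat{Q}$, not $\hat{P}$. The index $k$ in the paper's three planes $Z^{(k)}$ is a Gaussian-elimination pivot choice, and reproducing the three redundant spanning formulae from the coset picture would need a different device than the one you sketch.
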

\begin{proof}
A long proof is provided in the appendix. All of the special cases that are not listed in the lemma are handled specifically in the proof.
\end{proof}

\subsection{Parameterization of a general predicate}

Let $\GG$ be a proper general predicate. Assume that $S_\GG(s)$ is a spin-quadric representing $\GG$ and $M_\GG$ is the matrix of the associated quadratic form. By a parameterization of a general predicate, we mean the parameterization of the spin-quadric $S_\GG(s)$. A parameterization is derived by an eigenvalue decomposition of the matrix $M_\GG$. According to Lemmas \ref{lem_eigenvalues_of_general_predicate} and \ref{lem_eigenvectors_of_general_predicate}, a decomposition of $M_\GG$ exists. If we denote the eigenvalues of $M_\GG$ as
\begin{align*}
\lambda_1 & = c - \gamma_1 = c - \gamma_{+1, +1} = c - \left(+ \|P\| \|Q\| + \|U\| \|V\| \right) \\
\lambda_2 & = c - \gamma_2 = c - \gamma_{+1, -1} = c - \left(+ \|P\| \|Q\| - \|U\| \|V\| \right) \\
\lambda_3 & = c - \gamma_3 = c - \gamma_{-1, +1} = c - \left(- \|P\| \|Q\| + \|U\| \|V\| \right) \\
\lambda_4 & = c - \gamma_4 = c - \gamma_{-1, -1} = c - \left(- \|P\| \|Q\| - \|U\| \|V\| \right), \\
\end{align*}
then the decomposition is
\begin{equation} \label{parameterization_spectral_decomposition}
M_\GG = Q \Lambda Q^T ,
\end{equation}
where matrix $Q$ is a matrix of normalized eigenvectors $W$ and $\Lambda$ is a diagonal matrix of the corresponding eigenvalues
\begin{equation*}
Q =
\left[ \begin{smallmatrix}
    \frac{W^{(1)}_1}{\|W^{(1)}\|} & \frac{W^{(2)}_1}{\|W^{(2)}\|} & \frac{W^{(3)}_1}{\|W^{(3)}\|} & \frac{W^{(4)}_1}{\|W^{(4)}\|} \\
    \frac{W^{(1)}_2}{\|W^{(1)}\|} & \frac{W^{(2)}_2}{\|W^{(2)}\|} & \frac{W^{(3)}_2}{\|W^{(3)}\|} & \frac{W^{(4)}_2}{\|W^{(4)}\|} \\
    \frac{W^{(1)}_3}{\|W^{(1)}\|} & \frac{W^{(2)}_3}{\|W^{(2)}\|} & \frac{W^{(3)}_3}{\|W^{(3)}\|} & \frac{W^{(4)}_3}{\|W^{(4)}\|} \\
    \frac{W^{(1)}_4}{\|W^{(1)}\|} & \frac{W^{(2)}_4}{\|W^{(2)}\|} & \frac{W^{(3)}_4}{\|W^{(3)}\|} & \frac{W^{(4)}_4}{\|W^{(4)}\|}
\end{smallmatrix} \right],
\quad \Lambda =
\left[ \begin{smallmatrix}
    \lambda_1 & 0 & 0 & 0 \\
    0 & \lambda_2 & 0 & 0 \\
    0 & 0 & \lambda_3 & 0 \\
    0 & 0 & 0 & \lambda_4
\end{smallmatrix} \right],
\end{equation*}
where the notation $W^{(j)}_i$ denotes the $i$-th coordinate of the eigenvector corresponding to eigenvalue $\lambda_j$.
Matrix $M_\GG$ is a real symmetric matrix so $Q$ is a rotation matrix comprising four orthonormal eigenvectors.
The spin-surface is determined by the set of zeroes of
\begin{equation}
S_\GG(s) = \mathbf{s}^T M_\GG \mathbf{s} = 0
\end{equation}
and by plugging in the decomposition \eqref{parameterization_spectral_decomposition}, we obtain
\begin{equation*}
S_\GG(s) = \mathbf{s}^T (Q \Lambda Q^T) \mathbf{s} = (\mathbf{s}^T Q) \Lambda (Q^T \mathbf{s}).
\end{equation*}
By setting
\begin{equation*}
\mathbf{t} = Q^T \mathbf{s},
\end{equation*}
we have
\begin{equation*}
\mathbf{s}^T Q = \mathbf{t}^T, \quad \mathbf{s} = Q \mathbf{t}, \quad \mathbf{s}^T = \mathbf{t}^T Q^T
\end{equation*}
and the decomposition becomes
\begin{equation}
S_\GG(s) = \mathbf{t}^T \Lambda \mathbf{t} = 0.
\end{equation}
Each $\mathbf{s}$ satisfies a normalization relation
\begin{equation}
\mathbf{s}^T \mathbf{s} = 1,
\end{equation}
which after substituting with $\mathbf{t}$ becomes
\begin{align*}
& (\mathbf{t}^T Q^T) (Q \mathbf{t}) = 1 \\
& \mathbf{t}^T \mathbf{t} = 1.
\end{align*}
If we let $\mathbf{t} = [ t_{12}, t_{23}, t_{31}, t_0 ]^T$, then $\mathbf{t}^T \Lambda \mathbf{t}$ is equal to
\begin{align} \label{eqn_parameterization_cases}
\begin{cases}
\lambda_1 t_{12}^2 + \lambda_2 t_{23}^2 + \lambda_3 t_{31}^2 + \lambda_4 t_0^2 = 0 \\
t_{12}^2 + t_{23}^2 + t_{31}^2 + t_0^2 = 1.
\end{cases}
\end{align}
By substituting $\lambda$ in \eqref{eqn_parameterization_cases}, we obtain
\begin{align} \label{eqn_parameterization_cases_substituted}
\begin{cases}
\gamma_1 t_{12}^2 + \gamma_2 t_{23}^2 + \gamma_3 t_{31}^2 + \gamma_4 t_0^2 = c \\
t_{12}^2 + t_{23}^2 + t_{31}^2 + t_0^2 = 1.
\end{cases}
\end{align}

The parameterization will be different depending on the values in the spectrum of the matrix associated with a spin-quadric. Denote $a = \|P\|\|Q\|$ and $b = \|U\|\|V\|$.

\subsubsection{Parameterization of an ellipsoidal general predicate}

For an ellipsoidal general predicate, $a \ne 0$ and $b \ne 0$. Equations \eqref{eqn_parameterization_cases_substituted} become
\begin{align} \label{eqn_parameterization_cases_ellipsoidal}
\begin{cases}
(a + b) t_{12}^2 + (a - b) t_{23}^2 + (-a + b) t_{31}^2 + (-a - b) t_0^2 = c \\
t_{12}^2 + t_{23}^2 + t_{31}^2 + t_0^2 = 1.
\end{cases}
\end{align}

Lemma \ref{lem_solutions_ellipsoidal} provides a solution to the set of equations \eqref{eqn_parameterization_cases_ellipsoidal}.

\begin{lemma} \label{lem_solutions_ellipsoidal}
A solution to the following set of equations involving $x, y, z, w$
\begin{align*}
\begin{cases}
& (a + b) x^2 + (a - b) y^2 + (-a + b) z^2 + (-a - b) w^2 = c \\
& x^2 + y^2 + z^2 + w^2 = 1
\end{cases}
\end{align*}
for the parameters $a \in \R^+, b \in \R^+, c \in \R$ can be parameterized according to the formulae given in the proof.
\end{lemma}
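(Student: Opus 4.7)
The strategy is to recognize that the quadratic form on the left-hand side of the first equation has a special $2{+}2$ block structure, and then to use this structure together with the normalization constraint to reduce the problem to a single trigonometric equation in two angles and one amplitude.

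First, I would exploit the algebraic identity
\begin{equation*}
(a+b)x^2 + (a-b)y^2 + (-a+b)z^2 + (-a-b)w^2 = (a+b)(x^2-w^2) + (a-b)(y^2-z^2),
\end{equation*}
which pairs $x$ with $w$ and $y$ with $z$, and rewrites the first equation as $(a+b)(x^2-w^2) + (a-b)(y^2-z^2) = c$.

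Second, I would split the unit sphere $x^2+y^2+z^2+w^2 = 1$ into a family of tori by setting $\mu := x^2 + w^2 \in [0,1]$, which forces $y^2 + z^2 = 1-\mu$. Each of the two circular fibres is parameterized by an angle:
\begin{align*}
x &= \sqrt{\mu}\,\cos\phi, & w &= \sqrt{\mu}\,\sin\phi, \\
y &= \sqrt{1-\mu}\,\cos\psi, & z &= \sqrt{1-\mu}\,\sin\psi.
\end{align*}
The double-angle identity gives $x^2 - w^2 = \mu\cos 2\phi$ and $y^2 - z^2 = (1-\mu)\cos 2\psi$, so the first equation collapses to the single scalar relation
\begin{equation*}
(a+b)\,\mu\cos 2\phi + (a-b)(1-\mu)\cos 2\psi = c,
\end{equation*}
which correctly cuts out a two-dimensional surface inside the three-parameter space $(\mu,\phi,\psi)$, matching the expected dimension of the spin-quadric.

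Finally, I would solve this relation for one parameter in terms of the other two to obtain an explicit parameterization. The natural choice is to keep $(\mu,\phi)$ free and set
\begin{equation*}
\cos 2\psi = \frac{c - (a+b)\,\mu\cos 2\phi}{(a-b)(1-\mu)},
\end{equation*}
which is valid whenever $a \ne b$, $\mu < 1$, and the right-hand side lies in $[-1,1]$. The main obstacle is to exhaust the degenerate configurations, namely $a = b$, the boundary values $\mu \in \{0,1\}$, and the loci where the right-hand side equals $\pm 1$ or falls outside $[-1,1]$. Each of these is handled either by swapping the roles of the three parameters and taking $(\mu,\psi)$ or $(\phi,\psi)$ as the free pair, or by observing that the corresponding patch pinches to a lower-dimensional piece of the surface; when $|c| > a+b$ the surface is empty. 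Combining the resulting charts yields a piecewise explicit map from an elementary parameter region into $\R^4$ that traces out the entire solution set.
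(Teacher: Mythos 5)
Your approach is genuinely different from the paper's, and it is sound as a reduction strategy. The paper eliminates $w$ via $w^2 = 1 - x^2 - y^2 - z^2$, rewrites the remaining constraint as an ellipsoid $2(a+b)x^2 + 2ay^2 + 2bz^2 = a+b+c$ in $\R^3$, and then classifies the solution set by comparing the three semi-axes to $1$; this yields eleven explicit sub-cases (empty set, pair of points, pair of ellipsoids, $y$- and $z$-barrels, notched barrels, $yz$-caps, touching and crossed ellipsoids), each with its own trigonometric chart and its own domain-hole bookkeeping. You instead keep all four variables, pair $(x,w)$ and $(y,z)$, rewrite the form as $(a+b)(x^2-w^2) + (a-b)(y^2-z^2)$, and foliate $S^3$ by Clifford tori $\mu = x^2 + w^2$, collapsing the quadric to the single scalar relation $(a+b)\mu\cos 2\phi + (a-b)(1-\mu)\cos 2\psi = c$. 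Your algebra is correct, the identity checks out, and your emptiness criterion $|c| > a+b$ agrees with the paper's. This torus foliation is also conceptually closer to the companion Lemma \ref{lem_solutions_toroidal}, so your method has the virtue of treating the ellipsoidal and toroidal regimes in a unified way.

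What your approach buys is a more symmetric and shorter reduction; what it costs is that the case analysis you defer ("exhaust the degenerate configurations") is precisely where the bulk of the paper's proof lives, and it is not merely bookkeeping. The lemma's real deliverable is a \emph{finite explicit list} of parameterization formulae together with their domains, connectivity, genus, and domain-hole structure; the paper's eleven-case taxonomy is what feeds Theorem \ref{thm_ellipsoidal_predicate_parameterization} and the corollaries on component count, dimension, genus, and non-manifold loci. Your chart $\cos 2\psi = \frac{c - (a+b)\mu\cos 2\phi}{(a-b)(1-\mu)}$ is implicit in $\psi$: you still have to choose a branch of $\arccos$, glue the four preimages of $\cos 2\psi$ on each fibre, identify the pinch sets where the right-hand side hits $\pm 1$, and handle the collapsed fibres at $\mu \in \{0,1\}$ and the resonant case $a=b$ by switching charts. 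Until that is carried out explicitly, you do not yet have the formulae the lemma promises, nor the topological inventory (two separate components vs. one barrel vs. two caps, empty vs. pointlike vs. circular domain holes) that the rest of the paper depends on. So the proposal is a correct alternative framing, but it stops short of a complete proof of the stated lemma.
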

\begin{proof}
We begin by substituting $w^2 = 1 - x^2 - y^2 - z^2$ from the second equation into the first equation
\begin{align}  \label{eqn_sols_ell_subst}
\begin{cases}
& 2 (a + b) x^2 + 2 a y^2 + 2 b z^2 = a + b + c \\
& w^2 = 1 - x^2 - y^2 - z^2 .
\end{cases}
\end{align}
We observe that when $a + b + c < 0 \iff c < - a - b$, there are no real solutions at all:
\begin{itemize}
    \item sub-case $c < - a - b$ labelled \emph{"an empty set"} \\
          no real solutions;
\end{itemize}
For $a + b + c = 0 \iff c = - a - b$, there is a pair of solutions
\begin{equation*}
x = 0, \quad y = 0, \quad z = 0, \quad w = \pm 1,
\end{equation*}
which are covered by the case:
\begin{itemize}
    \item sub-case $c = - a - b$ labelled \emph{"a pair of points"} \\
          empty domain hole (explained later);
\end{itemize}
otherwise, we have $a + b + c > 0 \iff c > - a - b$ and further modifications can be made. We can rewrite the first equation in the form of an ellipsoid
\begin{equation*}
\left(\frac{x}{\sqrt{\frac{a + b + c}{2 (a + b)}}}\right)^2 + \left(\frac{y}{\sqrt{\frac{a + b + c}{2 a}}}\right)^2 + \left(\frac{z}{\sqrt{\frac{a + b + c}{2 b}}}\right)^2 = 1.
\end{equation*}

The ellipsoid radii are $r_x = \sqrt{\frac{a + b + c}{2 (a + b)}}$, $r_y = \sqrt{\frac{a + b + c}{2 a}}$, and $r_z = \sqrt{\frac{a + b + c}{2 b}}$. It should be noted that it is always the case that $r_x < r_y$ and $r_x < r_z$. Although an ellipsoid can be parameterized easily, it has to be parameterized so the domain satisfies the second equation with $w = \pm \sqrt{1 - x^2 - y^2 - z^2}$. The value under the root sign must be non-negative, or the domain would be too wide and some solutions would be imaginary.
Geometrical interpretation can help to find the appropriate cases. A valid triple $x$, $y$, and $z$ is a point on the ellipsoid such that the distance to the origin is not greater than $1$. Thus, the points on the ellipsoid located inside a three-dimensional closed unit ball are valid. The two parameterizations for positive $w$ and negative $w$ are simply called a positive parameterization and a negative parameterization. When a positive parameterization on the ellipsoid reaches a unit ball, or equivalently when $w = 0$, it extends to the negative parameterization, which is a mirrored copy of the first domain along the contact curve with the unit ball. The set where $w = 0$ is called \emph{a domain hole}. In each case, when a domain hole is non-empty, a process of \emph{domain sewing} is required along the boundary of a domain hole in order to handle the connectivity correctly between different parts of the domain. Three main cases depend on the radii lengths, as follows.

\begin{enumerate}
    \item Case $r_x \le 1 \wedge r_y \le 1 \wedge r_z \le 1$: all three radii are not greater than $1$ \\
        Excluding cases when the radius is equal to $1$, the whole ellipsoid fits inside the unit ball and the parameterization covers all of the points on the ellipsoid.
        \begin{itemize}
            \item Sub-case $r_x < 1 \wedge r_y < 1 \wedge r_z < 1$ labelled \emph{"a pair of separate ellipsoids"} \\
                  equivalent condition: $c < \min(a - b, b - a)$, \\
                  empty domain hole.
        \end{itemize}
        There are four boundary sub-cases with one or more radii equal to $1$.
        \begin{itemize}
            \item Sub-case $r_x = 1 \wedge r_y \le 1 \wedge r_z \le 1$ labelled \emph{"an empty set"} \\
                  equivalent condition: $c = a + b \wedge a \le 0 \wedge b \le 0$ \\
                  contradicts the assumptions of the lemma
            \item Sub-case $r_x < 1 \wedge r_y = 1 \wedge r_z < 1$ labelled \emph{"a pair of $y$-touching ellipsoids"} \\
                  equivalent condition: $c = a - b \wedge a < b$ \\
                  point domain hole
            \item sub-case $r_x < 1 \wedge r_y < 1 \wedge r_z = 1$ labelled \emph{"a pair of $z$-touching ellipsoids"} \\
                  equivalent condition: $c = b - a \wedge a > b$ \\
                  point domain hole
            \item Sub-case $r_x < 1 \wedge r_y = 1 \wedge r_z = 1$ labelled \emph{"a pair of $yz$-crossed ellipsoids"} \\
                  equivalent condition: $0 = c < a + b \wedge a = b$ \\
                  circular domain hole
        \end{itemize}

    \item Case $r_x > 1 \wedge r_y > 1 \wedge r_z > 1$: all three radii are greater than $1$ \\
        In this case, the whole ellipsoid is outside the unit ball and the parameterization domain is empty.
        \begin{itemize}
            \item Sub-case $r_x > 1 \wedge r_y > 1 \wedge r_z > 1$ labelled \emph{"an empty set"} \\
                  equivalent condition: $c > a + b$ \\
                  empty domain hole
        \end{itemize}

    \item Case $(r_x \le 1 \vee r_y \le 1 \vee r_z \le 1) \wedge (r_x > 1 \vee r_y > 1 \vee r_z > 1)$: at least one radius is not greater than $1$ and at least one radius is greater than $1$ \\
        We observe that $r_x$ is always strictly the shortest radius. Thus, $r_x$ must not be greater than $1$, otherwise it would be greater
        than $1$ and the remaining two radii would also be greater than $1$, which implies that all the radii are greater than $1$, as covered
        by the previous case.
        \begin{itemize}
            \item Sub-case $r_x > 1$ labelled \emph{"an empty set"} \\
                  equivalent condition: $c > a + b$ \\
                  contradicts the assumptions of the lemma
        \end{itemize}
        Thus, if $r_x$ is not greater than $1$, then either $r_y$ is greater than $1$ or $r_z$ is greater than $1$, or both $r_y$ and $r_z$ are greater than $1$.
        For the first two cases, the solution set is similar to a barrel (an ellipsoid with two opposite caps removed) and the remaining case represents a solution made from two ellipsoid
        caps, which were removed in the previous cases. \\
        \\
        \textbf{$y$-barrel-like} sub-cases $r_x \le 1 \wedge r_y > 1 \wedge r_z \le 1$

        \begin{itemize}
            \item Sub-case $r_x < 1 \wedge r_y > 1 \wedge r_z < 1$ labelled \emph{"a $y$-barrel"} \\
                  equivalent condition: $c \in \mathopen{}\mathclose{\left(a - b, b - a \right)}$ \\
                  deformed ellipse domain hole
        \end{itemize}
        There are three boundary sub-cases with one or more radii equal to $1$.
        \begin{itemize}
            \item Sub-case $r_x = 1 \wedge r_y > 1 \wedge r_z < 1$ labelled \emph{"an empty set"} \\
                  equivalent condition: $a < 0$ \\
                  contradicts the assumptions of the lemma
            \item Sub-case $r_x < 1 \wedge r_y > 1 \wedge r_z = 1$ labelled \emph{"a notched $y$-barrel"} \\
                  equivalent condition: $c = b - a > 0$ \\
                  deformed ellipse domain hole
            \item Sub-case $r_x = 1 \wedge r_y > 1 \wedge r_z = 1$ labelled \emph{"an empty set"} \\
                  equivalent condition: $a = 0$ \\
                  contradicts the assumptions of the lemma
        \end{itemize}
        \hfill
        \\
        \textbf{$z$-barrel-like} sub-cases $r_x \le 1 \wedge r_y \le 1 \wedge r_z > 1$

        \begin{itemize}
            \item Sub-case $r_x < 1 \wedge r_y < 1 \wedge r_z > 1$ labelled \emph{"a $z$-barrel"} \\
                  equivalent condition: $c \in \mathopen{}\mathclose{\left(b - a, a - b \right)}$ \\
                  deformed ellipse domain hole
        \end{itemize}
        There are three boundary sub-cases with one or more radii equal to $1$.
        \begin{itemize}
            \item Sub-case $r_x = 1 \wedge r_y < 1 \wedge r_z > 1$ labelled \emph{"an empty set"}
                  equivalent condition: $b < 0$ \\
                  contradicts the assumptions of the lemma
            \item Sub-case $r_x < 1 \wedge r_y = 1 \wedge r_z > 1$ labelled \emph{"a notched $z$-barrel"}
                  equivalent condition: $c = a - b > 0$ \\
                  deformed ellipse domain hole
            \item Sub-case $r_x = 1 \wedge r_y = 1 \wedge r_z > 1$ labelled \emph{"an empty set"}
                  equivalent condition: $b = 0$ \\
                  contradicts the assumptions of the lemma
        \end{itemize}
        \hfill
        \\
        \textbf{yz-caps-like} sub-cases $r_x \le 1 \wedge r_y > 1 \wedge r_z > 1$

        \begin{itemize}
            \item Sub-case $r_x < 1 \wedge r_y > 1 \wedge r_z > 1$ labelled \emph{"a pair of separate yz-caps"} \\
                  equivalent condition: $c \in \mathopen{}\mathclose{\left(\max(a - b, b - a), a + b \right)}$ \\
                  deformed ellipse domain hole
        \end{itemize}
        There is one boundary sub-case with one or more radii equal to $1$.
        \begin{itemize}
            \item Sub-case $r_x = 1 \wedge r_y > 1 \wedge r_z > 1$ labelled \emph{"a pair of separate yz-caps"} \\
                  equivalent condition: $c = a + b$ \\
                  deformed ellipse domain hole
      \end{itemize}
\end{enumerate}
Depending on the values of $a$ and $b$, there are $11$ different cases for parameterization, as summarized in Table \ref{ell_param_domain}.
\begin{table}[ht]
\begin{center}
\begin{tabular}{|c|l|l|l|}
\hline
param. type & $a < b$ & $a = b = t / 2$ & $a > b$ \\ \hline
\textbf{1/1/1} & $c \in \mathopen{}\mathclose{\left(-\infty, - a - b \right)}$ & $c \in \mathopen{}\mathclose{\left(-\infty, - t \right)}$ & $c \in \mathopen{}\mathclose{\left(-\infty, - a - b \right)}$ \\ \hline
\textbf{2/2/2} & $c = - a - b$ & $c = - t$ & $c = - a - b$ \\ \hline
\textbf{3/3/3} & $c \in \mathopen{}\mathclose{\left(- a - b, a - b \right)}$ & $c \in \mathopen{}\mathclose{\left(- t, 0 \right)}$ & $c \in \mathopen{}\mathclose{\left(- a - b, b - a \right)}$ \\ \hline
\textbf{4/5/6} & $c = a - b$ & $c = 0$ & $c = b - a$ \\ \hline
\textbf{7/1/8} & $c \in \mathopen{}\mathclose{\left(a - b, b - a \right)}$ & $ c \in \emptyset$ & $c \in \mathopen{}\mathclose{\left(b - a, a - b \right)}$ \\ \hline
\textbf{9/1/10} & $c = b - a$ & $ c \in \emptyset$ & $c = a - b$ \\ \hline
\textbf{11/11/11} & $c \in \mathopen{}\mathclose{\left(b - a, a + b \right)}$ & $c \in \mathopen{}\mathclose{\left(0, t \right)}$ & $c \in \mathopen{}\mathclose{\left(a - b, a + b \right)}$ \\ \hline
\textbf{11/11/11} & $c = a + b$ & $c = t$ & $c = a + b$ \\ \hline
\textbf{1/1/1} & $c \in \mathopen{}\mathclose{\left(a + b, + \infty \right)}$ & $c \in \mathopen{}\mathclose{\left(t, + \infty \right)}$ & $c \in \mathopen{}\mathclose{\left(a + b, + \infty \right)}$ \\ \hline
\end{tabular}
\end{center}
\caption{Parameterization types for different values of $a$ and $b$}
\label{ell_param_domain}
\end{table}
These cases are numbered according to the following list of \emph{parameterization types}.
\begin{enumerate}
\item[(1)] an empty case
\item[(2)] a pair of points
\item[(3)] a pair of separate ellipsoids
\item[(4)] a pair of $y$-touching ellipsoids
\item[(5)] a pair of $yz$-crossed ellipsoids
\item[(6)] a pair of $z$-touching ellipsoids
\item[(7)] a $y$-barrel
\item[(8)] a $z$-barrel
\item[(9)] a notched $y$-barrel
\item[(10)] a notched $z$-barrel
\item[(11)] a pair of separate $yz$-caps
\end{enumerate}

Each case is discussed separately in the following.

\paragraph{Cases 1}
There is no parameterization. The number of connected components is equal to zero. The domain hole is empty. The dimension of the solution is $-1$.

\paragraph{Case 2}
The parameterization is trivial and it comprises two points: $(x, y, z, w) = (0, 0, 0, \pm 1)$. There are two connected components. The domain hole is empty. The dimension of the solution is $0$.

\paragraph{Case 3}
In this case, the solution is a pair of separate ellipsoids. Using a common trigonometric parameterization of an ellipsoid, we write:
\begin{align*}
\begin{cases}
& x = \sqrt{\frac{a + b + c}{2 (a + b)}} \sin(\beta) \cos(\alpha) \\
& y = \sqrt{\frac{a + b + c}{2 a}} \sin(\beta) \sin(\alpha) \\
& z = \sqrt{\frac{a + b + c}{2 b}} \cos(\beta) \\
& w = \sigma \sqrt{1 - x^2 - y^2 - z^2}
\end{cases}
\end{align*}
for $\alpha \in [0, 2 \pi)$, $\beta \in [0, \pi]$, and $\sigma \in \{ -1, 1\}$. In this case, $w = 0$ is impossible so there are two separate ellipsoids as well as two components. The domain hole is empty. The dimension of the solution is $2$.

\paragraph{Cases 4, 5, and 6}
In these cases, $r_x = 1$ is impossible because this implies that $c = a + b$, which is not covered by this case. There are always two solutions for $w$. Note that for $r_y < 1$ and $r_z < 1$, it is always the case that $w \ne 0$ so the two solutions never overlap and there are two connected components (a pair of separate ellipsoids, which are covered by the previous case).

In the other case, at least one of $r_y$, $r_z$ is equal to $1$ and the parameters $(\alpha, \beta)$ exist where $w = 0$, and the two solutions have a common point. This occurs when $c = a - b$ or $c = b - a$, or equivalently when $c = \min(a - b, b - a)$. The subset of $w = 0$ is zero-dimensional (when only one radius is equal to $1$) or one-dimensional (when two radii are equal to $1$; equivalently, $a = b \wedge c = 0$). The two zero-dimensional sub-cases are called $y$-touching or $z$-touching ellipsoids according to the axis in the direction where the ellipsoids intersect. The other one-dimensional case is called $yz$-crossed ellipsoids. There is only one connected component. For $y$-touching and $z$-touching ellipsoids, the domain hole is a pair of points. The positive and negative parameterizations are sewn together via these two points. In the case of the $yz$-crossed ellipsoids, the domain hole is a unit circle. The two parameterizations are sewn together along a unit circle. In all three sub-cases, the solution is \emph{not} a 2-manifold due to the local topology near domain hole in each case. Excluding the domain hole, the dimension of the solution is $2$. Despite the different geometry of the surfaces in this case, we can use the same parameterization as in the ellipsoidal case 3.

\paragraph{Cases 7 and 8}
The barrel-like parameterization is named according to the shape of the set being parameterized. In this case, one ellipsoid radius is less than $1$, one is less than or equal to $1$, and one is greater than $1$. The barrel-like set under parameterization is part of an ellipsoid, which is located in a unit sphere, as shown in Figure \ref{fig_barrel_param}.
\begin{figure}[ht]
\centering
\begin{subfigure}{0.5\textwidth}
  \centering
  \includegraphics[width=0.95\textwidth]{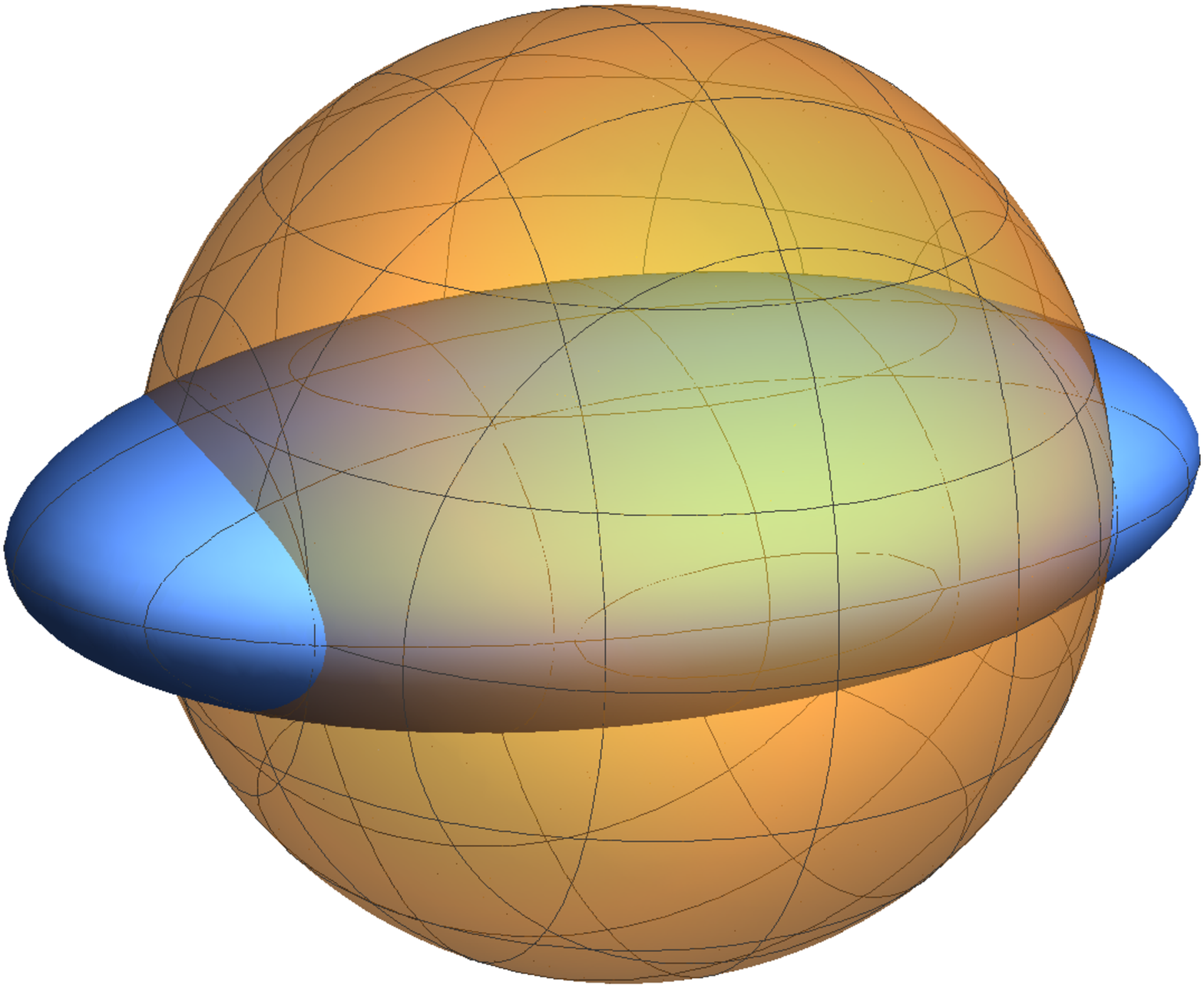}
  \caption{Geometry of parameterization}
\end{subfigure}%
\begin{subfigure}{0.5\textwidth}
  \centering
  \includegraphics[width=0.95\textwidth]{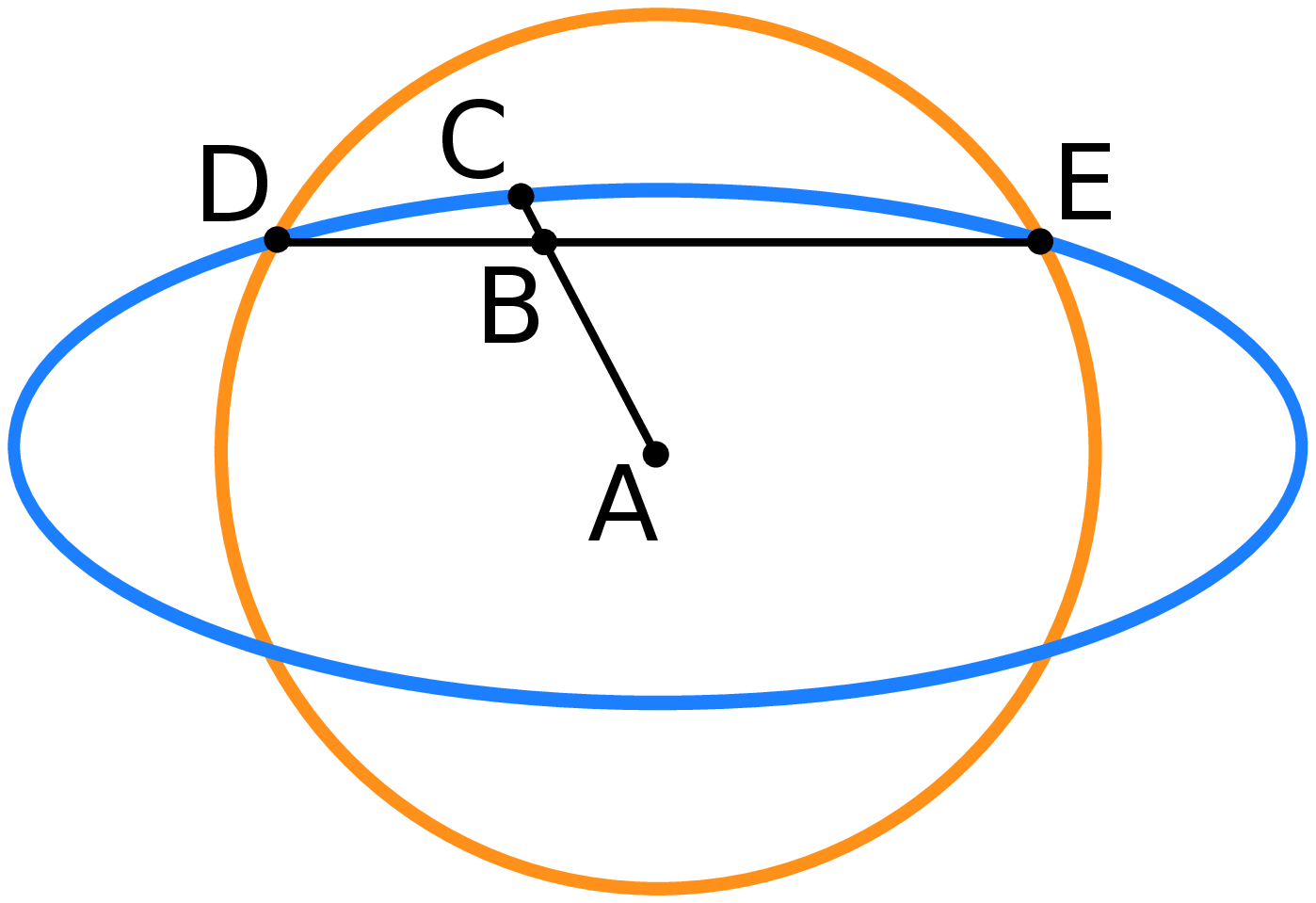}
  \caption{Parameterization schema}
\end{subfigure}%
\caption{Barrel-like parameterization}
\label{fig_barrel_param}
\end{figure}
In order to obtain a parameterization, we calculate two intersection curves (domain holes) on the opposite sides of the longest ellipsoid radius.
These two curves can be parameterized easily with one angle. The corresponding points on both curves $D$ and $E$ are then connected by the segment $DE$.
The segment is parameterized with the second variable. Finally, the vector $AB$ is scaled up by using the implicit equation of the ellipsoid to calculate
a final point on the ellipsoid. For the $y$-barrel-like parameterization, we consider the following intersection curves.
\begin{align} \label{eqn_barrel_cases}
\begin{cases}
& 2 (a + b) \bar{x}^2 + 2 a \bar{y}^2 + 2 b \bar{z}^2 = a + b + c \\
& \bar{x}^2 + \bar{y}^2 + \bar{z}^2 = 1
\end{cases}
\end{align}
A solution to \eqref{eqn_barrel_cases} is a pair of curves:
\begin{align}
\begin{cases}
& \bar{x} = \sqrt{\frac{b - a + c}{2 b}} \cos(\alpha) \\
& \bar{z} = \sqrt{\frac{b - a + c}{2 (b - a)}} \sin(\alpha) \\
& \bar{y} = \pm \sqrt{1 - \bar{x}^2 - \bar{z}^2}
\end{cases}
\end{align}
for $\alpha \in [0, 2 \pi)$. The coordinate $y$ of $DE$ segment is parameterized as follows
\begin{align*}
\tilde{x} & = \bar{x} \\
\tilde{y} & = h \sqrt{1 - \bar{x}^2 - \bar{z}^2} \\
\tilde{z} & = \bar{z}
\end{align*}
for $h \in [-1, 1]$. A factor $d$ of the scaled up $d AB$ segment must satisfy
\begin{equation*}
2 (a + b) (d \tilde{x})^2 + 2 a (d \tilde{y})^2 + 2 b (d \tilde{z})^2 = a + b + c,
\end{equation*}
which reduces to
\begin{equation*}
d = \sqrt{\frac{a + b + c}{2 (a h^2 + (a (1 - h^2) + b) \bar{x}^2 + (b - a h^2) \bar{y}^2)}}
\end{equation*}
and it is sufficient to take only the positive value of $d$ for the purpose of parameterization. After scaling up $AB$ by $d$, the final $y$-barrel parameterization becomes
\begin{align}
\begin{cases}
& x = u k \\
& y = h k \sqrt{1 - u^2 - v^2} \\
& z = v k \\
& w = \pm \sqrt{1 - k^2 (h^2 + (1 - h^2)(u^2 + v^2))},
\end{cases}
\end{align}
where
\begin{align*}
u & = \sqrt{\frac{b - a + c}{2 b}} \cos(\alpha) \\
v & = \sqrt{\frac{b - a + c}{2 (b - a)}} \sin(\alpha) \\
k & = \sqrt{\frac{a + b + c}{2 (a h^2 + (a (1 - h^2) + b) u^2 + (b - a h^2) v^2)}} \\
\alpha & \in [0, 2 \pi) \\
h & \in [-1, 1]
\end{align*}
$z$-barrel case is analogous to $y$-barrel case. The corresponding $z$-barrel parameterization is
\begin{align}
\begin{cases}
& x = u k \\
& y = v k \\
& z = h k \sqrt{1 - u^2 - v^2} \\
& w = \pm \sqrt{1 - k^2 (h^2 + (1 - h^2)(u^2 + v^2))},
\end{cases}
\end{align}
where
\begin{align*}
u & = \sqrt{\frac{a - b + c}{2 a}} \cos(\alpha) \\
v & = \sqrt{\frac{a - b + c}{2 (b - a)}} \sin(\alpha) \\
k & = \sqrt{\frac{a + b + c}{2 (b h^2 + (b (1 - h^2) + a) u^2 + (a - b h^2) v^2)}} \\
\alpha & \in [0, 2 \pi) \\
h & \in [-1, 1].
\end{align*}
There is only one connected component, which comprises a positive and negative parameterization sewn along the two domain holes as deformed ellipses. Roughly speaking, a moving point on a positive parameterization that goes through the first domain hole enters the negative parameterization. The point then enters the opposite domain hole on the negative parameterization and returns to the positive parameterization through the second domain hole. From a topological viewpoint, the domain is a torus and the dimension of the solution is $2$.

\paragraph{Cases 9 and 10}

The two cases of notched barrel parameterization are similar to the barrel parameterization case. The only difference is that a pair of antipodal points exists on the ellipsoid where the domain is shrunk to a point. As a result, the spin-quadric is not a $2$-manifold near these two points. The domain hole is also unusual because it comprises two deformed ellipses with two common points. A notched barrel case is shown in Figure \ref{fig_notched_barrel_param}.

\begin{figure}[ht]
\centering
\includegraphics[width=0.5\textwidth]{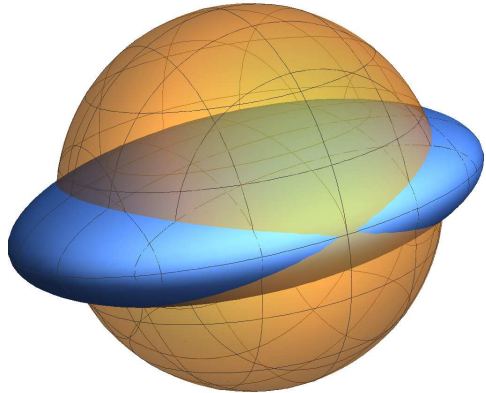}
\caption{Notched-barrel-like parameterization}
\label{fig_notched_barrel_param}
\end{figure}

Despite the different geometric properties of the notched-barrel-like parameterization, we can use the same parameterization employed for the barrel-like case of parameterization.

\paragraph{Case 11}
The last case is a $yz$-caps parameterization, which geometrically resembles a pair of caps on the opposite sides of an ellipsoid.
In this case, one radius is less than or equal to $1$ and the remaining two radii are greater than one. Both caps need to be parameterized separately because they never have a common point.
A schema for $yz$-caps parameterization is presented in Figure \ref{fig_caps_param}.
\begin{figure}[ht]
\centering
\begin{subfigure}{0.5\textwidth}
  \centering
  \includegraphics[width=0.95\textwidth]{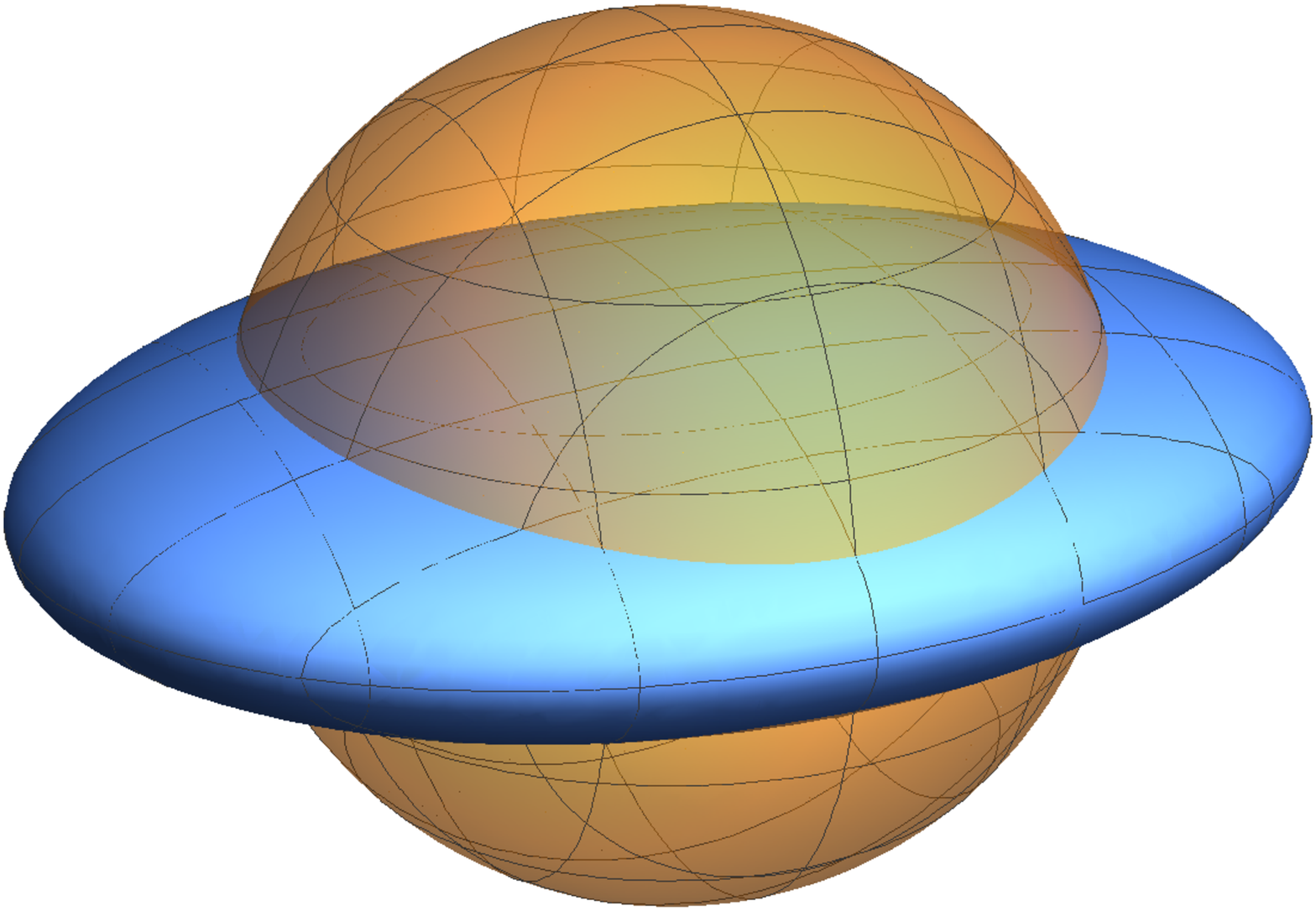}
  \caption{Geometry of parameterization}
\end{subfigure}%
\begin{subfigure}{0.5\textwidth}
  \centering
  \includegraphics[width=0.95\textwidth]{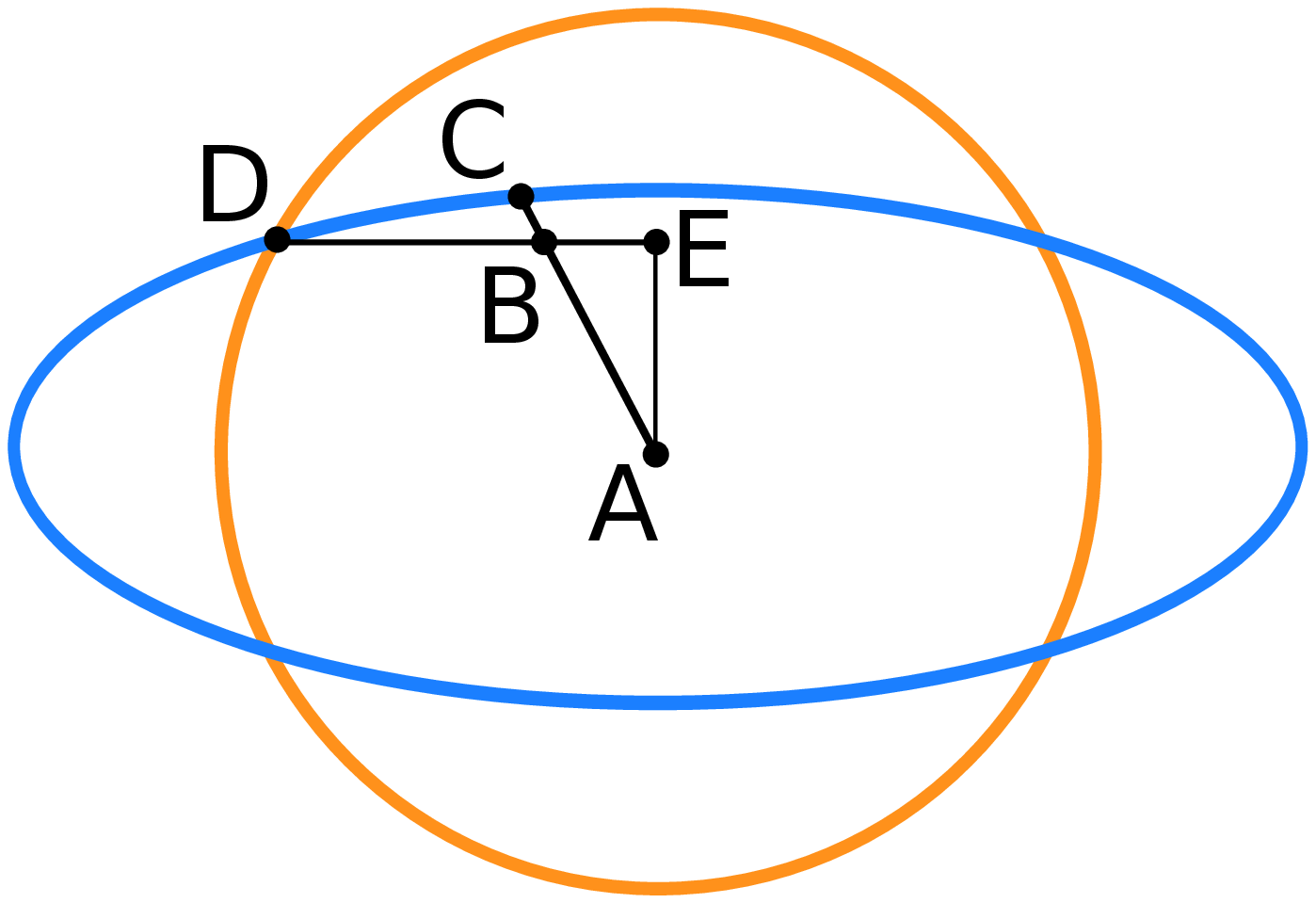}
  \caption{Parameterization schema}
\end{subfigure}%
\caption{Caps-like parameterization}
\label{fig_caps_param}
\end{figure}
A parameterization is derived in a similar manner to the barrel-like case. First, two domain hole curves are calculated and parameterized with an angle.
Next, the segment $DE$ is introduced, which connects each point on a domain hole curve with the point $E$. This is the central point of the domain hole projection onto the $yz$ plane.
The point $B$ on segment $DE$ is parameterized with the second variable. To generate a parametric point on the ellipsoid, we connect
segment $AB$ and point $B$ with the central point $A$ of the ellipsoid. The final point $C$ on a line extending $AB$ is then calculated from the ellipsoid equation.
For $yz$-caps parameterization, we consider the following intersection curves.
\begin{align} \label{eqn_caps_cases}
\begin{cases}
& 2 (a + b) \bar{x}^2 + 2 a \bar{y}^2 + 2 b \bar{z}^2 = a + b + c \\
& \bar{x}^2 + \bar{y}^2 + \bar{z}^2 = 1
\end{cases}
\end{align}
A solution to \eqref{eqn_caps_cases} is a pair of curves:
\begin{align}
\begin{cases}
& \bar{y} = \sqrt{\frac{a + b - c}{2 b}} \cos(\alpha) \\
& \bar{z} = \sqrt{\frac{a + b - c}{2 a}} \sin(\alpha) \\
& \bar{x} = \sigma \sqrt{1 - \bar{y}^2 - \bar{z}^2}
\end{cases}
\end{align}
for $\alpha \in [0, 2 \pi)$ and $\sigma \in \{-1, 1\}$. For each domain hole $\sigma$, the segment $DE$ is parameterized with the second variable:
\begin{align}
\begin{cases}
& \tilde{x} = \bar{x} \\
& \tilde{y} = h \bar{y} \\
& \tilde{z} = h \bar{z}
\end{cases}
\end{align}
for $h \in [0, 1]$. Note that the coordinates of the segment $AB$ are $[\tilde{x}, \tilde{y}, \tilde{z}]^T$. In order to calculate $C$, we find $d$ such that $C = d AB$ is a point on the ellipsoid:
\begin{equation*}
2 (a + b) (d \tilde{x})^2 + 2 a (d \tilde{y})^2 + 2 b (d \tilde{z})^2 = a + b + c,
\end{equation*}
which reduces to
\begin{equation*}
d = \sqrt{\frac{a + b + c}{2 (a + b - (a (1 - h^2) + b) \bar{y}^2 - (b (1 - h^2) + a) \bar{z}^2}}
\end{equation*}
and only the positive value of $d$ is taken for parameterization. After scaling up $DE$ by $d$, the final yz-caps parameterization becomes
\begin{align}
\begin{cases}
& x = \sigma k \sqrt{1 - u^2 - v^2} \\
& y = h k u \\
& z = h k v \\
& w = \pm \sqrt{1 - k^2 (1 - (1 - h^2)(u^2 + v^2))},
\end{cases}
\end{align}
where
\begin{align*}
u & = \sqrt{\frac{a + b - c}{2 b}} \cos(\alpha) \\
v & = \sqrt{\frac{a + b - c}{2 a}} \sin(\alpha) \\
k & = \sqrt{\frac{a + b + c}{2 (a + b - (a (1 - h^2) + b) u^2 - (b (1 - h^2) + a) v^2}} \\
\alpha & \in [0, 2 \pi) \\
h & \in [0, 1] \\
\sigma & \in \{-1, 1\},
\end{align*}
There are two connected components, where each comprises a positive and negative parameterization sewn along a corresponding domain hole as a deformed ellipse. The dimension of the solution is $2$.

All of the cases have been considered so the parameterization and the proof are complete.
\end{proof}

An ellipsoidal general predicate is named according to its parameterization domain, which is an ellipsoid or part of it. In this section, we conclude with the parameterization of an ellipsoidal general predicate.
\begin{theorem} \label{thm_ellipsoidal_predicate_parameterization}
Let $\GG_s = P \cdot \Rot_s(Q) + U \cdot \Rot_s(V) + c$ be \textbf{an ellipsoidal general predicate}. Assume that $S_\GG(s)$ is a spin-quadric representing $\GG_s$ and $M_\GG$ is the matrix of the associated quadratic form.
Let $\mathbf{s}(\omega) = [s_{12}(\omega), s_{23}(\omega), s_{31}(\omega), s_0(\omega)]^T$ be a vectorized form of a spinor $s(\omega) = s_{12}(\omega) \e_{12} + s_{23}(\omega) \e_{23}(\omega) + s_{31}(\omega) \e_{31} + s_0(\omega)$ and $\mathbf{t}$ is a vector $\mathbf{t}(\omega) = [x(\omega), y(\omega), z(\omega), w(\omega)]^T$ for $\omega \in \Omega \subset \R^2$. A parameterization of the spin-quadric $S_\GG = s(\omega), \, \omega \in \Omega$ exists:
\begin{equation}
\mathbf{s} = Q \mathbf{t},
\end{equation}
which comprises $11$ parameterization cases for $\mathbf{t}$ according to Lemma \ref{lem_solutions_ellipsoidal}, where the parameters $a$ and $b$ are set to
\begin{align*}
a & = \|P\|\|Q\| \\
b & = \|U\|\|V\|
\end{align*}
and a rotation matrix $Q$, which is a matrix of orthonormal eigenvectors of matrix $M_\GG$:
\begin{equation*}
Q =
\left[ \begin{smallmatrix}
    \frac{W^{(1)}_1}{\|W^{(1)}\|} & \frac{W^{(2)}_1}{\|W^{(2)}\|} & \frac{W^{(3)}_1}{\|W^{(3)}\|} & \frac{W^{(4)}_1}{\|W^{(4)}\|} \\
    \frac{W^{(1)}_2}{\|W^{(1)}\|} & \frac{W^{(2)}_2}{\|W^{(2)}\|} & \frac{W^{(3)}_2}{\|W^{(3)}\|} & \frac{W^{(4)}_2}{\|W^{(4)}\|} \\
    \frac{W^{(1)}_3}{\|W^{(1)}\|} & \frac{W^{(2)}_3}{\|W^{(2)}\|} & \frac{W^{(3)}_3}{\|W^{(3)}\|} & \frac{W^{(4)}_3}{\|W^{(4)}\|} \\
    \frac{W^{(1)}_4}{\|W^{(1)}\|} & \frac{W^{(2)}_4}{\|W^{(2)}\|} & \frac{W^{(3)}_4}{\|W^{(3)}\|} & \frac{W^{(4)}_4}{\|W^{(4)}\|}
\end{smallmatrix} \right],
\end{equation*}
where the vectors $W^{(j)}$ are calculated according to Lemma \ref{lem_eigenvectors_of_general_predicate}.
\end{theorem}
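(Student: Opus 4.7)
The plan is to assemble the parameterization by combining the spectral decomposition of $M_\GG$ developed in this section with the algebraic parameterization of Lemma \ref{lem_solutions_ellipsoidal}. First, Lemma \ref{lem_eigenvalues_of_general_predicate} supplies the four closed-form eigenvalues $\lambda_j = c - \gamma_j$, and Lemma \ref{lem_eigenvectors_of_general_predicate} supplies the corresponding eigenvectors $W^{(j)}$. Because $\GG_s$ is ellipsoidal, Lemma \ref{lem_eigenvalues_of_ellipsoidal_general_predicate} guarantees that the four eigenvalues are pairwise distinct, so each eigenspace is one-dimensional and, since $M_\GG$ is real symmetric, the four eigenspaces are mutually orthogonal. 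Normalizing the columns then produces an orthogonal matrix $Q$ and the spectral decomposition $M_\GG = Q \Lambda Q^T$.

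Next, I would perform the linear change of variables $\mathbf{t} = Q^T \mathbf{s}$ as already sketched in the discussion preceding the statement. Orthogonality of $Q$ simultaneously converts the defining equation $\mathbf{s}^T M_\GG \mathbf{s} = 0$ into $\mathbf{t}^T \Lambda \mathbf{t} = 0$ and the normalization $\mathbf{s}^T \mathbf{s} = 1$ into $\mathbf{t}^T \mathbf{t} = 1$. Inserting $\lambda_j = c - \gamma_j$ and writing $a = \|P\| \|Q\|$, $b = \|U\| \|V\|$ reproduces exactly the coupled system \eqref{eqn_parameterization_cases_ellipsoidal} in the unknowns $(t_{12}, t_{23}, t_{31}, t_0)$. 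Since the ellipsoidal assumption forces $a > 0$ and $b > 0$, the hypotheses of Lemma \ref{lem_solutions_ellipsoidal} are met and the lemma supplies the eleven parameterization cases tabulated in Table \ref{ell_param_domain}, one for each stratum of the $c$-axis.

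Finally, I would invert the change of variables: since $Q$ is orthogonal, $Q^{-1} = Q^T$, hence $\mathbf{s} = Q \mathbf{t}$. As $\omega$ ranges over the parameter domain $\Omega$ furnished by Lemma \ref{lem_solutions_ellipsoidal}, the composition $\omega \mapsto Q \mathbf{t}(\omega)$ sweeps out precisely the zero set of $\mathbf{s}^T M_\GG \mathbf{s}$ intersected with the unit sphere $\|\mathbf{s}\| = 1$, which is by definition the spin-quadric $S_\GG$.

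The main obstacle I anticipate is essentially bookkeeping rather than conceptual: one must confirm that the eigenvector formulas of Lemma \ref{lem_eigenvectors_of_general_predicate} (which carry the $\mathbf{w}_0 \ne 0$ caveat and are phrased via the pinor $\mathbf{w} = 1 - \alpha \beta \, \hat{P} \hat{U} \hat{Q} \hat{V} - \alpha \, \hat{P} \hat{Q} - \beta \, \hat{U} \hat{V}$) can always be combined into four nonzero, mutually orthogonal columns throughout the ellipsoidal regime, so that $Q$ is genuinely well-defined and orthogonal in every sub-case listed in Table \ref{ell_param_domain}. Once this verification is in place, the remainder of the proof is a direct transcription of the preceding lemmas.
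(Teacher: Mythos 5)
Your proposal is correct and follows essentially the same route as the paper: the theorem is stated immediately after the discussion that constitutes its proof, which consists precisely of the spectral decomposition $M_\GG = Q\Lambda Q^T$ (using Lemmas \ref{lem_eigenvalues_of_general_predicate} and \ref{lem_eigenvectors_of_general_predicate}), the orthogonal change of variables $\mathbf{t} = Q^T \mathbf{s}$ transforming the pair of constraints into the coupled system \eqref{eqn_parameterization_cases_ellipsoidal}, and an appeal to Lemma \ref{lem_solutions_ellipsoidal} with $a = \|P\|\|Q\|$, $b = \|U\|\|V\|$. Your closing caveat about the $\mathbf{w}_0 \ne 0$ qualification on the eigenvector formula is a legitimate loose end that the paper itself does not explicitly close within the ellipsoidal regime, so raising it shows good critical reading rather than a deviation from the intended argument.
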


The theorem above gives an insight into the global geometry of an ellipsoidal general predicate. The following are corollaries of the theorem.
\begin{itemize}
\item A spin-quadric of an ellipsoidal general predicate can be empty, zero-dimensional, or two-dimensional.
\item A spin-quadric of an ellipsoidal general predicate can comprise zero, one, or two connected components.
\item The genus of a spin-quadric of an ellipsoidal general predicate can be zero or one.
\item Ellipsoidal general predicates exist such that the associated spin-quadric is not a manifold.
\end{itemize}

\subsubsection{Parameterization of a toroidal general predicate}

For a toroidal general predicate, either $a = 0$ or $b = 0$, but not $a = b = 0$. In this case, equations \eqref{eqn_parameterization_cases_substituted} are reduced to
\begin{align} \label{eqn_parameterization_cases_toroidal_b_zero}
\begin{cases}
& a x^2 + a y^2 - a z^2 - a w^2 = c \\
& x^2 + y^2 + z^2 + w^2 = 1
\end{cases}
\end{align}
or
\begin{align} \label{eqn_parameterization_cases_toroidal_a_zero}
\begin{cases}
& b x^2 - b y^2 + b z^2 - b w^2 = c \\
& x^2 + y^2 + z^2 + w^2 = 1
\end{cases}
\end{align}
for $b = 0$ and $a = 0$, respectively. Lemma \ref{lem_solutions_toroidal} gives a solution to the set of equations \eqref{eqn_parameterization_cases_toroidal_b_zero} and \eqref{eqn_parameterization_cases_toroidal_a_zero}.
\begin{lemma} \label{lem_solutions_toroidal}
A solution to the following set of equations involving $x, y, z, w$
\begin{align*}
\begin{cases}
& (a + b) x^2 + (a - b) y^2 + (-a + b) z^2 + (-a - b) w^2 = c \\
& x^2 + y^2 + z^2 + w^2 = 1
\end{cases}
\end{align*}
for parameters $a \in \R^+, b \in \R^+, c \in \R$ such that
\begin{equation}
(a = 0 \wedge b \ne 0) \vee (a \ne 0 \wedge b = 0)
\end{equation}
can be parameterized according to the formulae given in the proof.
\end{lemma}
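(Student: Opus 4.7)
The plan is to treat the two exclusive sub-cases $b=0, a\ne 0$ and $a=0, b\ne 0$ separately, since the conditions are symmetric under the swap $(y,z)\leftrightarrow(x,w)$ (up to relabeling), and reduce each to a standard torus parameterization by splitting the unit $S^3$ into a product of two orthogonal circles.

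Concretely, in the first sub-case ($b=0$) the system reduces to
\begin{align*}
\begin{cases}
a(x^2+y^2) - a(z^2+w^2) = c \\
(x^2+y^2) + (z^2+w^2) = 1,
\end{cases}
\end{align*}
which I would solve linearly for the two block sums: $x^2+y^2 = (a+c)/(2a)$ and $z^2+w^2=(a-c)/(2a)$. From here the existence of real solutions depends on the sign of each right-hand side, producing three regimes: empty when $|c|>a$, a single circle when $c=\pm a$ (one block sum vanishes, forcing two coordinates to zero and leaving a circle in the other two), and a Clifford torus $S^1\times S^1$ when $-a<c<a$. In the generic regime I would then write the explicit parameterization
\begin{align*}
x &= \sqrt{(a+c)/(2a)}\cos\alpha, & y &= \sqrt{(a+c)/(2a)}\sin\alpha,\\
z &= \sqrt{(a-c)/(2a)}\cos\beta, & w &= \sqrt{(a-c)/(2a)}\sin\beta,
\end{align*}
with $\alpha,\beta\in[0,2\pi)$, and verify by direct substitution that both defining equations hold identically. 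This explains the name ``toroidal'': the solution set is intrinsically a product of two circles rather than a deformed sphere.

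The second sub-case ($a=0$) is handled by the same reduction, except the pairing of coordinates is $(x,z)$ versus $(y,w)$ instead of $(x,y)$ versus $(z,w)$, because the coefficient signs in the first equation become $+b,-b,+b,-b$. After linear elimination I would obtain $x^2+z^2=(b+c)/(2b)$ and $y^2+w^2=(b-c)/(2b)$, and write down the analogous parameterization with $b$ replacing $a$ and the coordinates swapped accordingly. The same three regimes (empty, circle, torus) appear, controlled now by the sign of $b\pm c$.

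The only real subtlety, and the main step I would be careful about, is the boundary behaviour $c=\pm a$ (resp.\ $c=\pm b$): one of the two radii collapses to zero, the torus degenerates to a circle, and the parameterization above becomes non-injective along that fibre. I would explicitly note that in these degenerate cases a single-angle parameterization should be used, and that outside $|c|\le a$ (resp.\ $|c|\le b$) the solution set is empty, so no parameterization exists. Since the list of regimes is short and purely algebraic, I expect no hidden obstacle beyond enumerating and labelling these boundary cases cleanly, in analogy with the larger case analysis of Lemma \ref{lem_solutions_ellipsoidal}.
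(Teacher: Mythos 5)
Your proposal is correct and follows essentially the same route as the paper: in each exclusive sub-case you divide the first equation by the nonzero parameter, take the sum and difference with the unit-sphere constraint to decouple the variables into two orthogonal $2$-blocks ($x^2+y^2$/$z^2+w^2$ when $b=0$, and $x^2+z^2$/$y^2+w^2$ when $a=0$), and then enumerate the same three regimes (empty when $|c|$ exceeds the active parameter, a single circle at the boundary, a Clifford torus inside) with the same trigonometric parameterization. The paper merely splits the boundary into two labelled circle cases per sub-case, giving its count of seven parameterization types, which matches your three-regime breakdown.
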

\begin{proof}
There are two major cases: the first for $a \ne 0$ and the second for $b \ne 0$, where these cases give the following set of equations, respectively.
\begin{align*}
\begin{cases}
& a x^2 + a y^2 - a z^2 - a w^2 = c \\
& x^2 + y^2 + z^2 + w^2 = 1
\end{cases}
\end{align*}
and
\begin{align*}
\begin{cases}
& b x^2 - b y^2 + b z^2 - b w^2 = c \\
& x^2 + y^2 + z^2 + w^2 = 1
\end{cases}
\end{align*}
\paragraph{First case} Consider the case where $a \ne 0$. Since $a$ is non-zero, both sides of the first equation can be divided by $a$.
By replacing the two equations with their sum and their difference, the set of equations is separated based on the variables, as follows.
\begin{align}
\begin{cases} \label{tor_cases_first}
& x^2 + y^2 = \frac{a + c}{2 a} \\
& z^2 + w^2 = \frac{a - c}{2 a}
\end{cases}
\end{align}
A solution exists unless
\begin{equation}
\frac{a + c}{2 a} < 0 \, \vee \, \frac{a - c}{2 a} < 0 \iff c \in \mathopen{}\mathclose{\left(-\infty, -a \right)} \cup \mathopen{}\mathclose{\left(a, \infty \right)},
\end{equation}
which defines a sub-case as follows.
\begin{itemize}
    \item Sub-case labelled \emph{"an empty set"} \\
          equivalent condition: $c \in \mathopen{}\mathclose{\left(-\infty, -a \right)} \cup \mathopen{}\mathclose{\left(a, \infty \right)}$
\end{itemize}
If $c = -a$ or $c = a$, the solution to \ref{tor_cases_first} is reduced to a circle, which corresponds to the following two sub-cases:
\begin{itemize}
    \item Sub-case labelled \emph{"a $zw$-circle"} \\
          equivalent condition: $c = -a$
    \item Sub-case labelled \emph{"a $xy$-circle"} \\
          equivalent condition: $c = a$
\end{itemize}
otherwise, the solution to \ref{tor_cases_first} has a toroidal parameterization, as follows.
\begin{itemize}
    \item Sub-case labelled \emph{"a $xy/zw$-torus"} \\
          equivalent condition: $c \in \mathopen{}\mathclose{\left(-a, a \right)}$
\end{itemize}
\paragraph{Second case} Consider the case where $b \ne 0$. Since $b$ is non-zero, both sides of the first equation can be divided by $b$.
By replacing the two equations with their sum and difference, the set of equations is separated based on the variables,
\begin{align}
\begin{cases} \label{tor_cases_second}
& x^2 + z^2 = \frac{b + c}{2 b} \\
& y^2 + w^2 = \frac{b - c}{2 b}
\end{cases}
\end{align}
A solution exists unless
\begin{equation}
\frac{b + c}{2 b} < 0 \, \vee \, \frac{b - c}{2 b} < 0 \iff c \in \mathopen{}\mathclose{\left(-\infty, -b \right)} \cup \mathopen{}\mathclose{\left(b, \infty \right)},
\end{equation}
which defines a sub-case, as follows.
\begin{itemize}
    \item Sub-case labelled \emph{"an empty set"} \\
          equivalent condition: $c \in \mathopen{}\mathclose{\left(-\infty, -b \right)} \cup \mathopen{}\mathclose{\left(b, \infty \right)}$
\end{itemize}
If $c = -b$ or $c = b$, the solution to \ref{tor_cases_second} is reduced to a circle, which corresponds to the following two sub-cases:
\begin{itemize}
    \item Sub-case labelled \emph{"a $yw$-circle"} \\
          equivalent condition: $c = -b$
    \item Sub-case labelled \emph{"a $xz$-circle"} \\
          equivalent condition: $c = b$
\end{itemize}
otherwise, the solution to \ref{tor_cases_second} has a toroidal parameterization, as follows.
\begin{itemize}
    \item Sub-case labelled \emph{"a $xz/yw$-torus"} \\
          equivalent condition: $c \in \mathopen{}\mathclose{\left(-b, b \right)}$
\end{itemize}
Depending on the values of $a$ and $b$, there are $7$ different cases of parameterization, which are summarized in Table \ref{tor_param_domain}.
Note that in contrast to Lemma \ref{lem_solutions_ellipsoidal}, there are no domain holes.
\begin{table}[ht]
\begin{center}
\begin{tabular}{|c|l|l|l|}
\hline
param. type & $a \ne 0$ & $b \ne 0$ \\ \hline
\textbf{1/1} & $c \in \mathopen{}\mathclose{\left(-\infty, -a \right)}$ & $c \in \mathopen{}\mathclose{\left(-\infty, -b \right)}$ \\ \hline
\textbf{4/7} & $c = - a$ & $c = - b$ \\ \hline
\textbf{2/5} & $c \in \mathopen{}\mathclose{\left(-a, a \right)}$ & $c \in \mathopen{}\mathclose{\left(- b, b \right)}$ \\ \hline
\textbf{3/6} & $c = a$ & $c = b$ \\ \hline
\textbf{1/1} & $c \in \mathopen{}\mathclose{\left(a, \infty \right)}$ & $c \in \mathopen{}\mathclose{\left(b, \infty \right)}$ \\ \hline
\end{tabular}
\end{center}
\caption{Parameterization types for different values of $a$ and $b$}
\label{tor_param_domain}
\end{table}
The cases are numbered according to the following list of \emph{parameterization types}.
\begin{enumerate}
\item[(1)] an empty case
\item[(2)] a $xy/zw$-torus
\item[(3)] a $xy$-circle
\item[(4)] a $zw$-circle
\item[(5)] a $xz/yw$-torus
\item[(6)] a $xz$-circle
\item[(7)] a $yw$-circle
\end{enumerate}

\paragraph{Case 1}
There is no solution or parameterization.

\paragraph{Cases 3, 4, 6, and 7}
A solution to \ref{tor_cases_first} and \ref{tor_cases_second} is an axial aligned circle. The corresponding parameterizations are \\
for case 3:
\begin{align*}
\begin{cases}
& x = \sqrt{\frac{a + c}{2 a}} \cos(\alpha) \\
& y = \sqrt{\frac{a + c}{2 a}} \sin(\alpha) \\
& z = 0 \\
& w = 0
\end{cases}
\end{align*}
for case 4:
\begin{align*}
\begin{cases}
& x = 0 \\
& y = 0 \\
& z = \sqrt{\frac{a - c}{2 a}} \cos(\alpha) \\
& w = \sqrt{\frac{a - c}{2 a}} \sin(\alpha)
\end{cases}
\end{align*}
for case 6:
\begin{align*}
\begin{cases}
& x = \sqrt{\frac{b + c}{2 b}} \cos(\alpha) \\
& y = 0 \\
& z = \sqrt{\frac{b + c}{2 b}} \sin(\alpha) \\
& w = 0
\end{cases}
\end{align*}
for case 7:
\begin{align*}
\begin{cases}
& x = 0 \\
& y = \sqrt{\frac{b - c}{2 b}} \cos(\alpha) \\
& z = 0 \\
& w = \sqrt{\frac{b - c}{2 b}} \sin(\alpha),
\end{cases}
\end{align*}
where $\alpha \in [0, 2 \pi)$ in all cases.

\paragraph{Cases 2 and 5}
The set under parameterization is a torus. A common trigonometric parameterization is \\
for case 2:
\begin{align*}
\begin{cases}
& x = \sqrt{\frac{a + c}{2 a}} \cos(\alpha) \\
& y = \sqrt{\frac{a + c}{2 a}} \sin(\alpha) \\
& z = \sqrt{\frac{a - c}{2 a}} \cos(\beta) \\
& w = \sqrt{\frac{a - c}{2 a}} \sin(\beta)
\end{cases}
\end{align*}
for case 5:
\begin{align*}
\begin{cases}
& x = \sqrt{\frac{b + c}{2 b}} \cos(\alpha) \\
& y = \sqrt{\frac{b - c}{2 b}} \cos(\beta) \\
& z = \sqrt{\frac{b + c}{2 b}} \sin(\alpha) \\
& w = \sqrt{\frac{b - c}{2 b}} \sin(\beta),
\end{cases}
\end{align*}
where $\alpha \in [0, 2 \pi)$, $\beta \in [0, 2 \pi)$ in all cases.
\end{proof}

A toroidal general predicate is named according to its parameterization domain, which is a torus or a circle. This section concludes with a parameterization of a toroidal general predicate.
\begin{theorem} \label{thm_toroidal_predicate_parameterization}
Let $\GG_s = P \cdot \Rot_s(Q) + U \cdot \Rot_s(V) + c$ be \textbf{a toroidal general predicate}. Assume that $S_\GG(s)$ is a spin-quadric representing $\GG_s$ and $M_\GG$ is the matrix of the associated quadratic form.
Let $\mathbf{s}(\omega) = [s_{12}(\omega), s_{23}(\omega), s_{31}(\omega), s_0(\omega)]^T$ be a vectorized form of a spinor $s(\omega) = s_{12}(\omega) \e_{12} + s_{23}(\omega) \e_{23}(\omega) + s_{31}(\omega) \e_{31} + s_0(\omega)$ and $\mathbf{t}$ is a vector $\mathbf{t}(\omega) = [x(\omega), y(\omega), z(\omega), w(\omega)]^T$ for $\omega \in \Omega \subset \R^2$. A parameterization of the spin-quadric $S_\GG = s(\omega), \, \omega \in \Omega$ exists:
\begin{equation}
\mathbf{s} = Q \mathbf{t},
\end{equation}
which comprises $7$ parameterization cases for $\mathbf{t}$ according to Lemma \ref{lem_solutions_toroidal}, where the parameters $a$ and $b$ are set to
\begin{align*}
a & = \|P\|\|Q\| \\
b & = \|U\|\|V\|
\end{align*}
and a rotation matrix $Q$, which is a matrix of orthonormal eigenvectors of matrix $M_\GG$:
\begin{equation*}
Q =
\left[ \begin{smallmatrix}
    \frac{W^{(1)}_1}{\|W^{(1)}\|} & \frac{W^{(2)}_1}{\|W^{(2)}\|} & \frac{W^{(3)}_1}{\|W^{(3)}\|} & \frac{W^{(4)}_1}{\|W^{(4)}\|} \\
    \frac{W^{(1)}_2}{\|W^{(1)}\|} & \frac{W^{(2)}_2}{\|W^{(2)}\|} & \frac{W^{(3)}_2}{\|W^{(3)}\|} & \frac{W^{(4)}_2}{\|W^{(4)}\|} \\
    \frac{W^{(1)}_3}{\|W^{(1)}\|} & \frac{W^{(2)}_3}{\|W^{(2)}\|} & \frac{W^{(3)}_3}{\|W^{(3)}\|} & \frac{W^{(4)}_3}{\|W^{(4)}\|} \\
    \frac{W^{(1)}_4}{\|W^{(1)}\|} & \frac{W^{(2)}_4}{\|W^{(2)}\|} & \frac{W^{(3)}_4}{\|W^{(3)}\|} & \frac{W^{(4)}_4}{\|W^{(4)}\|}
\end{smallmatrix} \right],
\end{equation*}
where the vectors $W^{(j)}$ are calculated according to Lemma \ref{lem_eigenvectors_of_general_predicate}.
\end{theorem}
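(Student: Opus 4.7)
The plan is to mirror the derivation already carried out in the general spectral-decomposition discussion preceding Lemma \ref{lem_solutions_toroidal}, specialized to the toroidal case, and then to invoke that lemma to conclude.

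First I would set up the spectral decomposition $M_\GG = Q \Lambda Q^T$. By Lemma \ref{lem_eigenvalues_of_general_predicate} the four eigenvalues are real and, by Lemma \ref{lem_eigenvalues_of_toroidal_general_predicate}, they split into two pairs of multiplicity two; the symmetric matrix $M_\GG$ therefore admits an orthonormal basis of eigenvectors. For each double eigenvalue one chooses the associated two-dimensional eigenplane supplied by Lemma \ref{lem_eigenvectors_of_general_predicate}, selecting whichever of the candidate planes $Z^{(1)}, Z^{(2)}, Z^{(3)}$ has two non-zero spanning vectors, and then applies Gram--Schmidt to obtain an orthonormal pair. Stacking the four resulting unit eigenvectors as columns yields the orthogonal matrix $Q$ stated in the theorem.

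Next I would perform the substitution $\mathbf{t} = Q^T \mathbf{s}$. Since $Q$ is orthogonal, the constraint $\mathbf{s}^T \mathbf{s} = 1$ becomes $\mathbf{t}^T \mathbf{t} = 1$, and the spin-quadric equation $\mathbf{s}^T M_\GG \mathbf{s} = 0$ becomes $\mathbf{t}^T \Lambda \mathbf{t} = 0$, i.e.\ the system
\begin{align*}
\begin{cases}
\lambda_1 t_{12}^2 + \lambda_2 t_{23}^2 + \lambda_3 t_{31}^2 + \lambda_4 t_0^2 = 0, \\
t_{12}^2 + t_{23}^2 + t_{31}^2 + t_0^2 = 1,
\end{cases}
\end{align*}
which, after the substitution $\lambda_i = c - \gamma_i$ from Lemma \ref{lem_eigenvalues_of_general_predicate}, is precisely the system \eqref{eqn_parameterization_cases_substituted} with $a = \|P\|\|Q\|$ and $b = \|U\|\|V\|$. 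Because the predicate is toroidal, exactly one of $a,b$ vanishes, so the system reduces to \eqref{eqn_parameterization_cases_toroidal_b_zero} or \eqref{eqn_parameterization_cases_toroidal_a_zero}.

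At this point Lemma \ref{lem_solutions_toroidal} applies directly and produces one of the seven explicit parameterizations of $\mathbf{t}$. Undoing the change of variables via $\mathbf{s} = Q \mathbf{t}$ yields a parameterization of the spin-quadric as claimed. The main technical obstacle is the eigenvector construction in step one: the eigenvectors furnished by Lemma \ref{lem_eigenvectors_of_general_predicate} for the toroidal case are not automatically orthonormal (nor automatically non-zero for every choice of eigenplane), so one must argue that among $Z^{(1)}, Z^{(2)}, Z^{(3)}$ at least one admits two non-zero generators, and then orthonormalize. Once $Q$ is exhibited as a genuine orthogonal matrix, the remainder of the argument is a mechanical invocation of Lemma \ref{lem_solutions_toroidal}, exactly as in the ellipsoidal proof of Theorem \ref{thm_ellipsoidal_predicate_parameterization}.
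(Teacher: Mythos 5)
Your proposal follows the same route as the paper: the theorem is packaged directly from the spectral decomposition $M_\GG = Q \Lambda Q^T$ developed in the subsection on parameterization (Lemmas \ref{lem_eigenvalues_of_general_predicate}, \ref{lem_eigenvalues_of_toroidal_general_predicate}, \ref{lem_eigenvectors_of_general_predicate}), the orthogonal change of variables $\mathbf{t} = Q^T \mathbf{s}$, the reduction to \eqref{eqn_parameterization_cases_substituted} with $a = \|P\|\|Q\|$, $b = \|U\|\|V\|$, and Lemma \ref{lem_solutions_toroidal}. Your observation that the eigenplane spanning vectors supplied by Lemma \ref{lem_eigenvectors_of_general_predicate} are not a priori orthonormal, so that a Gram--Schmidt step is required to obtain the columns of $Q$, is a detail the paper glosses over but does not change the argument. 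Note, however, that the existence of some usable eigenplane among $Z^{(1)}, Z^{(2)}, Z^{(3)}$ is not guaranteed by the statement of Lemma \ref{lem_eigenvectors_of_general_predicate} alone; the appendix proof of that lemma explicitly encounters cases (e.g.\ $\hat{P} = \alpha \hat{Q}$) where all three fail and supplies a separate eigenplane formula, and a fully careful write-up of the present theorem should reference that fallback rather than merely asserting that at least one $Z^{(i)}$ works.
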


The theorem above gives an insight into the global geometry of a toroidal general predicate. The following are the corollaries of the theorem.
\begin{itemize}
\item A spin-quadric of a toroidal general predicate can be empty, one-dimensional, or two-dimensional.
\item A spin-quadric of a toroidal general predicate always comprises one connected component.
\item The genus of a spin-quadric of a toroidal general predicate is always one.
\item A spin-quadric associated with a toroidal general predicate is always a manifold.
\end{itemize}

\section{Results and discussion}

Experimental tests were conducted by using the \emph{libcs2} \cite{libcs2} library to implement a complete parameterization of a general predicate.
The pictures were generated using the \emph{vis} visualization application, which is part of the \emph{libcs2} library.
In \emph{vis} stereographic projection, $\R^4 \supset \S^3 \longrightarrow \R^3$ was used because this projection does not generate artificial three-dimensional intersections.
Changing the parameter $c$ from $-\infty$ to $\infty$ yielded a sequence of transitions between different types of spin-quadric parameterizations.
In this sequence, the most probable cases were an ellipsoidal parameterization, barrel-like parameterization, and caps-like parameterization, where these three cases are presented in Figures \ref{fig_typical_ellipsoids_param}, \ref{fig_typical_barrel_param}, and \ref{fig_typical_caps_param}, respectively. An ellipsoidal parameterization gives a pair of separate deformed ellipsoids \ref{fig_typical_ellipsoids_param}.
\begin{figure}[ht]
\centering
\begin{subfigure}{0.5\textwidth}
  \centering
  \includegraphics[width=0.95\textwidth]{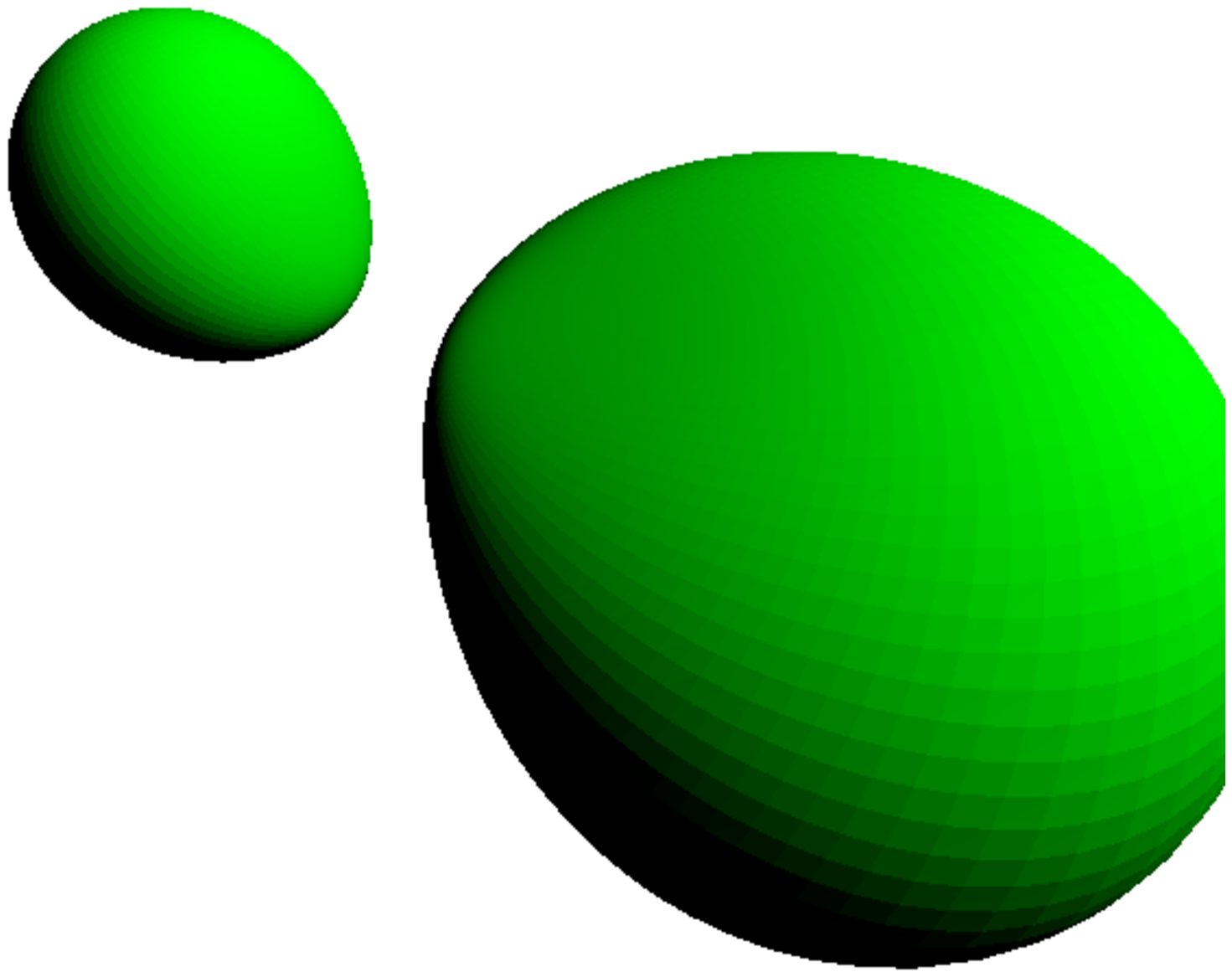}
\end{subfigure}%
\begin{subfigure}{0.5\textwidth}
  \centering
  \includegraphics[width=0.95\textwidth]{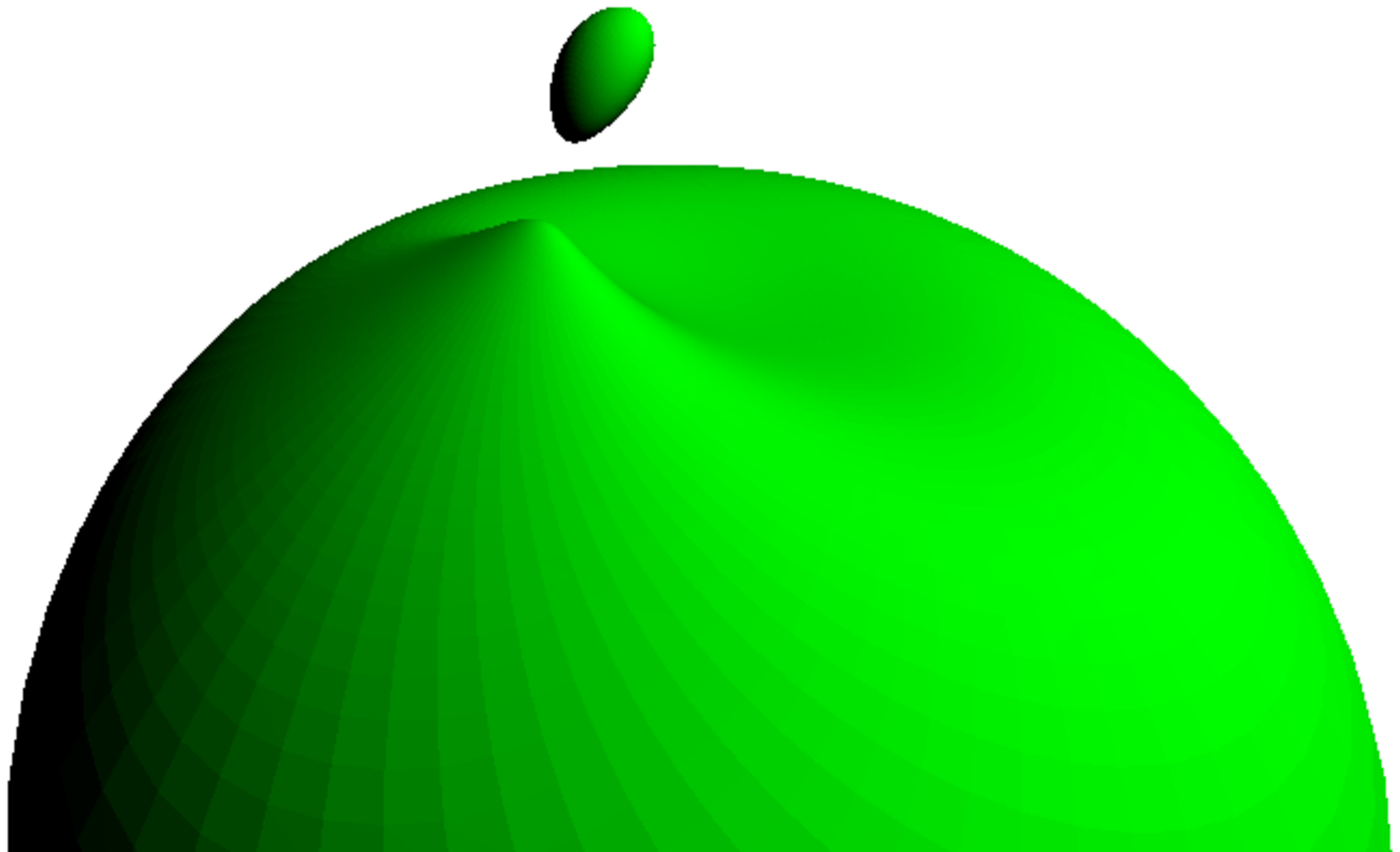}
\end{subfigure}%
\caption{Two examples of spin-quadrics with an ellipsoidal parameterization}
\label{fig_typical_ellipsoids_param}
\end{figure}
From a topological viewpoint, the most interesting case is a spin-quadric with a barrel-like parameterization \ref{fig_typical_barrel_param}.
\begin{figure}[ht]
\centering
\begin{subfigure}{0.5\textwidth}
  \centering
  \includegraphics[width=0.95\textwidth]{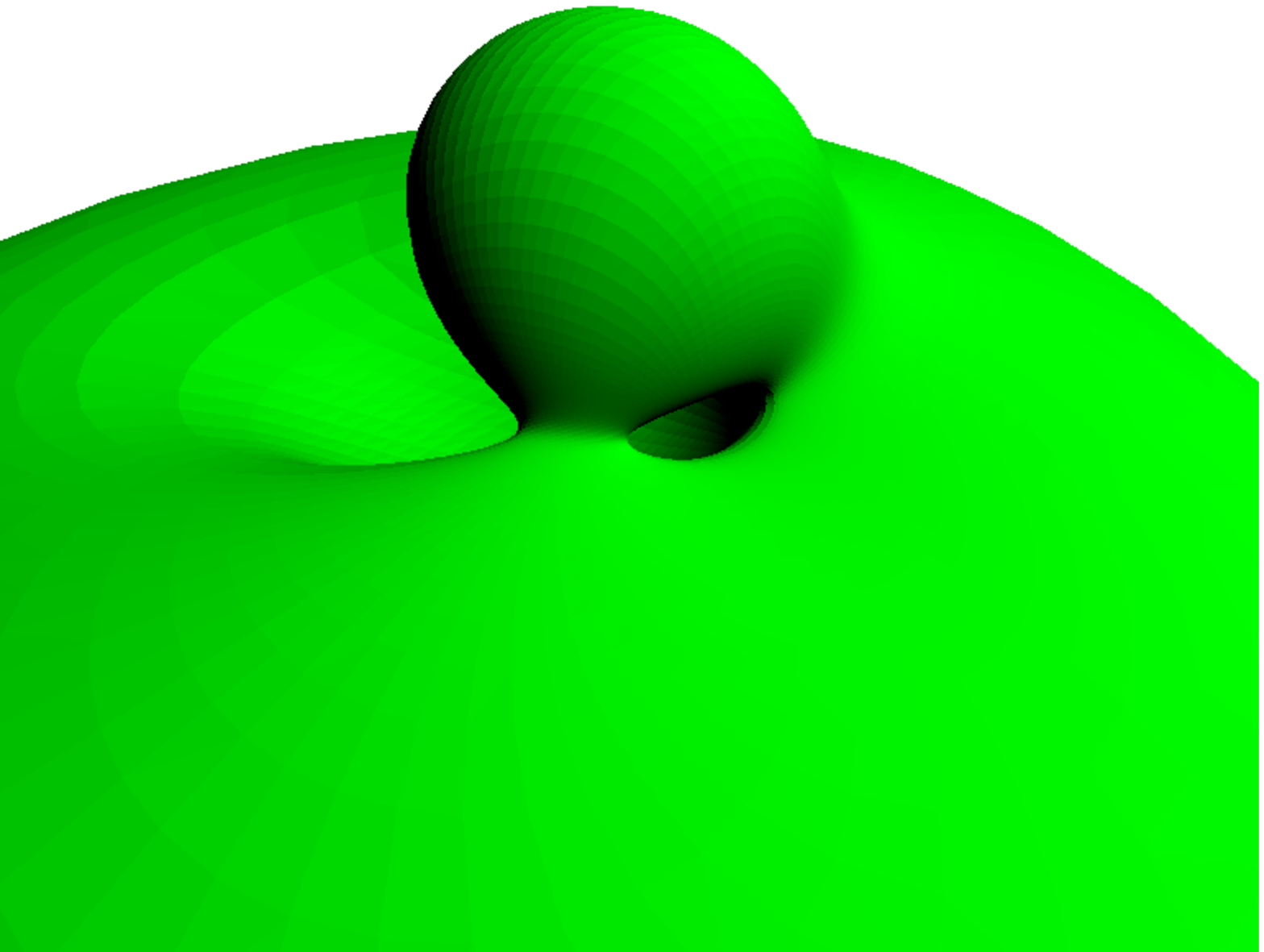}
\end{subfigure}%
\begin{subfigure}{0.5\textwidth}
  \centering
  \includegraphics[width=0.95\textwidth]{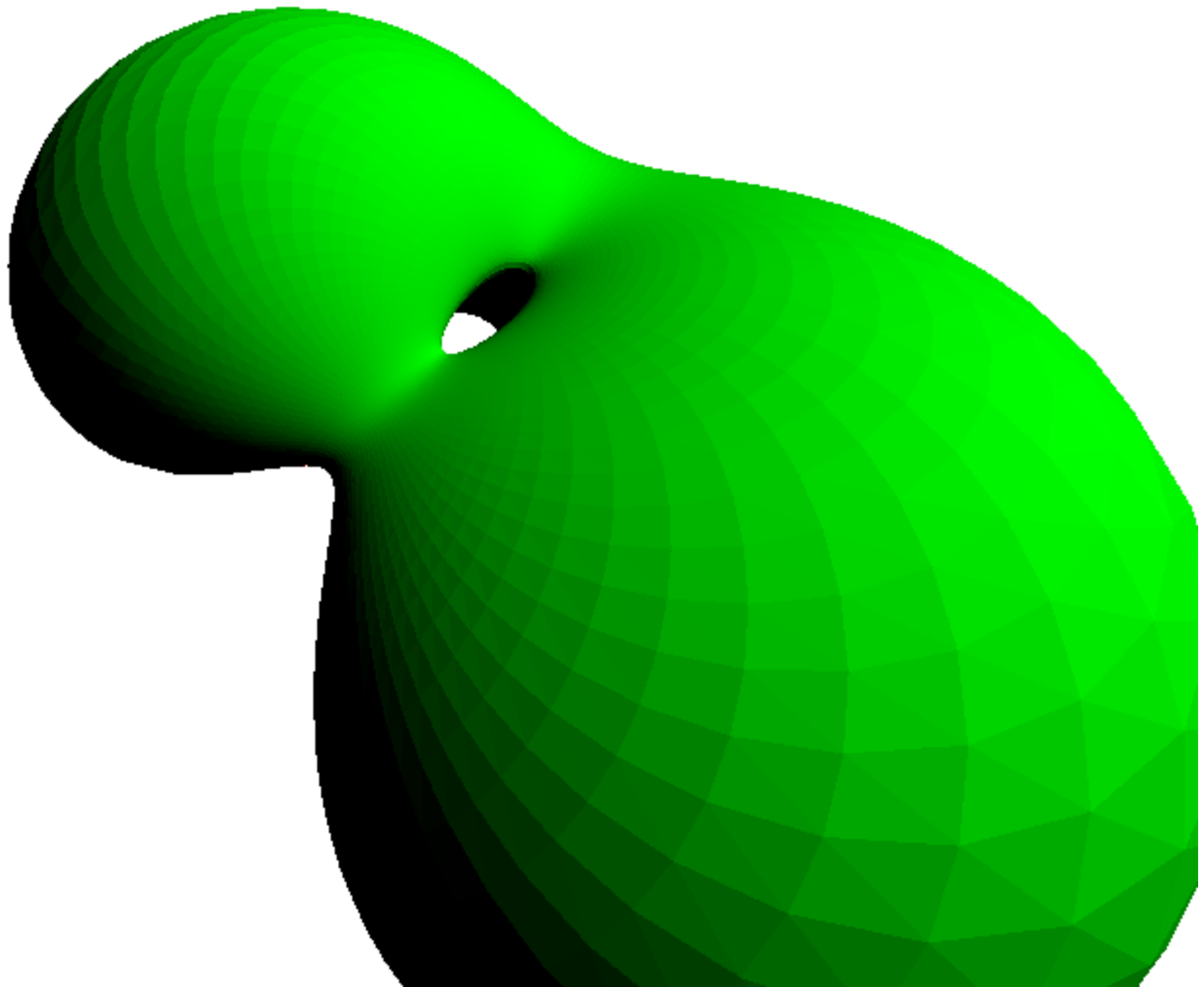}
\end{subfigure}%
\caption{Two examples of spin-quadrics with a barrel-like parameterization}
\label{fig_typical_barrel_param}
\end{figure}
The spin-quadrics generated by caps-like parameterization are much more deformed than those generated by ellipsoidal parameterization \ref{fig_typical_caps_param}.
\begin{figure}[ht]
\centering
\begin{subfigure}{0.5\textwidth}
  \centering
  \includegraphics[width=0.95\textwidth]{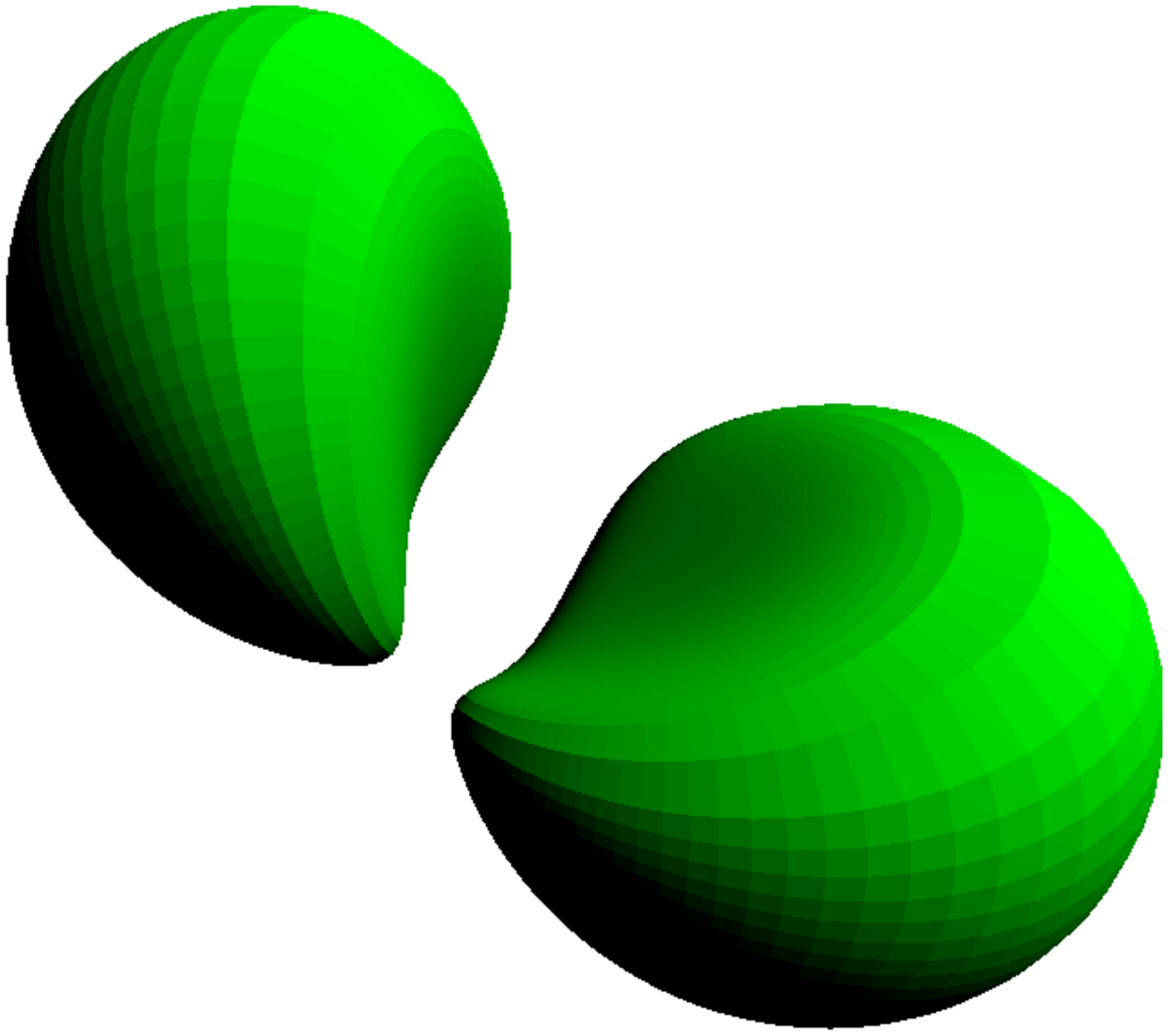}
\end{subfigure}%
\begin{subfigure}{0.5\textwidth}
  \centering
  \includegraphics[width=0.95\textwidth]{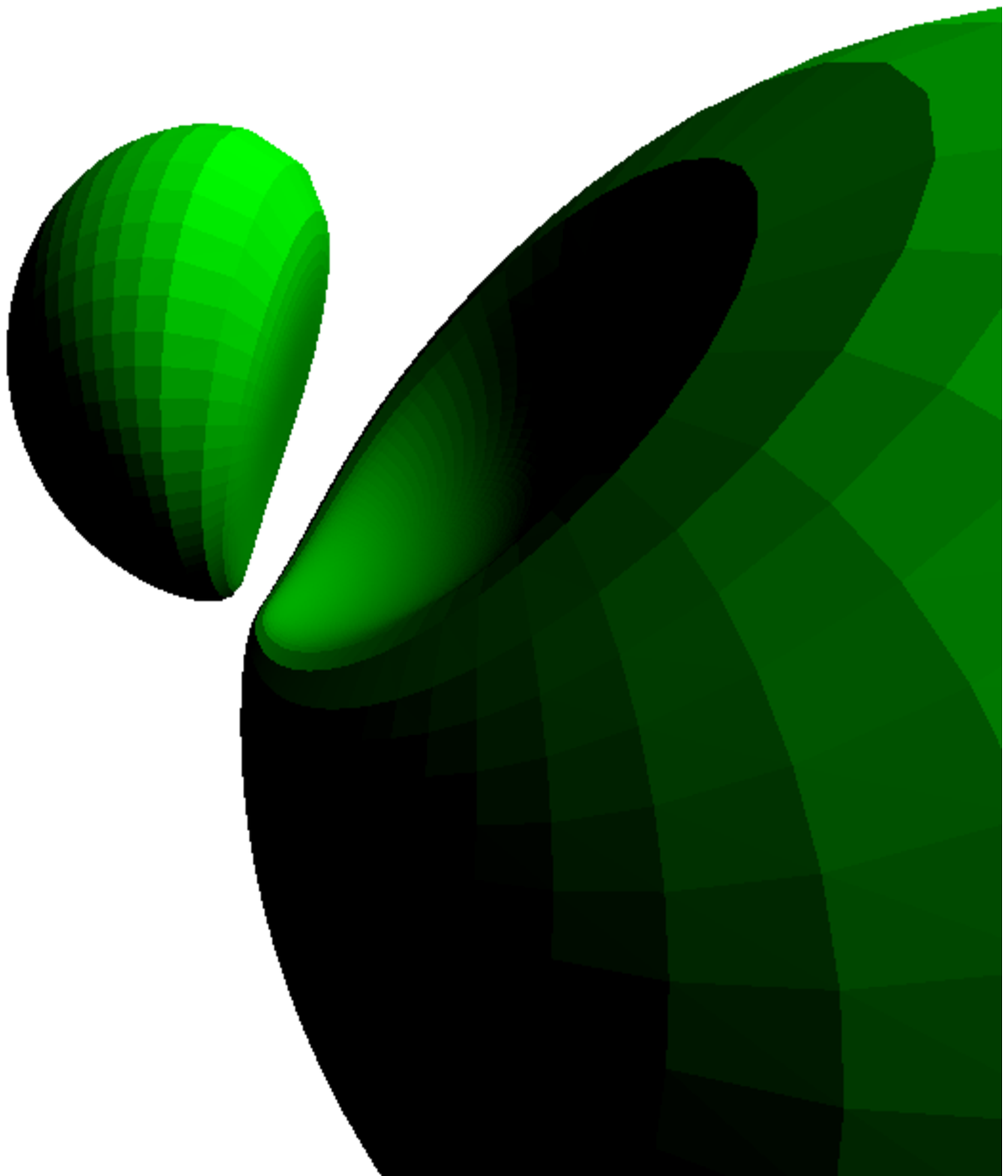}
\end{subfigure}%
\caption{Two examples of spin-quadrics with a caps-like parameterization}
\label{fig_typical_caps_param}
\end{figure}
Toroidal parameterization is straightforward and a spin-quadric with a toroidal parameterization is actually a \emph{Clifford torus} in $\S^4$ (specifically, a generalization of \emph{a Clifford torus}). After projection, it becomes a standard torus in three dimensions \ref{fig_typical_torus_param}.
\begin{figure}[ht]
\centering
\begin{subfigure}{0.5\textwidth}
  \centering
  \includegraphics[width=0.95\textwidth]{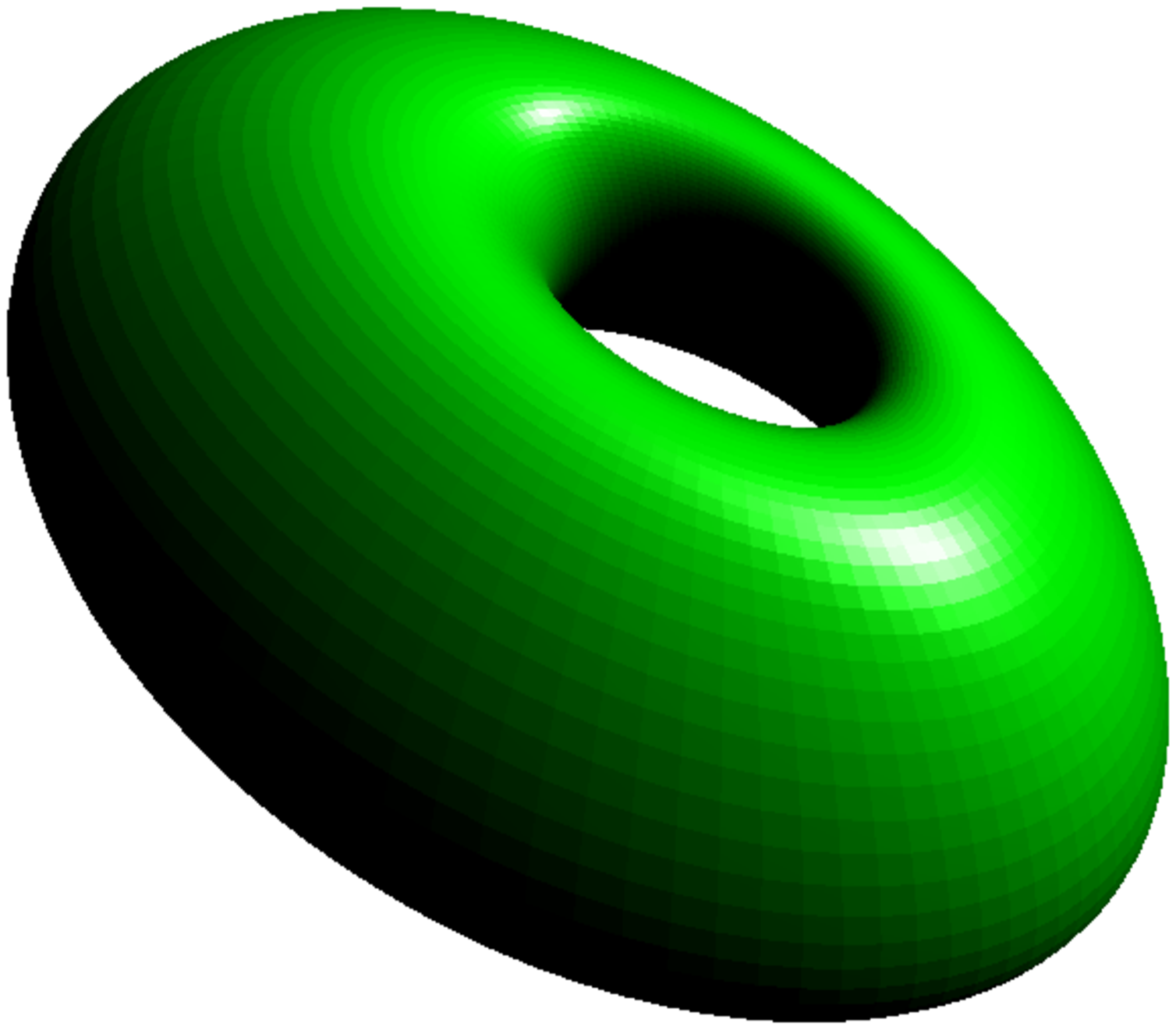}
\end{subfigure}%
\begin{subfigure}{0.5\textwidth}
  \centering
  \includegraphics[width=0.95\textwidth]{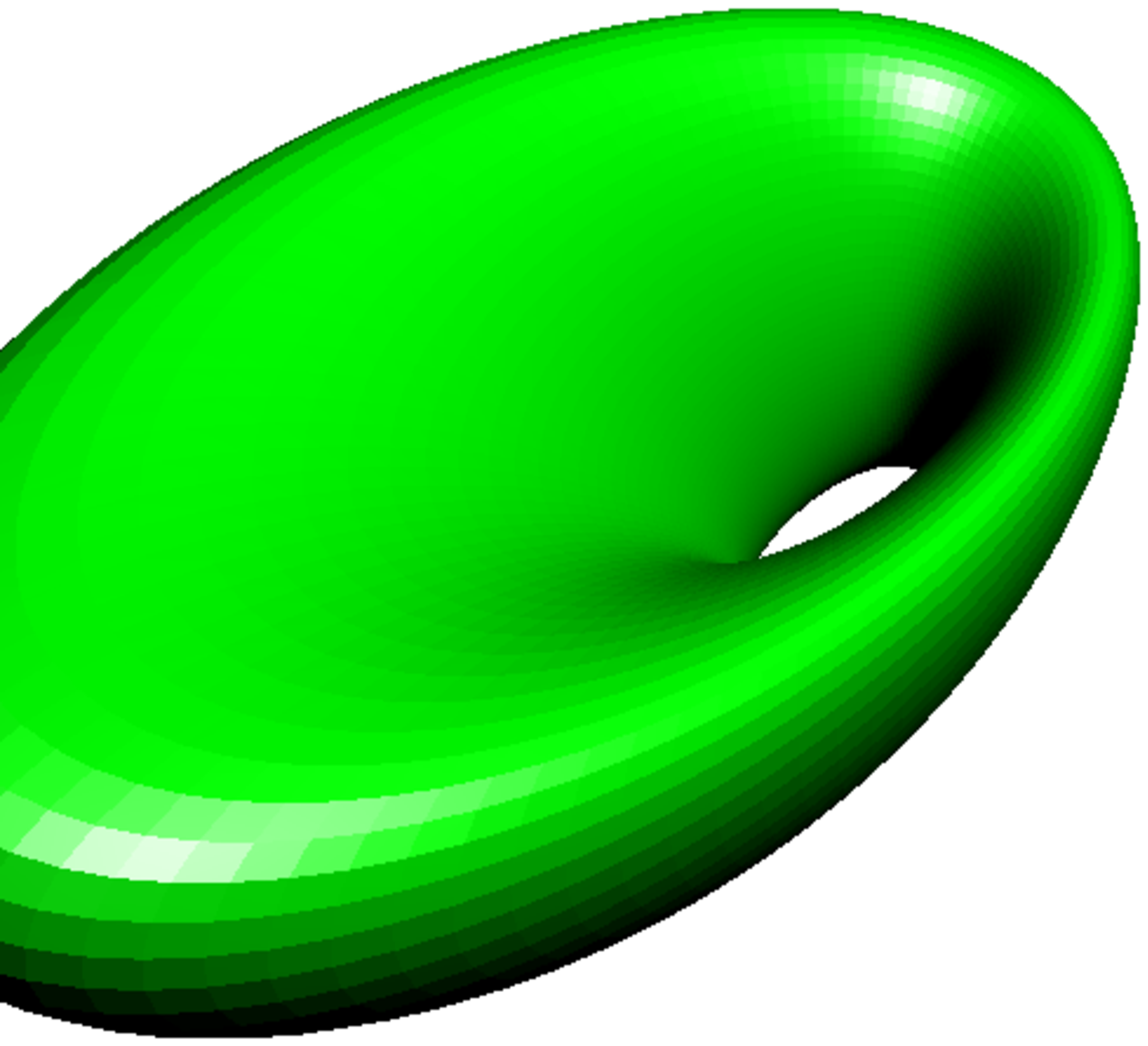}
\end{subfigure}%
\caption{Two examples of spin-quadrics with a torus-like parameterization}
\label{fig_typical_torus_param}
\end{figure}
There are also some unusual cases of parameterization such as $yz$-crossed ellipsoids and an ellipsoidal parameterization, where one of the ellipsoids is inside the other. These two parameterization examples are presented in Figure \ref{fig_not_typical_param}.
\begin{figure}[ht]
\centering
\begin{subfigure}{0.5\textwidth}
  \centering
  \includegraphics[width=0.95\textwidth]{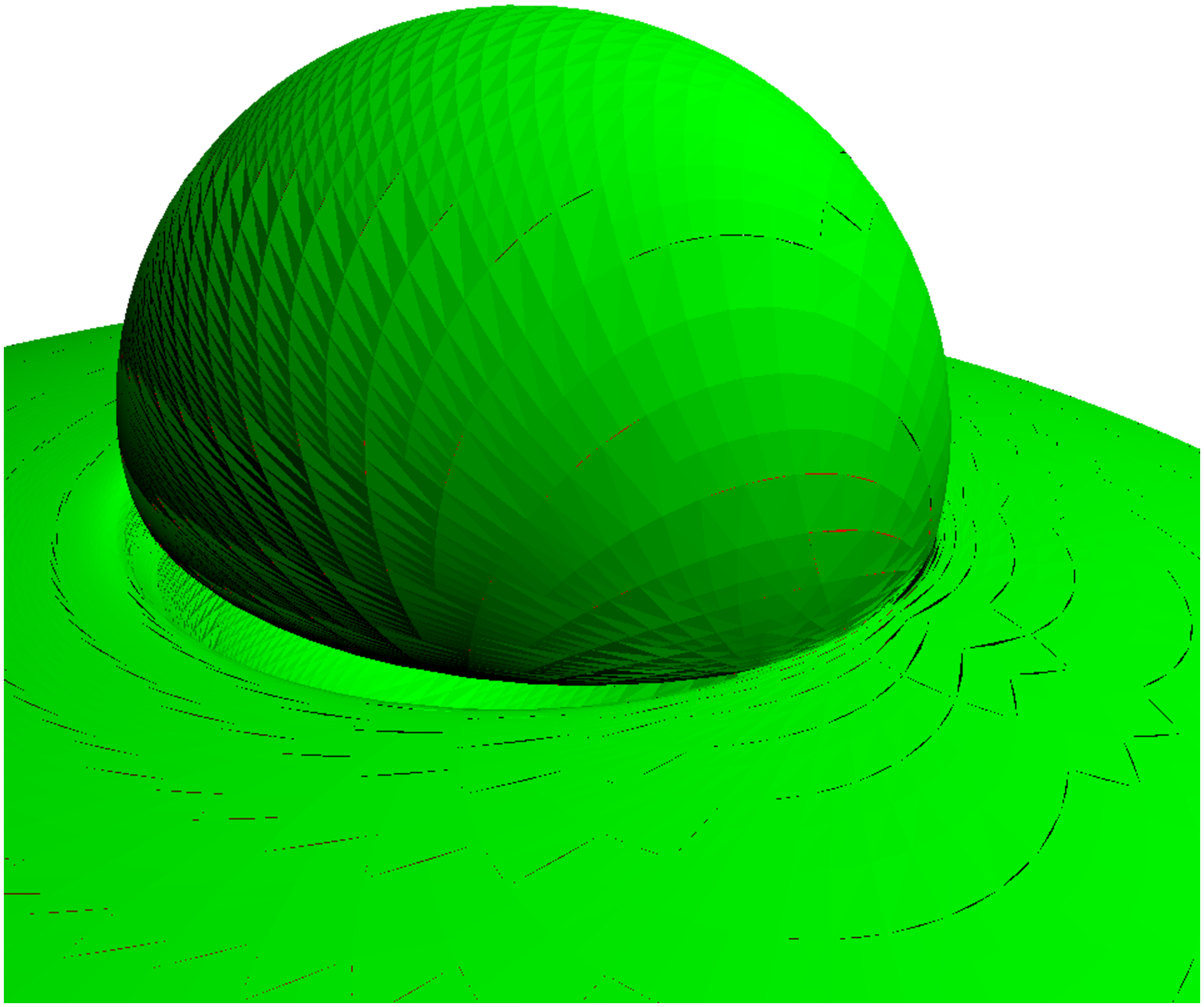}
\end{subfigure}%
\begin{subfigure}{0.5\textwidth}
  \centering
  \includegraphics[width=0.95\textwidth]{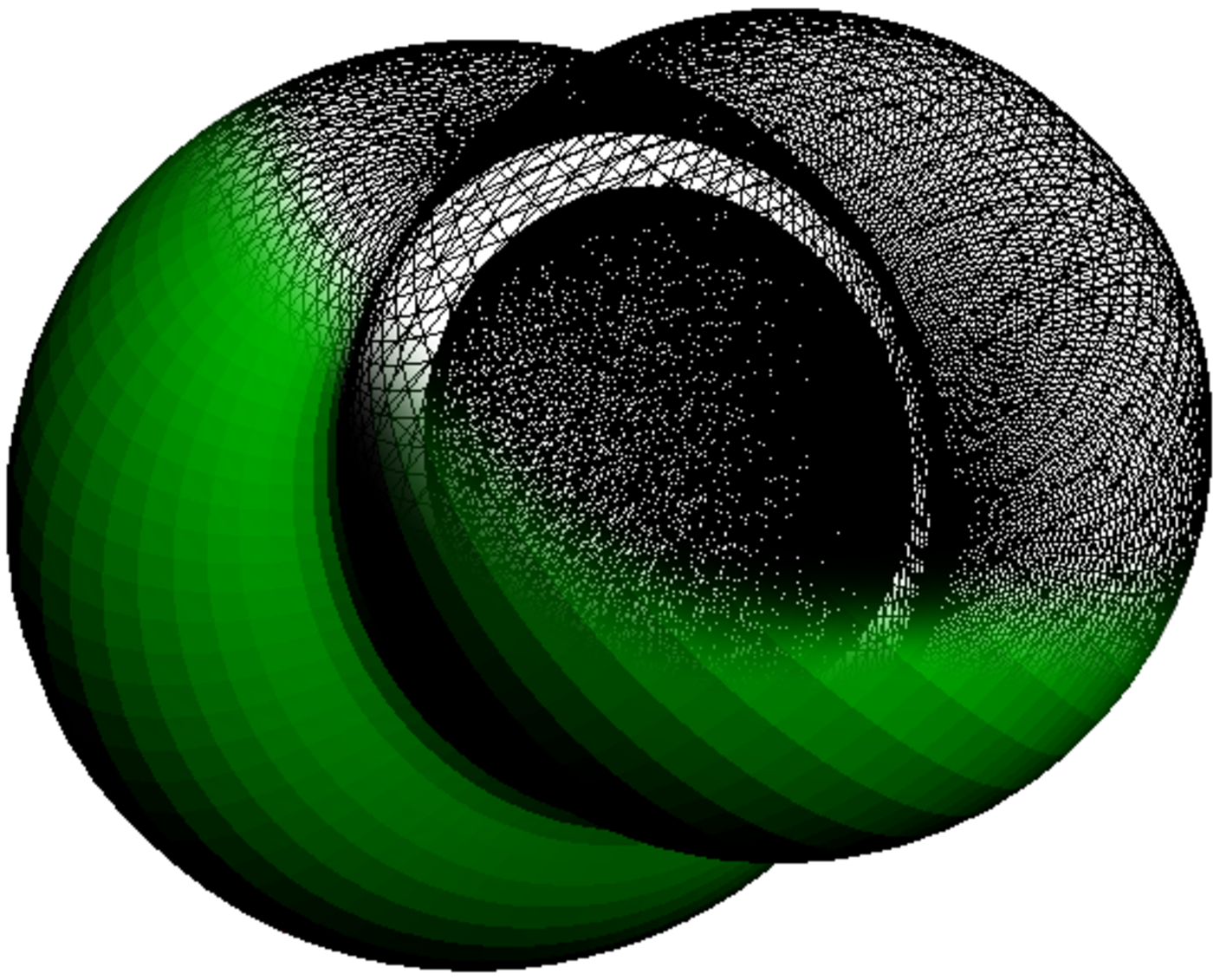}
\end{subfigure}%
\caption{A spin-quadric with a parameterization that is very similar to the $yz$-crossed ellipsoids case (note the topology around the circular edge) and an ellipsoidal parameterization where one of the ellipsoids is inside the other one (the outer ellipsoid is partially transparent)}
\label{fig_not_typical_param}
\end{figure}

\section{Conclusions and future work}

This study showed that the parameterization of configuration space obstacles in the case of three-dimensional rotating objects is feasible as well as practical. In this study, we defined a trigonometric parameterization on a finite domain, which comprises 17 different cases. This expands the number of motion planning problem for which we can parameterize the corresponding configuration space. In fact, this is the first study to parameterize a rotational three-dimensional rotational configuration space with only finite formulae. The previous known method \cite{phd_dobrowolski} includes a number of numerical methods, such as the $LLL$ method for finding rational points on a curve and a numerical method for calculating a matrix determinant.

The next step involves utilizing the parameterization for calculating further geometric properties of the configuration space in three-dimensional rotational motion problems. Thus, the possible uses of the parameterization include the following.
\begin{itemize}
\item Developing a complete motion planner for three-dimensional rotations.
\item Developing an algorithm that uses potential fields for motion planning, where the parameterization can be used to generate gradient fields in the configuration space.
\item Developing a PRM-like algorithm for traversing an intersection graph in the configuration space.
\item Visualizing C-obstacles in $\SO(3)$, especially in conjunction with a stereographic projection $S^3 \longrightarrow \R^3$.
\item Calculating the exact volume of a C-obstacle.
\end{itemize}
The number of different parameterizations with different topologies supports the observation made in \cite{phd_dobrowolski} that the topology of a configuration space in three-dimensional rotational motion planning is geometrically rich and still poorly understood.

\appendix

\section{Proof of the eigenvectors lemma}

\begin{proof}
Let $\lambda_i = c - (\alpha ||P||||Q|| + \beta ||U||V||)$ for $\alpha, \beta \in \{-1, 1\}$ be the eigenvalues of a proper general predicate. Assuming that
\begin{equation}
\lambda_i = c - \gamma_i ,
\end{equation}
where
\begin{equation}
\gamma_i = \alpha ||P||||Q|| + \beta ||U||V|| ,
\end{equation}
then there are also four corresponding $\gamma_i$ values. The elements of the characteristic matrix $F = M_\GG - \lambda_i I$ are as follows.
\begin{equation} \label{eqn_eigen_mtx_general}
F = \left[ \begin{smallmatrix}
        2 (P_3 Q_3 + U_3 V_3) - T + \gamma & 2 (P_1 Q_3 + U_1 V_3) + R_2 & 2 (P_3 Q_2 + U_3 V_2) + R_1 & R_3 \\
        2 (P_1 Q_3 + U_1 V_3) + R_2 & 2 (P_1 Q_1 + U_1 V_1) - T + \gamma & 2 (P_2 Q_1 + U_2 V_1) + R_3 & R_1 \\
        2 (P_3 Q_2 + U_3 V_2) + R_1 & 2 (P_2 Q_1 + U_2 V_1) + R_3 & 2 (P_2 Q_2 + U_2 V_2) - T + \gamma & R_2 \\
        R_3 & R_1 & R_2 & T + \gamma
    \end{smallmatrix} \right]
\end{equation}
Note that $F$ is symmetric. We can rewrite $F$ in brief as follows.
\begin{equation}
F = \left[ \begin{smallmatrix}
        f_{11} & f_{12} & f_{13} & R_3 \\
        f_{12} & f_{22} & f_{23} & R_1 \\
        f_{13} & f_{23} & f_{33} & R_2 \\
        R_3 & R_1 & R_2 & T + \gamma
    \end{smallmatrix} \right]
\end{equation}
An eigenvector can be calculated in different ways according to the order of the row operations. A number of special cases will occur that involve dividing by zero, where all of these are omitted and they are not considered in-place. Instead, the final eigenvector form is checked against the initial eigenvector matrix, and thus it is generalized to all cases.
If $R_3 \ne 0$, the matrix is reduced to
\begin{equation*}
G = \left[ \begin{smallmatrix}
        f_{11} & f_{12} & f_{13} & R_3 \\
        f_{12} R_3 - f_{11} R_1 & f_{22} R_3 - f_{12} R_1 & f_{23} R_3 - f_{13} R_1 & 0 \\
        f_{13} R_3 - f_{11} R_2 & f_{23} R_3 - f_{12} R_2 & f_{33} R_3 - f_{13} R_2 & 0 \\
        R_3 R_3 - f_{11} (T + \gamma) & R_1 R_3 - f_{12} (T + \gamma) & R_2 R_3 - f_{13} (T + \gamma) & 0
    \end{smallmatrix} \right] ,
\end{equation*}
which is denoted in brief as
\begin{equation*}
G = \left[ \begin{smallmatrix}
        f_{11} & f_{12} & f_{13} & R_3 \\
        g_{21} & g_{22} & g_{23} & 0 \\
        g_{31} & g_{32} & g_{33} & 0 \\
        g_{41} & g_{42} & g_{43} & 0
    \end{smallmatrix} \right] ,
\end{equation*}
and if $g_{23} \ne 0$,
\begin{equation*}
H = \left[ \begin{smallmatrix}
        f_{11} & f_{12} & f_{13} & R_3 \\
        g_{21} & g_{22} & g_{23} & 0 \\
        g_{31} g_{23} - g_{21} g_{33} & g_{32} g_{23} - g_{22} g_{33} & 0 & 0 \\
        g_{41} g_{23} - g_{21} g_{43} & g_{42} g_{23} - g_{22} g_{43} & 0 & 0
    \end{smallmatrix} \right] ,
\end{equation*}
which can be shortened to
\begin{equation*}
H = \left[ \begin{smallmatrix}
        f_{11} & f_{12} & f_{13} & R_3 \\
        g_{21} & g_{22} & g_{23} & 0 \\
        h_{31} & h_{32} & 0 & 0 \\
        h_{41} & h_{42} & 0 & 0
    \end{smallmatrix} \right] ,
\end{equation*}
and if $h_{32} \ne 0$,
\begin{equation*}
I = \left[ \begin{smallmatrix}
        f_{11} & f_{12} & f_{13} & R_3 \\
        g_{21} & g_{22} & g_{23} & 0 \\
        h_{31} & h_{32} & 0 & 0 \\
        h_{41} h_{32} - h_{31} h_{42} & 0 & 0 & 0
    \end{smallmatrix} \right] ,
\end{equation*}
which can be shortened to
\begin{equation} \label{eqn_eigen_mtx_ellipsoidal}
I = \left[ \begin{smallmatrix}
        f_{11} & f_{12} & f_{13} & R_3 \\
        g_{21} & g_{22} & g_{23} & 0 \\
        h_{31} & h_{32} & 0 & 0 \\
        i_{41} & 0 & 0 & 0
    \end{smallmatrix} \right] .
\end{equation}
The element $i_{41}$ is a very long expression. By using automatic and manual simplifications, we have the following expression
\begin{align*}
i_{41} = & R_3^2 \left( \|P\|^2 Q_2^2 - \|Q\|^2 P_2^2 + \|U\|^2 V_2^2 - \|V\|^2 U_2^2 \right) \\
         & \left(\gamma^4 - 2 \gamma^2 (\|P\|^2 \|Q\|^2 + \|U\|^2 \|V\|^2) + (\|P\|^2 \|Q\|^2 - \|U\|^2 \|V\|^2)^2 \right),
\end{align*}
which is equal to zero for every eigenvalue $\gamma$
\begin{equation*}
i_{41} = 0.
\end{equation*}
Now, considering only the case of an ellipsoidal general predicate, from Lemma \ref{lem_eigenvalues_of_ellipsoidal_general_predicate}, it is known that each eigenspace is one-dimensional. The fourth row is zero so the first three rows must be linearly independent. From the reduced matrix $I$, an eigenvector $W^{(1)}$ can now be calculated easily, as follows.
\begin{equation*}
W^{(1)} = \left( \begin{smallmatrix}
        R_3 g_{23} h_{32} \\
        - R_3 g_{23} h_{31} \\
        R_3 (g_{22} h_{31} - g_{21} h_{32}) \\
        - f_{11} g_{23} h_{32} + f_{12} g_{23} h_{31} - f_{13} g_{22} h_{31} + f_{13} g_{21} h_{32}
    \end{smallmatrix} \right)
\end{equation*}
The value $R_3$ can be factored out from the fourth coordinate:
\begin{equation*}
W^{(1)} = \left( \begin{smallmatrix}
        R_3 g_{23} h_{32} \\
        - R_3 g_{23} h_{31} \\
        R_3 (g_{22} h_{31} - g_{21} h_{32}) \\
        R_3 (h_{31} (f_{12} f_{23} - f_{13} f_{22}) + h_{32} (f_{13} f_{12} - f_{11} f_{23}))
    \end{smallmatrix} \right) .
\end{equation*}
In this case, $R_3 \ne 0$ and eigenvector $W^{(1)}$ can be scaled to $W^{(2)}$
\begin{equation*}
W^{(2)} = \left( \begin{smallmatrix}
        g_{23} h_{32} \\
        - g_{23} h_{31} \\
        g_{22} h_{31} - g_{21} h_{32} \\
        h_{31} (f_{12} f_{23} - f_{13} f_{22}) + h_{32} (f_{13} f_{12} - f_{11} f_{23})
    \end{smallmatrix} \right) .
\end{equation*}
Using a symbolic calculator, we extract another common factor $\xi$
\begin{equation*}
\xi = R_3 (\|P\|^2 Q_2^2 - \|Q\|^2 P_2^2 + \|U\|^2 V_2^2 - \|V\|^2 U_2^2),
\end{equation*}
which results in $W^{(2)} = [D^{(2)}_3, D^{(2)}_1, D^{(2)}_2, d^{(2)}]^T$, where the vector $D^{(2)}$ and scalar $d^{(2)}$ are equal to
\begin{align*}
D^{(2)} = & \xi \left( - R \gamma^2 + 2 H \gamma + \tilde{T} (\|P\|^2 \|Q\|^2 - \|U\|^2 \|V\|^2) \right) \\
d^{(2)} = & \xi \left( \gamma^3 - T \gamma^2 - (\|\tilde{R}\|^2 + \tilde{T}^2) \gamma + \tilde{T} (\|P\|^2 \|Q\|^2 - \|U\|^2 \|V\|^2) \right),
\end{align*}
where $\tilde{R}$, $\tilde{T}$, and $H$ are defined as follows.
\begin{align} \label{eqn_wide_r_and_wide_t_and_h}
\tilde{R} = & P \times Q - U \times V \\
\tilde{T} = & P \cdot Q - U \cdot V \notag \\
H = & (P \times U) \times (Q \times V) \notag
\end{align}
Again, when $\xi \ne 0$ eigenvector $W^{(2)}$ is scaled to a consecutive representation $W^{(3)} = [D^{(3)}_3, D^{(3)}_1, D^{(3)}_2, d^{(3)}]^T$, where the vector $D^{(3)}$ and scalar $d^{(3)}$ are equal to the following.
\begin{align*}
D^{(3)} = & - R \gamma^2 + 2 H \gamma + \tilde{T} (\|P\|^2 \|Q\|^2 - \|U\|^2 \|V\|^2) \\
d^{(3)} = & \gamma^3 - T \gamma^2 - (\|\tilde{R}\|^2 + \tilde{T}^2) \gamma + \tilde{T} (\|P\|^2 \|Q\|^2 - \|U\|^2 \|V\|^2)
\end{align*}
Further simplifications are possible after $\gamma$ value is substituted explicitly, thereby introducing the variables $\alpha$ and $\beta$ into the formula
\begin{align*}
D^{(3)} = & - 2 \alpha \beta \gamma (\alpha \, \|U\| \|V\| P \times Q + \beta \, \|P\| \|Q\| U \times V \\
          & - \alpha \beta \, (P \times U) \times (Q \times V)) \\
d^{(3)} = & - 2 \alpha \beta \gamma (\alpha \, \|U\| \|V\| P \cdot Q + \beta \, \|P\| \|Q\| U \cdot V \\
          &  - \alpha \beta \, (P \times U) \cdot (Q \times V) - \|P\| \|Q\| \|U\| \|V\|)
\end{align*}
and the eigenvector $W^{(3)}$ is scaled down by the factor $2 \alpha \beta \gamma$ to $W^{(4)} = [D^{(4)}_3, D^{(4)}_1, D^{(4)}_2, d^{(4)}]^T$, where
\begin{align*}
D^{(4)} = &\alpha \beta \, (P \times U) \times (Q \times V) - \alpha \, \|U\| \|V\| P \times Q - \beta \, \|P\| \|Q\| U \times V \\
d^{(4)} = &\alpha \beta \, (P \times U) \cdot (Q \times V) - \alpha \, \|U\| \|V\| P \cdot Q - \beta \, \|P\| \|Q\| U \cdot V + \|P\| \|Q\| \|U\| \|V\| .
\end{align*}
Using normalized vectors, $\hat{P}$, $\hat{Q}$, $\hat{U}$, and $\hat{V}$, the formula can be shortened substantially. The components $D^{(4)}$ and $d^{(4)}$ become
\begin{align*}
D^{(4)} = & \|P\| \|Q\| \|U\| \|V\| (\alpha \beta \, (\hat{P} \times \hat{U}) \times (\hat{Q} \times \hat{V}) - \alpha \, \hat{P} \times \hat{Q} - \beta \, \hat{U} \times \hat{V}) \\
d^{(4)} = & \|P\| \|Q\| \|U\| \|V\| (\alpha \beta \, (\hat{P} \times \hat{U}) \cdot (\hat{Q} \times \hat{V}) - \alpha \, \hat{P} \cdot \hat{Q} - \beta \, \hat{U} \cdot \hat{V} + 1)
\end{align*}
Finally, the eigenvector $W^{(4)}$ is scaled down by the factor $\|P\| \|Q\| \|U\| \|V\|$ to its final form $W = [D_3, D_1, D_2, d]^T$, where
\begin{align*}
D = & \alpha \beta \, (\hat{P} \times \hat{U}) \times (\hat{Q} \times \hat{V}) - \alpha \, \hat{P} \times \hat{Q} - \beta \, \hat{U} \times \hat{V} \\
d = & \alpha \beta \, (\hat{P} \times \hat{U}) \cdot (\hat{Q} \times \hat{V}) - \alpha \, \hat{P} \cdot \hat{Q} - \beta \, \hat{U} \cdot \hat{V} + 1.
\end{align*}
By reinterpreting the formula in terms of Clifford algebra $\Cl(3)$, we find that the eigenvector $W$ is related to the following: \emph{pinor} $\Pin(3) \ni \mathbf{w} = \mathbf{w}_{12} \e_{12} + \mathbf{w}_{23} \e_{23} + \mathbf{w}_{31} \e_{31} + \mathbf{w}_0$
\begin{equation} \label{eqn_ell_eigenvec_in_cl_altern}
\mathbf{w} = 1 + \alpha \beta \, (\hat{P} \times \hat{U}) (\hat{Q} \times \hat{V}) - \alpha \, \hat{P} \hat{Q} - \beta \, \hat{U} \hat{V}
\end{equation}
for an eigenvector $W = [\mathbf{w}_{12}, \mathbf{w}_{23}, \mathbf{w}_{31}, \mathbf{w}_0]^T$. We observe that
\begin{align*}
& (\hat{P} \times \hat{U}) (\hat{Q} \times \hat{V}) = (-\e_{123} \hat{P} \wedge \hat{U}) (-\e_{123} \hat{Q} \wedge \hat{V}) \\
& = - (\hat{P} \wedge \hat{U}) (\hat{Q} \wedge \hat{V}) = - (\hat{P} \cdot \hat{U} + \hat{P} \wedge \hat{U}) (\hat{Q} \cdot \hat{V} + \hat{Q} \wedge \hat{V}) \\
& = - \hat{P} \hat{U} \hat{Q} \hat{V}
\end{align*}
so the pinor $\mathbf{w}$ is equal to
\begin{equation} \label{eqn_ell_eigenvec_in_cl_final}
\mathbf{w} = 1 - \alpha \beta \, \hat{P} \hat{U} \hat{Q} \hat{V} - \alpha \, \hat{P} \hat{Q} - \beta \, \hat{U} \hat{V} .
\end{equation}
In the evaluation, some special cases are omitted so the resulting eigenvector needs to be checked against the initial eigenvalue problem. First, we note that $W$ is valid only for an ellipsoidal general predicate because of the normalized vectors. The toroidal general predicate is considered separately later. Using a symbolic calculator, we can verify that
\begin{equation}
(M_\GG - \lambda I) W = 0,
\end{equation}
so the eigenvector is a correct solution in all ellipsoidal cases. The eigenvector must also be checked to determine whether it has a non-zero magnitude.

A lengthy calculation of the squared length of eigenvector $W$ leads to the following, rather surprising, result.
\begin{equation}
\|W\|^2 = 4 \, W_4
\end{equation}
Clearly, from this result, it follows that the $W_4$ coordinate is always non-negative. It also implies that the eigenvector $W$ is zero if and only if the fourth coordinate $W_4$ is zero. As a result, the eigenvector $W$ is valid if and only if $W_4$ is zero.

\paragraph{Toroidal general predicate case}

Given that the eigenvector $W^{(2)}$ zeroes in the case of a toroidal general predicate, then we must take another approach to calculate the eigenplanes in the latter case.
Consider the toroidal case where $\|U\| \|V\| = 0$. The other case $\|P\| \|Q\| = 0$ is analogous if all the variables $P$ and $Q$ are swapped with $U$ and $V$, respectively. Recall the matrix \eqref{eqn_eigen_mtx_ellipsoidal} and plug in the special case $\|U\| \|V\| = 0$.
\begin{equation} \label{eqn_eigen_mtx_toroidal}
I' = \left[ \begin{smallmatrix}
        f_{11} & f_{12} & f_{13} & R_3 \\
        g_{21} & g_{22} & g_{23} & 0 \\
        h_{31} & h_{32} & 0 & 0 \\
        0 & 0 & 0 & 0
    \end{smallmatrix} \right]
\end{equation}
Using a symbolic calculator, we can verify that
\begin{align*}
h_{31} = & (P_2 Q_1 - P_1 Q_2) (-P_3 Q_2 + P_2 Q_3) (-\gamma^2 + \|P\| \|Q\|^2) = 0 \\
h_{32} = & (P_2 Q_1 - P_1 Q_2)^2 (-\gamma^2 + \|P\|^2 \|Q\|^2) = 0
\end{align*}
for all values if $\gamma$. Hence, the eigenmatrix becomes
\begin{equation} \label{eqn_eigen_mtx_toroidal_reduced}
I' = \left[ \begin{smallmatrix}
        f_{11} & f_{12} & f_{13} & R_3 \\
        g_{21} & g_{22} & g_{23} & 0 \\
        0 & 0 & 0 & 0 \\
        0 & 0 & 0 & 0
    \end{smallmatrix} \right] ,
\end{equation}
where the first and second rows must be independent because the nullspace is two-dimensional.
An eigenplane $Z(u, v)$ can now be calculated as
\begin{equation*}
Z = \left( \begin{smallmatrix}
        -R_3 g_{23} u \\
        -R_3 g_{23} v \\
        R_3 (g_{21} u + g_{22} v) \\
        (f_{11} g_{23} - f_{13} g_{21}) u + (f_{12} g_{23} - f_{13} g_{22}) v
    \end{smallmatrix} \right)
\end{equation*}
and it is possible to extract a common factor $R_3$ from the fourth coordinate, as follows.
\begin{equation*}
Z = R_3 \left( \begin{smallmatrix}
        -g_{23} u \\
        -g_{23} v \\
        g_{21} u + g_{22} v \\
        (f_{11} f_{23} - f_{12} f_{13}) u + (f_{12} f_{23} - f_{13} f_{22}) v
    \end{smallmatrix} \right)
\end{equation*}
By using a symbolic calculator, we find the following.
\begin{align} \label{eqn_eigen_mtx_toroidal_gs}
g_{21} = - P_3 P_2 \|Q\|^2 + Q_3 Q_2 \|P\|^2 - \gamma (P_2 Q_3 - P_3 Q_2) \\
g_{22} = - P_1 P_2 \|Q\|^2 + Q_1 Q_2 \|P\|^2 - \gamma (P_2 Q_1 - P_1 Q_2) \notag \\
g_{23} = - P_2 P_2 \|Q\|^2 + Q_2 Q_2 \|P\|^2 - \gamma (P_2 Q_2 - P_2 Q_2) \notag
\end{align}
In this case, we note that
\begin{equation*}
\gamma = \alpha \|P\| \|Q\| ,
\end{equation*}
and thus \eqref{eqn_eigen_mtx_toroidal_gs} simplifies to
\begin{align} \label{eqn_eigen_mtx_toroidal_gs_simp}
g_{21} = \|P\|^2 \|Q\|^2 \left(- \hat{P}_3 \hat{P}_2 + \hat{Q}_3 \hat{Q}_2 - \alpha (\hat{P}_2 \hat{Q}_3 - \hat{P}_3 \hat{Q}_2) \right) \\
g_{22} = \|P\|^2 \|Q\|^2 \left(- \hat{P}_1 \hat{P}_2 + \hat{Q}_1 \hat{Q}_2 - \alpha (\hat{P}_2 \hat{Q}_1 - \hat{P}_1 \hat{Q}_2) \right) \notag \\
g_{23} = \|P\|^2 \|Q\|^2 \left(- \hat{P}_2 \hat{P}_2 + \hat{Q}_2 \hat{Q}_2 - \alpha (\hat{P}_2 \hat{Q}_2 - \hat{P}_2 \hat{Q}_2) \right) \notag.
\end{align}
If $R_3 \|P\|^2 \|Q\|^2 \ne 0$, the eigenplane can be scaled down to $Z^{(2)}$
\begin{align*}
Z^{(2)} = & \left( \begin{smallmatrix}
        \hat{P}_2 \hat{P}_2 - \hat{Q}_2 \hat{Q}_2 - \alpha (\hat{P}_2 \hat{Q}_2 - \hat{P}_2 \hat{Q}_2) \\
        0 \\
        - \hat{P}_3 \hat{P}_2 + \hat{Q}_3 \hat{Q}_2 + \alpha (\hat{P}_3 \hat{Q}_2 - \hat{P}_2 \hat{Q}_3) \\
        - \hat{P}_1 \hat{P}_2 - \hat{Q}_1 \hat{Q}_2 + \alpha (\hat{P}_1 \hat{Q}_2 + \hat{P}_2 \hat{Q}_1)
    \end{smallmatrix} \right) u
    + \left( \begin{smallmatrix}
        0 \\
        \hat{P}_2 \hat{P}_2 - \hat{Q}_2 \hat{Q}_2 - \alpha (\hat{P}_2 \hat{Q}_2 - \hat{P}_2 \hat{Q}_2) \\
        - \hat{P}_1 \hat{P}_2 + \hat{Q}_1 \hat{Q}_2 + \alpha (\hat{P}_1 \hat{Q}_2 - \hat{P}_2 \hat{Q}_1) \\
        \hat{P}_3 \hat{P}_2 + \hat{Q}_3 \hat{Q}_2 - \alpha (\hat{P}_3 \hat{Q}_2 + \hat{P}_2 \hat{Q}_3)
    \end{smallmatrix} \right) v
\end{align*}
We estimate that an analogous formula should exist that relates the two remaining coordinates. Using a symbolic calculator, the following two formulae can be verified. Formulae
\begin{align*}
Z^{(1)} = & \left( \begin{smallmatrix}
        \hat{P}_2 \hat{P}_1 - \hat{Q}_2 \hat{Q}_1 - \alpha (\hat{P}_2 \hat{Q}_1 - \hat{P}_1 \hat{Q}_2) \\
        0 \\
        - \hat{P}_3 \hat{P}_1 + \hat{Q}_3 \hat{Q}_1 + \alpha (\hat{P}_3 \hat{Q}_1 - \hat{P}_1 \hat{Q}_3) \\
        - \hat{P}_1 \hat{P}_1 - \hat{Q}_1 \hat{Q}_1 + \alpha (\hat{P}_1 \hat{Q}_1 + \hat{P}_1 \hat{Q}_1)
    \end{smallmatrix} \right) u
    + \left( \begin{smallmatrix}
        0 \\
        \hat{P}_2 \hat{P}_1 - \hat{Q}_2 \hat{Q}_1 - \alpha (\hat{P}_2 \hat{Q}_1 - \hat{P}_1 \hat{Q}_2) \\
        - \hat{P}_1 \hat{P}_1 + \hat{Q}_1 \hat{Q}_1 + \alpha (\hat{P}_1 \hat{Q}_1 - \hat{P}_1 \hat{Q}_1) \\
        \hat{P}_3 \hat{P}_1 + \hat{Q}_3 \hat{Q}_1 - \alpha (\hat{P}_3 \hat{Q}_1 + \hat{P}_1 \hat{Q}_3)
    \end{smallmatrix} \right) v
\end{align*}
and
\begin{align*}
Z^{(3)} = & \left( \begin{smallmatrix}
        \hat{P}_2 \hat{P}_3 - \hat{Q}_2 \hat{Q}_3 - \alpha (\hat{P}_2 \hat{Q}_3 - \hat{P}_3 \hat{Q}_2) \\
        0 \\
        - \hat{P}_3 \hat{P}_3 + \hat{Q}_3 \hat{Q}_3 + \alpha (\hat{P}_3 \hat{Q}_3 - \hat{P}_3 \hat{Q}_3) \\
        - \hat{P}_1 \hat{P}_3 - \hat{Q}_1 \hat{Q}_3 + \alpha (\hat{P}_1 \hat{Q}_3 + \hat{P}_3 \hat{Q}_1)
    \end{smallmatrix} \right) u
    + \left( \begin{smallmatrix}
        0 \\
        \hat{P}_2 \hat{P}_3 - \hat{Q}_2 \hat{Q}_3 - \alpha (\hat{P}_2 \hat{Q}_3 - \hat{P}_3 \hat{Q}_2) \\
        - \hat{P}_1 \hat{P}_3 + \hat{Q}_1 \hat{Q}_3 + \alpha (\hat{P}_1 \hat{Q}_3 - \hat{P}_3 \hat{Q}_1) \\
        \hat{P}_3 \hat{P}_3 + \hat{Q}_3 \hat{Q}_3 - \alpha (\hat{P}_3 \hat{Q}_3 + \hat{P}_3 \hat{Q}_3)
    \end{smallmatrix} \right) v
\end{align*}
also define a valid eigenplane. For each of the three formulae, parameters exist where each of the eigenplanes zeroes. If we suppose that none of the three formulae spans a valid eigenplane, then the first two coordinates never form a linearly independent base and the following equations hold
\begin{align*}
\begin{cases}
& \hat{P}_2 \hat{P}_1 - \hat{Q}_2 \hat{Q}_1 - \alpha (\hat{P}_2 \hat{Q}_1 - \hat{P}_1 \hat{Q}_2) = 0 \\
& \hat{P}_2 \hat{P}_2 - \hat{Q}_2 \hat{Q}_2 - \alpha (\hat{P}_2 \hat{Q}_2 - \hat{P}_2 \hat{Q}_2) = 0 \\
& \hat{P}_2 \hat{P}_3 - \hat{Q}_2 \hat{Q}_3 - \alpha (\hat{P}_2 \hat{Q}_3 - \hat{P}_3 \hat{Q}_2) = 0
\end{cases} ,
\end{align*}
which are equivalent to
\begin{equation} \label{eqn_eigen_mtx_toroidal_spec}
\hat{P} \, (\hat{P}_2 + \alpha \hat{Q}_2) = \hat{Q} \, (\hat{Q}_2 + \alpha \hat{P}_2) \iff (\hat{P} - \alpha \hat{Q}) (\hat{P}_2 + \alpha \hat{Q}_2) = 0.
\end{equation}
Now, we consider the special cases where $\hat{P} = \alpha \hat{Q}$ and $\hat{P}_2 = -\alpha \hat{Q}_2$. In the first case, eigenmatrix \eqref{eqn_eigen_mtx_general} becomes
\begin{equation*}
2 \alpha \|P\|^2 \left[
    \begin{smallmatrix}
        \hat{Q}_3 \hat{Q}_3 & \hat{Q}_3 \hat{Q}_1 & \hat{Q}_3 \hat{Q}_2 & 0 \\
        \hat{Q}_1 \hat{Q}_3 & \hat{Q}_1 \hat{Q}_1 & \hat{Q}_1 \hat{Q}_2 & 0 \\
        \hat{Q}_2 \hat{Q}_3 & \hat{Q}_2 \hat{Q}_1 & \hat{Q}_2 \hat{Q}_2 & 0 \\
        0 & 0 & 0 & 1
    \end{smallmatrix}
\right]
\end{equation*}
 and there are two eigenplanes, where the first is spanned by the vectors
\begin{align*}
\left( \begin{smallmatrix}
    0 \\
    0 \\
    0 \\
    1
\end{smallmatrix} \right) u +
\left( \begin{smallmatrix}
    \hat{Q}_3 \\
    \hat{Q}_1 \\
    \hat{Q}_2 \\
    0
\end{smallmatrix} \right) v
\end{align*}
and the second is spanned by the vectors
\begin{align*}
\left( \begin{smallmatrix}
    -\hat{Q}_2 \\
    0 \\
    \hat{Q}_3 \\
    0
\end{smallmatrix} \right) u +
\left( \begin{smallmatrix}
    -\hat{Q}_1 \\
    \hat{Q}_3 \\
    0 \\
    0
\end{smallmatrix} \right) v.
\end{align*}
We have considered all the special cases so the proof is complete.
\end{proof}


\bibliographystyle{elsarticle-num}
\bibliography{references}

\begin{thebibliography}{10}
\expandafter\ifx\csname url\endcsname\relax
  \def\url#1{\texttt{#1}}\fi
\expandafter\ifx\csname urlprefix\endcsname\relax\def\urlprefix{URL }\fi
\expandafter\ifx\csname href\endcsname\relax
  \def\href#1#2{#2} \def\path#1{#1}\fi

\bibitem{agarwal1999motion}
P.~Agarwal, B.~Aronov, M.~Sharir, {Motion planning for a convex polygon in a
  polygonal environment}, Discrete \& Computational Geometry 22~(2) (1999)
  201--221.

\bibitem{journals/dcg/LevenS87a}
D.~Leven, M.~Sharir, {On the Number of Critical Free Contacts of a Convex
  Polygonal Object Moving in Two-Dimensional Polygonal Space.}, Discrete \&
  Computational Geometry 2 (1987) 255--270.

\bibitem{LevenSharirEfficientSimpleMotion}
D.~Leven, M.~Sharir, {An efficient and simple motion planning algorithm for a
  ladder moving in two-dimensional space amidst polygonal barriers (extended
  abstract)}, in: {Proceedings of the first annual symposium on Computational
  geometry}, {SCG '85}, ACM, 1985, pp. 221--227.

\bibitem{journals/siamcomp/HalperinOS92}
D.~Halperin, M.~H. Overmars, M.~Sharir, {Efficient Motion Planning for an
  L-Shaped Object.}, SIAM J. Comput. 21~(1) (1992) 1--23.

\bibitem{Guibas1989}
L.~J. Guibas, M.~Sharir, S.~Sifrony, {On the general motion-planning problem
  with two degrees of freedom}, Discrete \& Computational Geometry 4~(5) (1989)
  491--521.

\bibitem{journals/dcg/HalperinS96}
D.~Halperin, M.~Sharir, {A Near-Quadratic Algorithm for Planning the Motion of
  a Polygon in a Polygonal Environment.}, Discrete \& Computational Geometry
  16~(2) (1996) 121--134.

\bibitem{Varadhan3D}
G.~Varadhan, S.~Krishnan, T.~V. Sriram, D.~Manocha, {A Simple Algorithm for
  Complete Motion Planning of Translating Polyhedral Robots}, The International
  Journal of Robotics Research 24~(11) (2005) 983--995.

\bibitem{AsanoHN02}
T.~Asano, A.~Hernández{-}Barrera, S.~C. Nandy, {Translating a convex
  polyhedron over monotone polyhedra}, Comput. Geom. 23~(3) (2002) 257--269.

\bibitem{WeibelPhD}
C.~Weibel, {Minkowski Sums of Polytopes: Combinatorics and Computation}, Ph.D.
  thesis, Ecole Polytechnique Fédérale de Lausanne (2007).

\bibitem{avnaim1988practical}
F.~Avnaim, J.~Boissonnat, B.~Faverjon, {A practical exact motion planning
  algorithm for polygonal objects amidst polygonal obstacles}, in: {Robotics
  and Automation, 1988. Proceedings., 1988 IEEE International Conference on},
  IEEE, 1988, pp. 1656--1661.

\bibitem{koltun2005pianos}
V.~Koltun, {Pianos are not flat: Rigid motion planning in three dimensions},
  in: {Proceedings of the sixteenth annual ACM-SIAM symposium on Discrete
  algorithms}, Society for Industrial and Applied Mathematics, 2005, pp.
  505--514.

\bibitem{Can93}
J.~Canny, {Computing roadmaps of general semi-algebraic sets}, The Computer
  Journal 36~(5) (1993) 504--514.

\bibitem{Col75}
G.~E. Collins, {Quantifier Elimination for Real Closed Fields by Cylindrical
  Algebraic Decomposition}, in: {Proceedings Second GI Conference on Automata
  Theory and Formal Languages}, Springer-Verlag, 1975, pp. 134--183, lecture
  Notes in Computer Science, 33.

\bibitem{phd_dobrowolski}
P.~Dobrowolski, {Evaluation of the usefulness of exact methods to motion
  planning in configuration space}, Ph.D. thesis, Warsaw University of
  Technology (2014).

\bibitem{libcs2}
P.~Dobrowolski, \href{http://www.github.com/pdobrowo/libcs2}{{libcs2}}.
\newline\urlprefix\url{http://www.github.com/pdobrowo/libcs2}

\bibitem{journals/jsc/HemmerDPS11}
M.~Hemmer, L.~Dupont, S.~Petitjean, E.~Schomer, {A complete, exact and
  efficient implementation for computing the edge-adjacency graph of an
  arrangement of quadrics.}, J. Symb. Comput. 46~(4) (2011) 467--494.

\end{thebibliography}

\end{document}